\newtheorem{theorem}{Theorem}
\numberwithin{theorem}{section}
\newtheorem{lemma}[theorem]{Lemma}
\newtheorem{proposition}[theorem]{Proposition}
\newtheorem{corollary}[theorem]{Corollary}
\newtheorem{definition}[theorem]{Definition}
\newtheorem*{thm}{Theorem}
\theoremstyle{remark}
\newtheorem*{unremark}{Remark}
\newtheorem{example}[theorem]{Example}
\newcommand{\Z}{\mathbb{Z}}
\newcommand{\R}{\mathbb{R}}
\newcommand{\RP}{\mathbb{RP}}
\newcommand{\CP}{\mathbb{CP}}
\newcommand{\C}{\mathbb{C}}
\newcommand{\G}{\mathcal{G}}
\newcommand{\A}{{\cal A}}
\newcommand{\be}{\begin{equation}}
\newcommand{\ee}{\end{equation}}
\newcommand{\old}[1]{}
\newcommand{\Em}[1]{\textbf{#1}}
\def\noproof{\hfill \Box}
\def\init{{\cal I}}
\def\lat{{\cal L}}
\def\flabel{{\Z^3_{1/2}}}
\def\ising{{\Upsilon}}
\begin{document}
\title{Double-dimers, the Ising model and the hexahedron recurrence}
\author{R. Kenyon and R. Pemantle}
\date{}
\maketitle

\begin{abstract}
We define and study a recurrence relation in $\Z^3$, called 
the hexahedron recurrence, which is similar to the octahedron 
recurrence (Hirota bilinear difference equation) and cube recurrence 
(Miwa equation).  Like these examples, solutions to the 
hexahedron recurrence are partition sums for 
edge configurations on a certain graph, and have a natural 
interpretation in terms of cluster algebras.  We give an 
explicit correspondence between monomials in the Laurent 
expansions arising in the recurrence with certain double-dimer 
configurations of a graph.  We compute limit shapes for the 
corresponding double-dimer configurations. 

The Kashaev difference equation arising in the Ising model 
star-triangle relation is a special case of the hexahedron 
recurrence.  In particular this reveals the cluster nature 
underlying the Ising model. The above relation allows us to 
prove a Laurent phenomenon for the Kashaev difference equation. 
\end{abstract}

\section{Introduction}

\subsection{Overview}

A function $a:\Z^3\to\C$ is said to satisfy
the \emph{octahedron recurrence} or \emph{Hirota bilinear difference equation}  if at all points $v\in\Z^3$
\be\label{eq:oct}
a_{(1)}a_{(23)}=a_{(2)}a_{(13)}+a_{(3)}a_{(12)}\ee
(here $a_{(S)}\equiv a_{v+(S)}$ represents $a$ evaluated at the translate of $v$ by the basis vectors in $S$, e.g. $a_{(12)}$ represents $a_{v+e_1+e_2}$). 
The octahedron recurrence was coined by Propp (see \cite{Speyer}) but first appeared in 
Dodgson~\cite{Dodg1866} as a means of recursively computing 
determinants: up to an affine change of indices (and some sign changes) it is the recurrence
satisfied by determinants of contiguous submatrices. In this setting it is known
as \emph{Dodgson condensation}.

The octahedron recurrence is fundamental in combinatorics, statistical mechanics,
cluster algebras, and integrable systems, see e.g. \cite{KNS}.

There are several similar recurrences. 
The most well-known is the \emph{cube recurrence} or \emph{Miwa equation}.
A function $g:\Z^3\to\C$ is said to satisfy the Miwa equation or cube recurrence if
\be\label{eq:cube}
g_{(123)}g=g_{(1)}g_{(23)}+g_{(2)}g_{(13)}+g_{(3)}g_{(12)}.\ee
This recurrence also has its roots in the $19^{th}$ century: Kennelly~\cite{Kenn1899}
discovered the so-called star-triangle identity (wye-delta transformation) for resistor networks.
Under a certain change of variables (see \cite{FZ, GK}) this transformation can be written as a cube recurrence.

In \cite{FZ}, see also \cite{GK}, it was noticed that the cube recurrence is a specialization 
of a more fundamental recurrence, which we call the \emph{cuboctahedron recurrence}.
This is a recurrence on a function on the edges of the cubes of the $\Z^3$ lattice. If certain monomial equations are
satisfied then the cuboctahedron recurrence reduces to the cube recurrence. Like the octahedron recurrence,
the cuboctahedron recurrence arises from cluster algebras, and in fact is a composition of cluster algebra mutations
on a certain planar graph.
This leads to a cluster algebra interpretation of the cube recurrence, and in particular allows one to prove a Laurent phenomenon
as shown in \cite{FZ}.

A less-well known recurrence is the \emph{Kashaev recurrence} \cite{Kash}. 
A function $f:\Z^3\to\C$ is said to satisfy the Kashaev recurrence if
\begin{eqnarray}\label{KashaevEQ}
f^2f_{(123)}^2+f_{(1)}^2f_{(23)}^2+f_{(2)}^2f_{(13)}^2+f_{(3)}^2f_{(12)}^2
   -2f_{(1)}f_{(2)}f_{(23)}f_{(13)} - 2f_{(1)}f_{(3)}f_{(23)}f_{(12)}-2f_{(3)}f_{(2)}f_{(12)}f_{(13)} \\\nonumber
   - 2ff_{(123)}(f_{(1)}f_{(23)} +f_{(2)}f_{(13)}+f_{(3)}f_{(12)}) 
   - 4ff_{(23)}f_{(13)}f_{(12)}-4f_{(123)}f_{(1)}f_{(2)}f_{(3)} 
&=&0. \nonumber
\end{eqnarray}
This recurrence arises in the star-triangle move (Yang-Baxter equation) for the Ising model.

Our main goal in this paper is to define another recurrence, generalizing the Kashaev recurrence, called
the \emph{hexahedron recurrence}\footnote{The hexahedron is another name for the cube,
but emphasizing the fact that it has $6$ faces. We chose this nomenclature since our variables sit on both the vertices
and faces of a cube.}.
This is a relation for a function defined on the faces and vertices of the $\Z^3$ cubic tiling.
Given four functions 
$$h,h^{(x)},h^{(y)},h^{(z)}\colon\Z^3\to\C$$
(where we think of $h$ as being the function on the vertices of the cubes, 
$h^{(x)}_v$ as being the value of the function on the ``yz"-face with vertices $\{v,v+e_2,v+e_2+e_3,v+e_3\}$, and similarly
for $h_v^{(y)}$ and $h_v^{(z)}$ on the ``xz" and ``xy" faces respectively)
we say they satisfy the 
hexahedron recurrence if the following equations are satisfied for all $v\in\Z^3$.

{\small \begin{eqnarray}\label{hh1}
h^{(x)}_{(1)}h^{(x)}h&=&h^{(x)}h^{(y)}h^{(z)}+h_{(1)}h_{(2)}h_{(3)}+hh_{(1)}h_{(23)}\\
\label{hh2}
h^{(y)}_{(2)}h^{(y)}h&=&h^{(x)}h^{(y)}h^{(z)}+h_{(1)}h_{(2)}h_{(3)}+hh_{(2)}h_{(13)}\\
\label{hh3}
h^{(z)}_{(3)}h^{(z)}h&=&h^{(x)}h^{(y)}h^{(z)}+h_{(1)}h_{(2)}h_{(3)}+hh_{(3)}h_{(12)}\\
h_{(123)}h^2h^{(x)}h^{(y)}h^{(z)}&=&
(h^{(x)}h^{(y)}h^{(z)})^2+h^{(x)}h^{(y)}h^{(z)}(2h_{(1)}h_{(2)}h_{(3)}+\nonumber
hh_{(1)}h_{(23)}+hh_{(2)}h_{(13)}+hh_{(3)}h_{(12)})+\\&&\label{hh4}
+(h_{(1)}h_{(2)}+hh_{(12)})(
h_{(1)}h_{(3)}+hh_{(13)})(h_{(2)}h_{(3)}+hh_{(23)}).
\end{eqnarray}}

Here again $h_{(1)}= h_{v+e_1}$ and so on.
In Section~\ref{Ising} below we will show that the Kashaev recurrence 
is a special case of the 
hexahedron recurrence. 
Understanding the cluster algebra structure of the 
Y-Delta transformation for the Ising model was our initial motivation for defining the
hexahedron recurrence.

The octahedron, cuboctahedron and hexahedron recurrences have an underlying cluster algebra structure, based on the local transformation (mutation)
called \emph{urban renewal}, see Figure \ref{Aurban}. For example the octahedron recurrence can be thought of
as a ``periodic" urban renewal step on the square grid graph; the cuboctahedron recurrence can be decomposed into
a product of $4$ periodic urban renewal steps, see \cite{FZ}.
The hexahedron recurrence can be decomposed into a product of $6$ periodic urban renewals. 
In particular this allows us to write $h_{i,j,k},h^{(x)}_{i,j,k},h^{(y)}_{i,j,k},h^{(z)}_{i,j,k}$ for $i+j+k$ large as Laurent polynomials
in the initial values $\{h_{i,j,k}\}_{0\le i+j+k\le 2}$ and $\{h^{(x)}_{i,j,k},h^{(y)}_{i,j,k},h^{(z)}_{i,j,k}\}_{i+j+k=0}$.

Speyer \cite{Speyer} and Carroll and Speyer \cite{CS} respectively gave combinatorial
interpretations of the terms in the Laurent expansions arising in the octahedron and cube recurrences. 
One of our main results is an analogous result for the hexahedron recurrence: we give an 
explicit bijection between the monomials of the Laurent polynomials which arise and certain
``taut" double-dimer covers of a sequence of planar graphs $\Gamma_n$. 

Following the method of Peterson and Speyer \cite{PS} we also prove a limit shape theorem for random taut 
double-dimer covers of $\Gamma_n$
with an arctic boundary which is an algebraic curve defined from the characteristic polynomial of the
recurrence relation. 

The integrable nature of these systems will not be discussed here (although it is easy to show that they
satisfy a standard multidimensional consistency). The hexahedron recurrence is 
a special case of a dimer integrable system, and integrable properties of such systems are discussed in \cite{GK}. 

\subsection{Results}
We may picture
the values of $h^{(x)}$, $h^{(y)}$ and $h^{(z)}$ as each sitting 
in the middle of a square face of the integer lattice.  Letting 
$$\flabel := \{ (x,y,z) \in (1/2) \Z^3 : x + y + z \in \Z \}$$
we then interpret $h^{(x)} , h^{(y)}, h^{(z)}$ as extending
$h$ to $\flabel$ via $h^{(x)} (i,j,k) = h(i,j+1/2,k+1/2)$,
$h^{(y)} (i,j,k) = h(i+1/2,j,k+1/2)$, and $h^{(z)} (i,j,k) 
= h(i+1/2,j+1/2,k)$.  

Note that the 
equations (\ref{hh1}-\ref{hh4}) are not only homogeneous, but are 1-homogeneous:
the sum of all indices of each monomial is $3$, e.g., the
first monomial is the product of three variables with 
respective indices $(1,1/2,1/2)$, $(0,1/2,1/2)$ and $(0,0,0)$.
Also, given the values of $h$ on seven of the corners and their 
three included faces of a cube, the values on the eighth corner 
and the three remaining faces are determined as rational functions 
of these; the locations of the new values are precisely those
obtained if one increases a pile by a single cube.

The cluster nature of the hexahedron recurrence 
immediately implies:
\begin{thm}
The values $\{ h(v) : v \in \flabel, v_1 + v_2 + v_3 \geq 0 \}$
are Laurent polynomials in the values of $h(v)$ such that 
$v$ is an integer vector and $0 \leq v_1 + v_2 + v_3 \leq 2$ or
$v$ is a half-integer vector and $v_1 + v_2 + v_3 = 1$.
\end{thm}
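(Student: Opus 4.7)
The plan is to deduce the theorem from the standard Laurent phenomenon for cluster algebras, using the decomposition of the hexahedron recurrence into urban renewal mutations that was asserted just above the statement. So the proof will really be a bookkeeping exercise to justify that assertion rigorously enough to apply Fomin--Zelevinsky's Laurent phenomenon.

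First I would introduce a planar bipartite graph $\Gamma_0$ (a suitable ``slab'' in $\Z^3$) whose faces are labelled so that the initial variables $\{h(v) : v \in \flabel,\; 0 \le v_1+v_2+v_3 \le 2\}$ appear as the face weights of a dimer cluster-algebra seed in the sense of urban renewal; equivalently, I would set up the quiver associated to $\Gamma_0$ and declare these variables to be the initial cluster. The values permitted by the hypothesis -- integers with $v_1+v_2+v_3 \in \{0,1,2\}$ together with half-integers with $v_1+v_2+v_3=1$ -- are exactly those needed to specify $h$, $h^{(x)}$, $h^{(y)}$, $h^{(z)}$ on the two-layer slab from which equations (\ref{hh1})--(\ref{hh4}) permit one to ``grow a cube'' in the sense indicated in the paragraph preceding the theorem.

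Second, I would verify that the four new values $h_{(123)}$, $h^{(x)}_{(1)}$, $h^{(y)}_{(2)}$, $h^{(z)}_{(3)}$ produced at each cube growth step are obtained from the current cluster by a specific sequence of six urban renewal (cluster) mutations at designated faces of $\Gamma_0$. The check here is direct: carry out those six mutations symbolically and compare the resulting rational expressions with the four formulas derivable from (\ref{hh1})--(\ref{hh4}), picking up the three intermediate auxiliary variables along the way. This is the step that carries real content, and it is the main obstacle: one must find the right ordering of the six mutations and check that each intermediate expression is precisely what urban renewal outputs for the current face weights, so that the composition matches (\ref{hh4}) after the three triples of equations (\ref{hh1})--(\ref{hh3}) have been used to eliminate $h^{(x)}_{(1)}$, $h^{(y)}_{(2)}$, $h^{(z)}_{(3)}$.

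Having established this for a single cube, an induction on $v_1+v_2+v_3$ (growing the pile one cube at a time through a standard shelling of the octant) expresses every $h(v)$ with $v_1+v_2+v_3 \geq 0$ as a cluster variable obtained from the initial seed by a finite sequence of mutations. The Laurent phenomenon of Fomin--Zelevinsky then yields immediately that each such $h(v)$ is a Laurent polynomial in the initial cluster variables, which is the statement of the theorem.
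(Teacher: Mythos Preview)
Your proposal is correct and matches the paper's own argument essentially step for step: the paper defines the associated graph $\Gamma(U)$ on a stepped surface (the 4-6-12 graph for the initial slab), shows in Lemma~\ref{sur} and Figure~\ref{decompose} that one hexahedron step (``superurban renewal'') factors as six urban renewals, invokes Proposition~\ref{pr:urban} (urban renewal is a cluster mutation, hence Laurent) to get the Corollary on Laurentness, and then in Proposition~\ref{pr:super} identifies adding a cube with superurban renewal and with the hexahedron recurrence~\eqref{hh1}--\eqref{hh4}. Your induction on $v_1+v_2+v_3$ via a shelling of the octant is exactly the paper's ``adding a cube'' picture in Section~\ref{sec:4-6-12}.
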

Our main result, Theorem~\ref{maincomb} below, gives a combinatorial
interpretation of the terms of these Laurent polynomials.  

\subsection{Statistical mechanical interpretations}

The octahedron recurrence may be used to express $a_v$ as a Laurent
polynomial in the values $a_w$ as $w$ ranges over variables in an
initial graph.  When the initial graph is the grid $\Z^2$ this Laurent polynomial is a 
generating function for a certain statistical mechanical ensemble: 
its monomials are in bijection with perfect matchings of the 
\Em{Aztec diamond} graph, associated with the region in the initial  
graph lying in the shadow of $v$; see, e.g.,~\cite{Speyer}.
Setting the initial indeterminates
all equal to one allows us to count perfect matchings; in general
the indeterminates represent multiplicative weights, which we may 
change in certain natural ways to study further properties of the
ensemble of perfect matchings.  These ensembles, well studied on
the square lattice, exist on many other periodic bipartite planar graphs.

The cube recurrence~\eqref{eq:cube} also has a combinatorial 
interpretation.  Its monomials are in bijection with \Em{cube groves}.  
These were first defined and studied in~\cite{CS,PS} where they were called simply ``groves".
In a cube grove, each edge of a large triangular region in the 
planar triangular lattice is either present or absent.  The
allowed configurations are those in which the edge subsets contain no cycles
and no islands (thus they are \Em{essential spanning forests}), 
and the connectivity of boundary points has a prescribed form.  

Both
Aztec diamond matchings and cube groves have limiting
shapes.  Specifically, as the size of the box goes to infinity,
there is a boundary, which is an algebraic curve, outside
of which there is no entropy (the system is periodic almost surely)
and inside of which there is
positive entropy per site (each type of local configuration occurs with positive probability).  
In the case of the Aztec diamond
and cube grove, the algebraic curve is the inscribed circle,
but for related ensembles, much more general algebraic curves are
obtained; for example in the \Em{fortress} tiling model
shown on the right of Figure~\ref{fig:fortress}, the bounding
curve is a degree-8 algebraic curve\footnote{Dubbed the ``octic circle''
by the fun-loving pioneers of this subject.}, see \cite{KO}.
\begin{figure}[htbp]
\centering
\includegraphics[scale=0.3, trim = 1cm 6cm 1cm 2cm, clip=true]{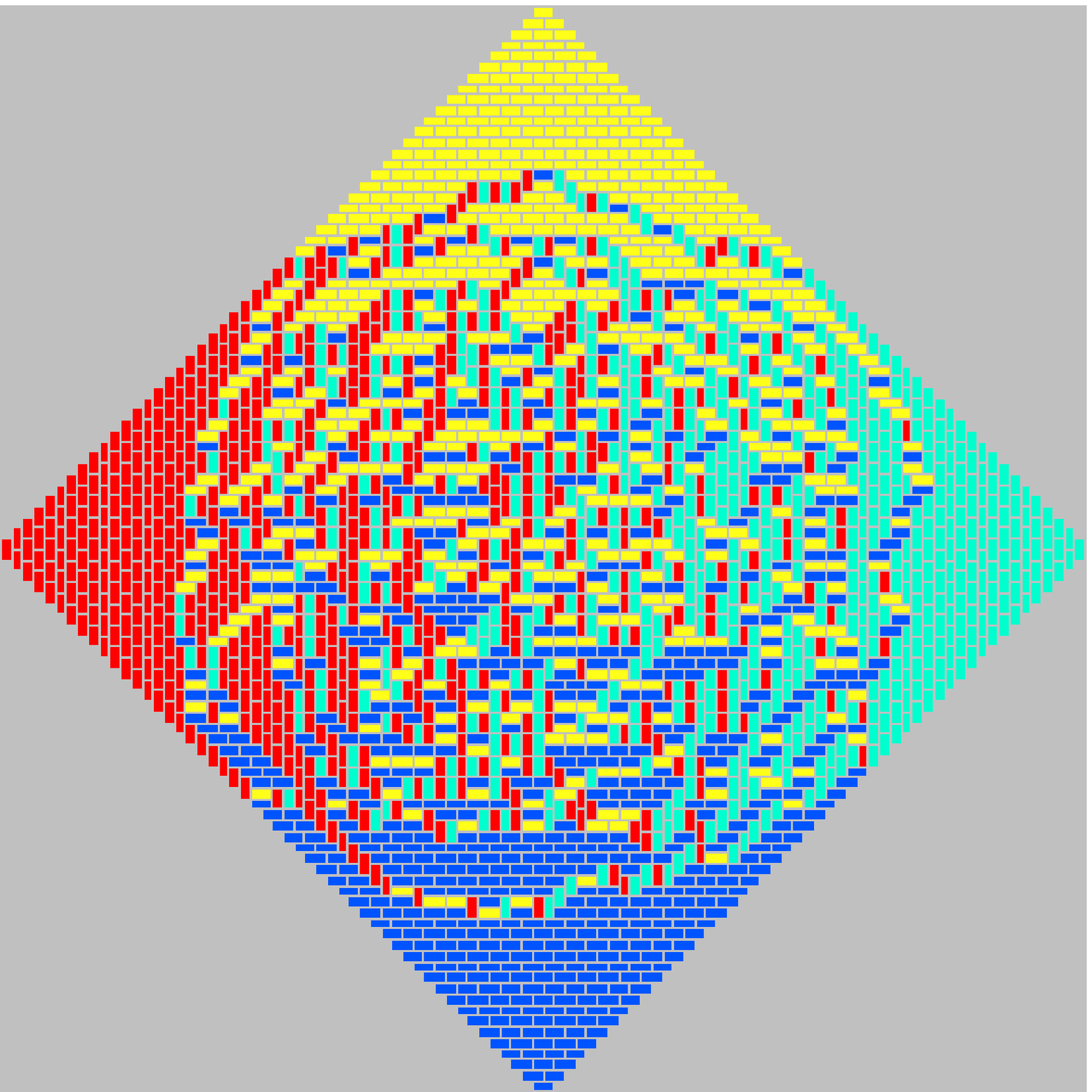}
\includegraphics[scale=0.38]{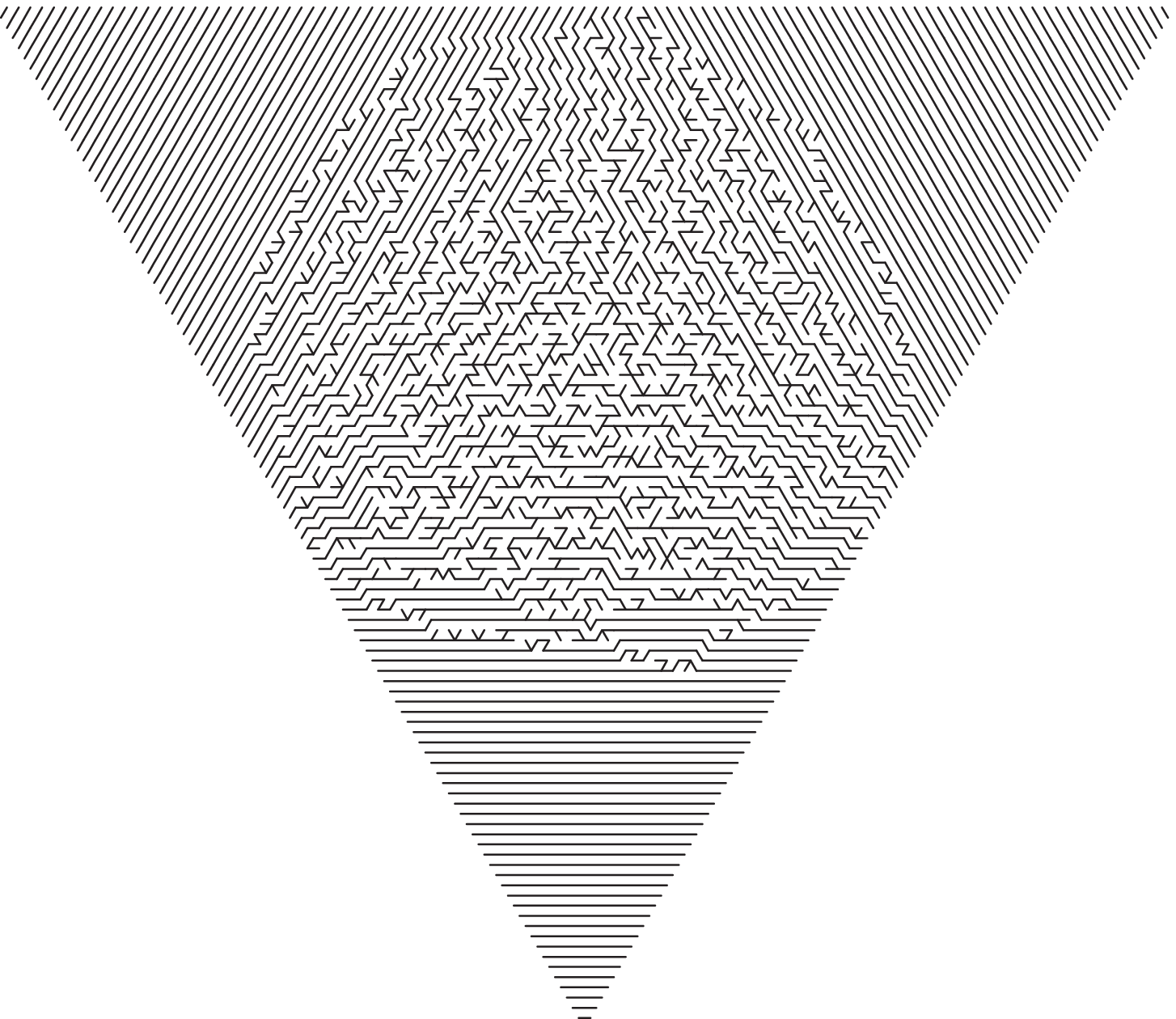}
\includegraphics[scale=0.25]{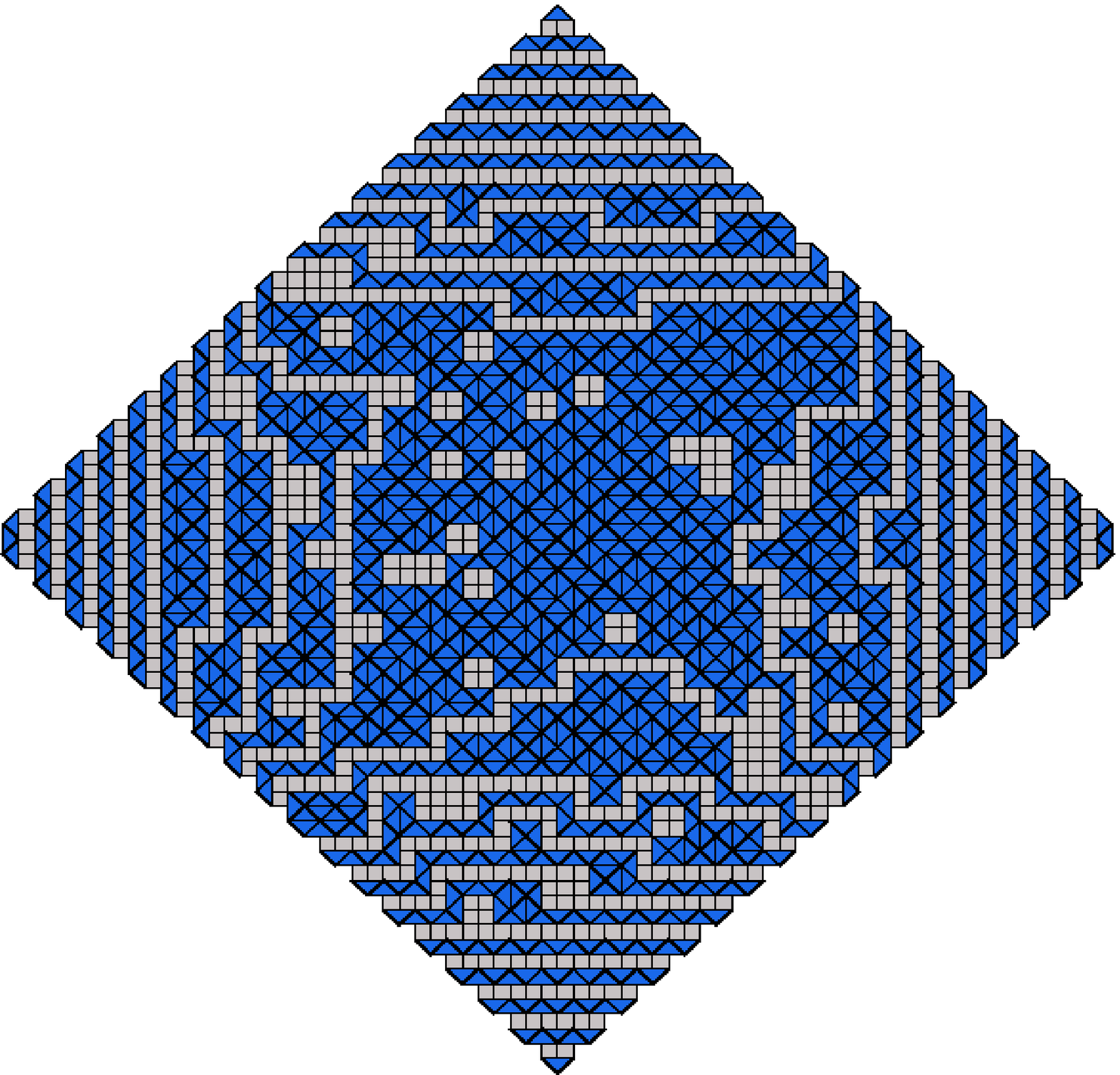}
\caption{An Aztec tiling, a cube grove and a fortress tiling}
\label{fig:fortress}
\end{figure}

The hexahedron recurrence also has a statistical mechanical
interpretation.  In Section~\ref{sec:dimers} we define the 
\Em{double-dimer} model on a finite bipartite graph.  Terms in the hexahedron recurrence
count certain types of double-dimer configurations called \emph{taut} configurations. A randomly 
generated taut double-dimer configuration is shown in Figure~\ref{fig:dd}.

\begin{figure}[htbp]
\centering
\includegraphics[scale=0.6]{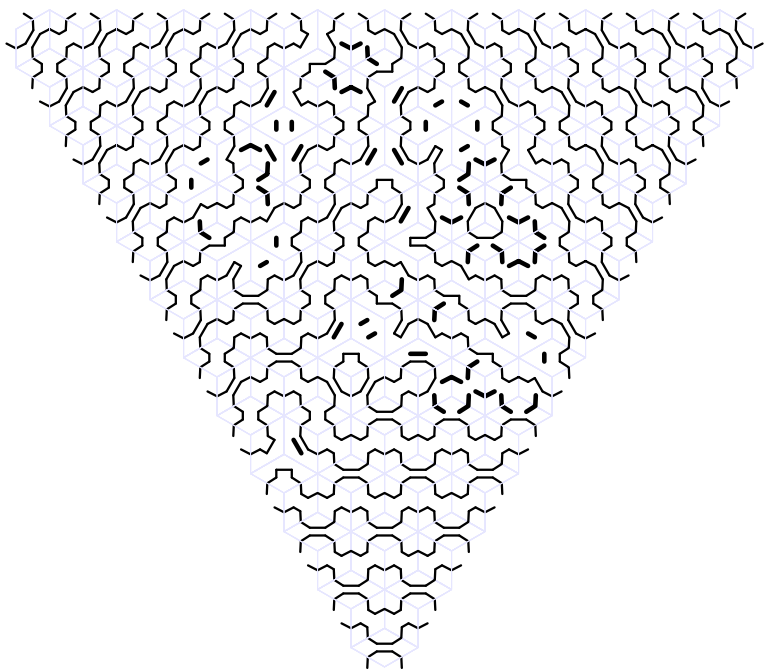}
\caption{A random taut double-dimer configuration}
\label{fig:dd}
\end{figure}

In Section~\ref{sec:main} we prove the following theorem.
\begin{thm}
The monomials of the Laurent polynomial $h_{iii}$ are in bijection
with taut double-dimer configurations of the graph $\Gamma_{3i}$. 
\end{thm}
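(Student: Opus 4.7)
The plan is to mirror the strategy used by Speyer \cite{Speyer} for the octahedron recurrence and by Carroll--Speyer \cite{CS} for the cube recurrence: build the family of graphs $\Gamma_n$ by a sequence of local mutations, each of which implements one step of the hexahedron recurrence at the level of the double-dimer partition function, and then read off the bijection by tracking how monomials propagate through these mutations.

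First I would build a base graph $\Gamma_0$ from the initial data. Each initial variable, namely the values of $h$ on integer points with $0\le v_1+v_2+v_3\le 2$ and the face variables $h^{(x)},h^{(y)},h^{(z)}$ on half-integer points with $v_1+v_2+v_3=1$, should be assigned an edge (or a small cluster of edges) of $\Gamma_0$ carrying the corresponding weight, in such a way that its taut double-dimer partition function, computed with respect to suitable boundary connectivity, reduces to the correct trivial Laurent expansion. Starting from $\Gamma_0$ I would then describe a local ``cube gadget'' $C_v$ that can be attached in the shadow of a site $v\in\flabel$ with $v_1+v_2+v_3\ge 3$; attaching these gadgets in an order consistent with the stepping rule of the recurrence produces $\Gamma_n$, and the new variables introduced by the attachment are exactly those determined by equations (\ref{hh1})--(\ref{hh4}) applied at $v$.

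The central technical step is a local lemma: attaching $C_v$ corresponds, at the level of graphs with weighted edges, to the six urban renewal mutations into which the hexahedron recurrence factors. Each urban renewal provides a weight-preserving bijection between taut double-dimer configurations on the old and new graphs (the double-dimer version of Kuo-style urban renewal), and composing the six mutations in order reproduces the rational expressions on the right-hand sides of (\ref{hh1})--(\ref{hh4}). This step is largely mechanical once the gadget is drawn correctly, but it must be set up so that the ``taut'' boundary connectivity is preserved throughout, so that intermediate face variables land in the right places, and so that no spurious monomials appear or cancel during the composition.

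With the local lemma in hand, the theorem follows by induction on $n$: each urban renewal yields a weight-preserving bijection between monomials of the Laurent expansion and taut double-dimer configurations of the intermediate graph, and after $3i$ cube-addition steps one obtains $\Gamma_{3i}$ with partition function equal to $h_{iii}$. I expect the main obstacle to be the local lemma itself, specifically verifying that the double-dimer analog of urban renewal really does implement each of (\ref{hh1})--(\ref{hh4}) and that the boundary connectivity constraints on the gadgets glue correctly into the global taut condition on $\Gamma_{3i}$. A secondary issue is choosing a total order on the sites of $\flabel$ in the positive octant so that the cube gadgets can be attached consistently; this is the analog of the ``diamond building'' rule for the Aztec diamond and should be dictated by the combinatorics of lattice piles implicit in the recurrence.
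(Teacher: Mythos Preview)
Your proposal is essentially the paper's own approach: induct by adding cubes to a stepped surface, realize each cube addition as a ``superurban renewal'' that factors into six ordinary urban renewals, and verify case by case that each urban renewal carries taut double-dimer configurations (with their weight monomials) bijectively across. The paper's concrete realization assigns the variables to \emph{faces} of the graph $\Gamma(U)$ associated to a stepped surface (with edge weights $\nu(e)=1/(A_fA_g)$), rather than to edges as you suggest, and the local lemma is exactly the case analysis you anticipate---checking for each possible boundary-connection pattern at a quadrilateral face that the weighted sum of double-dimer states is preserved under urban renewal.
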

The remainder of the paper is spent investigating the properties
of the hexahedron double-dimer ensemble.  In Section~\ref{ddlimitshape} we analyze 
the limiting shape under several natural, periodic specifications
of the initial variables.  In Section~\ref{Ising} we find a
specialization of the initial variables under which the hexahedron recurrence
becomes the Ising $Y$-$\Delta$ transformation.
\old{, which
is a transformation of the Ising model changing the interaction
strengths in an algebraically different way from how they change under
the resistor-network $Y$-$\Delta$ transformation, but changing the
graph in exactly the same way.  }

\bigskip
\noindent{\bf Acknowledgements.} We would like to thank Philippe Di Francesco, Sergei Fomin,
David Speyer and Dylan Thurston for helpful discussions.

\section{Dimer model} \label{sec:dimers}

\subsection{Definitions}

We will use the following graph theoretic terminology.  A
\Em{bipartite} graph is a graph $\Gamma = (E,V)$ together
with a fixed coloring of the vertices into two colors (black and white)
such that edges connect only vertices of different colors.
By \Em{planar} graph we mean one with a distinguished
planar embedding. The \emph{dual graph} $\Gamma_*$ 
to a planar bipartite graph $\Gamma$ has a vertex $f_*$ for
each face $f$ of $\Gamma$ and edge $e_*$ connecting $f_*$ 
and $g_*$ where $f$ and $g$ are the two faces containing $e$.

Let $\Gamma$ be a finite bipartite graph with positive edge weights 
$\nu : E \to \R_+$.  A \Em{dimer cover} or \Em{perfect matching} 
is a collection of edges with the property that every vertex 
is an endpoint of exactly one edge.  The ``dimers'' of a dimer
cover are the chosen edges (terminology suggesting a collection
of bi-atomic molecules packed into the graph).  We let 
$\Omega_d (\Gamma)$ be the set of dimer covers and we define 
the probability measure $\mu_d$ on $\Omega_{d}$ giving a dimer cover 
$m \in \Omega_d$ a probability proportional to the product of its 
edge weights.

A \Em{double-dimer configuration} is a union of two dimer covers: 
it is a covering of the graph with loops and doubled edges.
The \Em{double dimer measure} $\mu_{dd}$ is the probability measure 
defined by taking the union of two $\mu_d$-independent dimer covers.

\subsection{Edge variables, face variables, and cluster variables}

\subsubsection{Gauge transformations}

A gauge transformation consists in multiplying the edge weights 
of edges incident to a vertex $v$ by a constant.  This leaves 
$\mu_d$ unchanged since every dimer cover using exactly one edge 
incident to $v$.

The gauge transformations form a group isomorphic to $\R_+^{|V|-1}$ 
(multiplying edges at all white vertices by $\lambda$ and edges at 
all black vertices by $\lambda^{-1}$ acts trivially) and the quotient 
space of edge weights modulo gauge transformations is isomorphic to 
$\R_+^c$, $c$ being the dimension of the cycle space of $\Gamma$.  
For a planar graph (meaning a specific planar embedding has been chosen),
whose set of faces is denoted $F$, it is natural to coordinatize 
the space of edge weights modulo gauge with variables
$\{X_f\}_{f\in F}$; here for a face $f$, $X_f$ is the alternating 
product of the edge weights around that face: orient all edges 
from white to black, then take the product over edges $e$ in the 
face of $\nu (e)^{\delta (e)}$ where $\delta = 1$ if the edge $e$ 
points counterclockwise with respect to the planar embedding and 
$\delta (e) = -1$ if the edge $e$ points clockwise. These $X_f$-variables are 
called \emph{cross-ratio variables}\footnote{Historically we use notation 
$X$ for these variables although from the point of view of cluster algebras
these are ``coefficient variables" and are represented with $Y$s.}. They have no relations (we do not assign
a variable to the outer face).

We also will need a third set of variables $\{A_f\}_{f\in F}$ called \emph{cluster variables}, 
also living on faces of $\Gamma$. The relationship with the edge variables is as follows.
Given $A : F \to \C$, define $\nu_A : E \to \C$ by
\begin{equation} \label{eq:A-e}
\nu(e) = \frac{1}{A_f A_g} \, ,
\end{equation}
where $f$ and $g$ are the two faces containing $e$.  It may 
be checked that for any face $f \in F$, 
$$X_f = \prod_{e \in F} A_f^{\delta(e,f)},$$ where $\delta(e,f)=\pm1$ according to whether the dual edge $e^*f^*$
sees a white vertex on the right or left. 
We summarize this in a commuting diagram.  

\setlength{\unitlength}{1.5pt}
\begin{picture}(220,120)(0,-20)
\put(10,60){$\{ A \}$}
\put(90,60){$\{ e \}$}
\put(50,00){$\{ X \}$}
\put(40,60){\vector(1,0){40}}
\put(30,50){\vector(1,-2){18}}
\put(90,50){\vector(-1,-2){18}}
\put(40,75){$\displaystyle{\frac{1}{A(f) A(g)}}$}
\put(20,34){alt.} 
\put(20,26){prod.} 
\put(90,34){alt.} 
\put(90,26){prod.} 
\put(70,30){$\pi$}
\put(45,31){$\rho$}
\put(150,45){$\pi$ is surjective,} 
\put(150,35){kernel of $\pi$ is the gauge functions}
\put(150,15){$\rho$ is not surjective}
\end{picture}

Not all configurations of $X$ variables are in the range of $\rho$.
The quotient is multiplicatively generated by functions 
which are $1$ on one side of a zigzag path and $\lambda$ on
the other side.  For the purposes of this paper we will not consider more general $X$ variables other
than those in the range of $\rho$.

\subsection{Urban renewal}  \label{ss:renewal}

Certain local rearrangements of a bipartite graph $\Gamma$ preserve the dimer measure 
$\mu_d$, see \cite{KPW}.  These are: parallel edge reduction, vertex contraction/splitting,
and urban renewal.

The simplest is \emph{parallel edge reduction}. If there are two parallel edges of $\Gamma$ (edges with the same endpoints)
with weights $a$ and $b$ then we can replace these with a single edge of weight $a+b$. Clearly there is a coupling
of the dimer measure of the ``before" graph and ``after" graph.

\emph{Vertex contraction} is described as follows. Given 
a vertex $v$ of degree $2$, with equal weights on its two edges, one can contract 
its two edges, erasing $v$ and merging its two neighbors into 
one vertex.  The faces of the new graph are in bijection with
the faces of the old graph, the only difference being that
two faces have each lost two consecutive equally weighted edges.
The
dimer measure $\mu$ on the original weighted graph $\Gamma$ 
may be coupled to the new dimer measure $\mu'$ on the new 
contracted graph $\Gamma'$ as follows.  To sample from $\mu'$, sample 
a matching $m$ from $\mu$ and then delete whichever edge 
of $m$ contains $v$; the new set of dimers, $m'$ will be
a perfect matching on $\Gamma'$ and it is obvious that
the law of $m'$ is $\mu'$.  The inverse of this contraction
operation, splitting a vertex in two and adding a vertex of 
degree $2$ between them (with equal edge weights on the two 
edges) is called \emph{vertex splitting}: a vertex $w$ is split into
two vertices $w_1$ and $w_2$, each incident to a proper
subset of the vertices formerly incident to $w$ (this subset
being an interval in the cyclic order induced by the 
planar embedding); a new vertex $v$ is introduced whose
neighbors are $w_1$ and $w_2$.  The new planar embedding
is obvious. To sample from the new measure $\mu'$, sample
from $\mu$ and then add either $v w_1$
or $v w_2$ depending on which vertex $w_1,w_2$ is not matched.

The more interesting local rearrangement is called \Em{urban renewal}. 
It involves taking a quadrilateral face, call it~0, and adding 
``legs".  This is shown in Figure \ref{Aurban}, ignoring for the
moment the specific values $a_0, \ldots , a_4$ shown for the 
pre-weights $A(0) , \ldots , A(4)$.  
\begin{figure}[htbp]
\center{\includegraphics[width=3in]{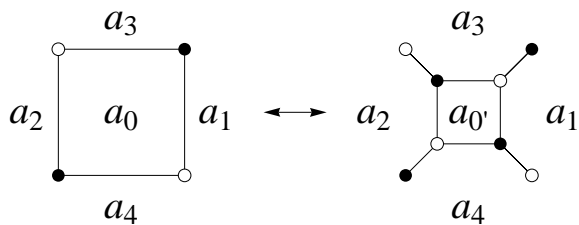}}
\caption{\label{Aurban}Under urban renewal the central variable $a_0$ 
changes from $a_0$ to $a_{0'}=\frac{a_1a_2+a_3a_4}{a_0}$.}
\end{figure}
Let us designate the faces around 
face~0 by the numbers 1, 3, 2 and 4.  Each of these faces gains two 
new edges.  In the new graph $\Gamma'$, there are faces $1',
2', 3'$ and $4'$ each with two more edges than the corresponding
face $1, 2, 3, 4$.  There is a face $0'$ which is also square.
Each other face $f$ of $\Gamma$ corresponds to a face $f'$
of $\Gamma'$ with the same number of edges as $f$.  There are
four new neighboring relations among faces: $1', 2', 3'$ and $4'$
are neighbors in cyclic order, in addition to any neighboring
relations that may have held before.  The point of urban renewal
is to give a corresponding adjustment of the weights that
preserves $\mu_d$.  This is most easily done in terms of
the $A$ variables.  The following proposition was proved
in~\cite{Speyer}.
\begin{proposition}[urban renewal]
Suppose~0 is a quadrilateral face of $\Gamma$.  Let $\Gamma'$
be constructed from $\Gamma$ as above.  Define the new pre-weight
function $A: F' \to \C$ by $A(f') = A(f)$ if $f \neq 0$ and 
$$A(0') := \frac{ A(1) A(2) + A(3) A(4)}{A(0)} \, .$$
Let $\mu'$ denote the dimer measure on $\Gamma'$ with edge weights 
$\nu_{A'}$ and $\mu$ the dimer measure on $\Gamma$ with face weights 
$\nu_{A}$.  Then $\mu$ and $\mu'$ may be coupled so that the
sample pair $(m , m')$ agrees on every edge other than the
four edges bounding face~0 in $\Gamma$ and the eight edges
touching face~$0'$ in $\Gamma'$.  
$\noproof$
\end{proposition}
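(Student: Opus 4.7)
The plan is to construct the coupling by cutting both graphs along the boundary of the urban renewal region and verifying that the external marginals of $\mu$ and $\mu'$ coincide. Let the local region in $\Gamma$ consist of the four corners $v_1,v_2,v_3,v_4$ of face $0$ together with the four edges of $\partial 0$; in $\Gamma'$ the local region adds the four new corners $v_1',v_2',v_3',v_4'$, the four legs $v_iv_i'$, and the four edges bounding face $0'$. Outside this region the two graphs are identical, and every external edge carries the same weight $\nu_A(e)=1/(A_fA_g)$ in both, since only the value of $A$ at face $0$ changes and no external edge borders face $0$. Thus every perfect matching decomposes as $m=m_{\mathrm{ext}}\cup m_{\mathrm{int}}$ with common external weight $w(m_{\mathrm{ext}})$ in the two models.

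I would then index configurations by $T\subseteq\{v_1,v_2,v_3,v_4\}$, the subset of corners covered by $m_{\mathrm{ext}}$. In $\Gamma$ the corners in $T^c$ must be paired by edges of face $0$, so $T^c$ must admit a perfect matching on the $4$-cycle $\partial 0$; this restricts $T$ to be empty, an adjacent pair, or the full set. In $\Gamma'$ the corners in $T^c$ are matched by their legs, which in turn forces the new corners $v_i'$ with $i\in T$ to be paired by edges of face $0'$; so $T$ itself must admit a perfect matching on $\partial 0'$, giving the same list of admissible $T$. For each admissible $T$ a direct calculation, using $\nu_A(e)=1/(A_fA_g)$ and the defining identity $A(0)A(0')=A(1)A(2)+A(3)A(4)$, shows that the internal weight sums satisfy
\[
\frac{W_{\Gamma'}(T)}{W_\Gamma(T)}\;=\;C\;:=\;\frac{A(0)}{A(0')A(1)A(2)A(3)A(4)},
\]
a constant independent of $T$. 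The cases $T=\emptyset$ and $T=\{v_1,\dots,v_4\}$ each involve the sum of two opposite-pair matchings of a $4$-cycle and so produce the numerator $A(1)A(2)+A(3)A(4)=A(0)A(0')$; the four adjacent-pair cases are identical up to the obvious fourfold symmetry and reduce to a short algebraic check.

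With these ingredients the coupling is immediate. Since the external weight agrees in the two models and $W_{\Gamma'}(T)=C\,W_\Gamma(T)$ for every $T$ (both sides vanish when $T$ is inadmissible), the partition functions satisfy $Z(\Gamma')=C\,Z(\Gamma)$ and the induced marginals of the pair $(m_{\mathrm{ext}},T)$ under $\mu$ and $\mu'$ coincide. Sampling $(m_{\mathrm{ext}},T)$ once from this common law and then, conditionally on it, sampling the two internal configurations independently under $\mu(\cdot\mid m_{\mathrm{ext}})$ and $\mu'(\cdot\mid m_{\mathrm{ext}})$, produces a pair $(m,m')$ agreeing on every edge outside the local region, which is exactly the conclusion. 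The main obstacle is the finite but nontrivial case check in the second step: one must fix the cyclic convention that labels $1,2$ as opposite faces around face $0$ and $3,4$ as the other opposite pair, transport this convention to face $0'$, and then verify the six admissible cases all yield the same constant $C$.
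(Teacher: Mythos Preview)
Your argument is correct and is essentially the standard proof of urban renewal: decompose into external and internal pieces, index by the boundary data $T\subseteq\{v_1,\dots,v_4\}$, and check that the internal partition functions differ by a constant factor using $A(0)A(0')=A(1)A(2)+A(3)A(4)$. Your identification of the admissible $T$'s (empty, adjacent pair, full) and the common ratio $C=A(0)/(A(0')A(1)A(2)A(3)A(4))$ is right; the $T=\emptyset$ and $T=\{v_1,\dots,v_4\}$ cases each produce the crucial numerator $A(1)A(2)+A(3)A(4)$ exactly as you say, and the four adjacent-pair cases go through by the same bookkeeping once you fix the cyclic labeling $1,3,2,4$ of the outer faces (so that $1,2$ are opposite and $3,4$ are opposite, matching the formula).

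As for comparison with the paper: there is nothing to compare. The paper does not supply its own proof of this proposition; it marks it with a $\Box$ and attributes it to Speyer~\cite{Speyer}. Your write-up is precisely the argument one finds there (and in Propp's earlier treatments), so you have reconstructed the intended proof.

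One small stylistic point worth tightening in a final version: when you say ``the four adjacent-pair cases are identical up to the obvious fourfold symmetry,'' note that the symmetry is only a dihedral symmetry of the picture, not of the weights, since $A(1),A(2),A(3),A(4)$ are distinct. It would be cleaner to do one adjacent-pair case explicitly (one edge of $\partial 0$ in $\Gamma$ versus two legs and one edge of $\partial 0'$ in $\Gamma'$) and then observe that the other three follow by the same computation with permuted labels.
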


There are several equivalent versions of urban renewal, all of
which are related to each other by vertex contraction and splitting.
Two of these are depicted in Figure \ref{Aurban2}.
\begin{figure}[htbp]
\center{\includegraphics[width=2in]{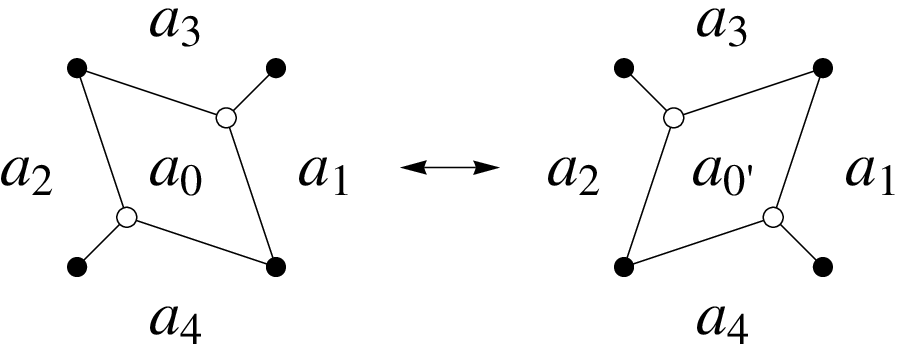}\hskip1in\includegraphics[width=2in]{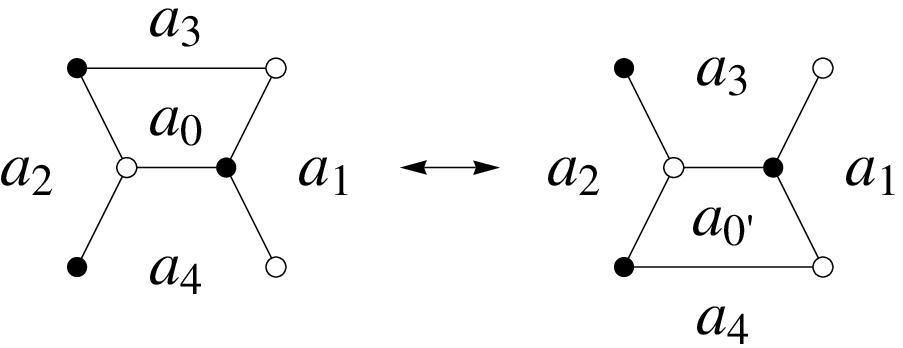}}
\caption{\label{Aurban2}Other variants of urban renewal.}
\end{figure}

\begin{proposition}[\cite{GK}] \label{pr:urban}
The transformation of $(\Gamma , A)$ to $(\Gamma', A')$ under
urban renewal is a mutation operation.  Consequently, the final variables after any number of
urban renewals are Laurent polynomials in the original variables
$\{ A_f : f \in F \}$.
\end{proposition}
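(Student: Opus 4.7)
My plan is to identify urban renewal with a seed mutation in an appropriate cluster algebra, whereupon the Fomin--Zelevinsky Laurent phenomenon immediately yields the Laurent polynomial conclusion. The construction requires a quiver $Q(\Gamma)$ whose vertices are the bounded faces of $\Gamma$ and whose arrows come from the standard bipartite recipe: around each black vertex $b$, draw an arrow $f \to g$ for each pair $(f,g)$ of consecutive faces in the clockwise cyclic order at $b$, and around each white vertex draw arrows counterclockwise, cancelling any $2$-cycles produced. The cluster variables of the initial seed are $\{A_f\}_{f \in F}$, with boundary faces declared frozen; this is compatible with the edge-weight prescription $\nu(e) = 1/(A_f A_g)$.

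Next, I would verify that seed mutation at the vertex corresponding to the quadrilateral face $0$ reproduces the urban renewal formula. Because $\Gamma$ is bipartite, the four corners of $0$ alternate in color; at each black corner one of the two adjacent faces supplies an arrow into $0$ while the other receives an arrow out of $0$, and the two white corners do the symmetric thing. After labeling the two faces with arrows into $0$ by $1,2$ and the two faces with arrows out of $0$ by $3,4$, the cluster exchange relation
\[
A(0) \, A(0') \;=\; \prod_{v \to 0} A(v) \;+\; \prod_{0 \to w} A(w)
\]
becomes exactly $A(0) A(0') = A(1) A(2) + A(3) A(4)$, matching the urban renewal formula.

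The combinatorial heart of the argument, and the step I expect to be the main obstacle, is to show that $\mu_0 Q(\Gamma) = Q(\Gamma')$, where $\Gamma'$ is the urban-renewed graph. One must check that the quiver mutation rule (reverse arrows at $0$; for every directed path $u \to 0 \to v$ adjoin an arrow $u \to v$; cancel resulting $2$-cycles) reproduces, face by face, the change in the face-adjacency quiver produced by adding the four legs to face $0$. Only the faces $0, 1, 2, 3, 4$ have modified neighborhoods, so the verification is local, and the two alternative versions of urban renewal depicted in Figure~\ref{Aurban2} are obtained from the primary version by vertex contraction and splitting, which preserve the seed by the discussion in Section~\ref{ss:renewal}.

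With urban renewal now identified as a cluster algebra seed mutation, the second sentence of the proposition is immediate: any composition of urban renewals is a composition of seed mutations, so the Laurent phenomenon of Fomin--Zelevinsky guarantees that each resulting cluster variable is a Laurent polynomial in the initial variables $\{A_f : f \in F\}$.
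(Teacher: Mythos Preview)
Your approach is the one the paper intends: the paper does not give a proof at all, citing instead~\cite{GK} and adding only the remark that ``the underlying quiver is the dual graph, with edges directed so that white vertices are on the left.'' Identifying urban renewal with mutation at the quadrilateral face and then invoking the Fomin--Zelevinsky Laurent phenomenon is exactly the argument.

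One point to correct in your quiver construction. You draw a cycle of arrows around \emph{every} vertex of $\Gamma$, clockwise at black and counterclockwise at white, and then cancel $2$-cycles. But for an edge $e=\{b,w\}$ separating faces $L$ and $R$, the clockwise cycle at $b$ and the counterclockwise cycle at $w$ each contribute an arrow across $e$, and a short check shows these two arrows point the \emph{same} way (both $L\to R$, say). They therefore do not form a $2$-cycle; your quiver has every arrow doubled, and the exchange relation at face~$0$ would become $A(0)A(0')=A(1)^2A(2)^2+A(3)^2A(4)^2$, which is not the urban-renewal formula. The remedy is precisely the simpler recipe the paper states: take one arrow per edge of $\Gamma$, namely the dual edge directed so that the white endpoint of the primal edge is on its left (equivalently, draw the face-cycle around vertices of only \emph{one} colour). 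With that quiver the quadrilateral face~$0$ has exactly two arrows in and two out, from opposite neighbours, and the exchange relation is $A(0)A(0')=A(1)A(2)+A(3)A(4)$ as required. The rest of your outline---the local verification that $\mu_0 Q(\Gamma)=Q(\Gamma')$ and the appeal to the Laurent phenomenon---then goes through.
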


For readers familiar with cluster algebras,
the underlying quiver is the dual graph, with edges directed so that white vertices are on the left.

\subsection{Superurban renewal transformation for the dimer model}
A more complicated local transformation is the \emph{superurban renewal} 
shown in Figure \ref{superurban}.  Figure \ref{decompose} shows how this 
can be decomposed into a sequence of six urban renewals.
\begin{figure}[htbp]
\center{\includegraphics[width=3in]{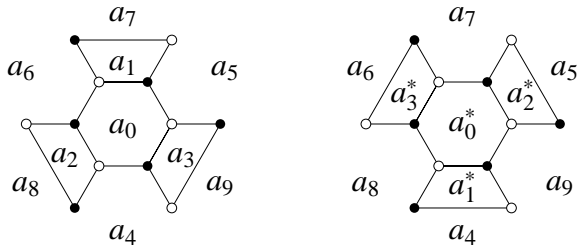}}
\caption{\label{superurban}Under superurban renewal the variables $a_0,a_1,a_2,a_3$ change 
as indicated in (\ref{surec1})-(\ref{superurbanrec}).}
\end{figure}
\begin{figure}[htbp]
\center{\includegraphics[width=6in]{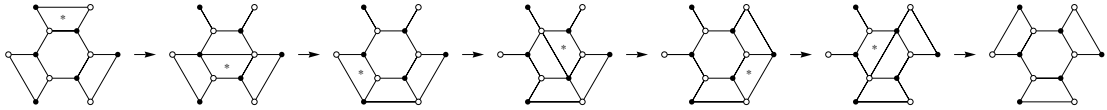}}
\caption{\label{decompose}Writing a superurban renewal as a composition of urban renewals. The stars
indicate which face undergoes urban renewal}
\end{figure}
Just as urban renewal is the basis for the octahedron recurrence,
we will see that superurban renewal is the basis for the hexahedron 
recurrence (see Section~\ref{sec:4-6-12}).  In section \ref{Ising} 
we show how superurban renewal specializes to the Y-Delta 
transformation for the Ising model.

\begin{lemma}\label{sur} Under a superurban renewal the $A$ variables 
transform as in Figure \ref{superurban}, with 
\begin{eqnarray}a_1^* & = & \frac{a_1a_2a_3+a_4a_5a_6+a_0a_4a_7}{a_0a_1}
\   \label{surec1}\\[1ex]
a_2^* & = & \frac{a_1a_2a_3+a_4a_5a_6+a_0a_5a_8}{a_0a_2}
   \label{surec2}\\[1ex]
a_3^* & = & \frac{a_1a_2a_3+a_4a_5a_6+a_0a_6a_9}{a_0a_3}
   \label{surec3}\\[1ex]
a_0^* & = & \frac{a_1^2a_2^2a_3^2 + a_1a_2a_3(2a_4a_5a_6 + a_0a_4a_7 
   + a_0a_5a_8+a_0a_6a_9) + (a_5a_6 + a_0a_7)(a_4 a_5 + a_0 a_9)
   (a_4 a_6 + a_0a_8)}{a_0^2a_1a_2a_3} \, .
   \label{superurbanrec}
\end{eqnarray}
\end{lemma}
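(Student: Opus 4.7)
The plan is to invoke Proposition~\ref{pr:urban} six times, following the decomposition of the superurban renewal into six urban renewals depicted in Figure~\ref{decompose}. Since each urban renewal mutates only the central face's $A$-variable by $A(0') = (A(1)A(2)+A(3)A(4))/A(0)$ and leaves the other $A$-variables untouched, the total effect on $(a_0,a_1,\ldots,a_9)$ is a composition of six explicit rational substitutions, and the content of the lemma is that this composition evaluates to the formulas (\ref{surec1})--(\ref{superurbanrec}).

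First, I would label the intermediate graphs $\Gamma = \Gamma^{(0)}, \Gamma^{(1)}, \ldots, \Gamma^{(6)} = \Gamma'$ and read off from the starred faces in Figure~\ref{decompose} which quadrilateral is mutated at each step together with the identities of its four neighbors. The decomposition is interspersed with vertex contractions and splittings, which leave $A$-values unchanged and only adjust combinatorics, so they can be ignored for the algebra. At each urban renewal the mutating face and its four neighbors carry labels that are either original $a_i$'s or quantities introduced at earlier steps; performing the substitutions in order yields rational expressions for $a_0^*, a_1^*, a_2^*, a_3^*$ in terms of $a_0,\ldots,a_9$.

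Second, I would exploit symmetry to shorten the verification. The superurban renewal picture has an evident $\Z/3$ symmetry cyclically permuting the triples $(a_1,a_4,a_7) \to (a_2,a_5,a_8) \to (a_3,a_6,a_9)$ while fixing $a_0$. The chosen decomposition may not respect this symmetry step by step, but the final answer must, so it suffices to verify (\ref{surec1}) for $a_1^*$, from which (\ref{surec2})--(\ref{surec3}) follow, and then to verify (\ref{superurbanrec}) for $a_0^*$.

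The main obstacle is the simplification of $a_0^*$: after six sequential mutations the raw expression is a ratio of high-degree polynomials, and recognizing the numerator as the sum in (\ref{superurbanrec})---in particular the manifestly-factored term $(a_5a_6+a_0a_7)(a_4a_5+a_0a_9)(a_4a_6+a_0a_8)$---requires careful bookkeeping. A hint for guiding the algebra is that each of the three binomials $a_5a_6+a_0a_7$, $a_4a_5+a_0a_9$, $a_4a_6+a_0a_8$ arises naturally as the mutated value of one of the ``outer'' faces after the first two or three urban renewals in the decomposition, so these quantities should appear already as intermediate subexpressions in the computation. Tracking them explicitly, rather than fully expanding at each stage, is likely the cleanest route; as a fallback the identity can be verified by direct polynomial expansion in a computer algebra system, which is unambiguous given the finite list of six explicit substitutions.
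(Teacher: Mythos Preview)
Your proposal is correct and follows exactly the paper's approach: the paper's proof is simply ``A computation using the composition of Figure~\ref{decompose} and the urban renewal formula from Figure~\ref{Aurban}.'' Your added remarks about exploiting the $\Z/3$ symmetry and tracking the binomials $a_5a_6+a_0a_7$, etc., as intermediate mutated values are sensible practical guidance for carrying out that computation, but the underlying method is the same.
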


\begin{proof}
A computation using the composition of Figure \ref{decompose} and
the urban renewal formula from Figure \ref{Aurban}. 
\end{proof}
Iterating Proposition~\ref{pr:urban} proves the following result.
\begin{corollary}[Laurent property for superurban renewal]
Under iterated superurban renewal, all new variables are
Laurent polynomials in the original variables.
$\noproof$
\end{corollary}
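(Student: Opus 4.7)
The plan is to reduce the claim to an iterated application of Proposition~\ref{pr:urban}, using the decomposition supplied in Figure~\ref{decompose} and already exploited in the proof of Lemma~\ref{sur}. Nothing beyond this combination is really needed, so the main work is in setting up the bookkeeping cleanly.

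First I would observe that a single superurban renewal is, by construction, the composition of six urban renewals, each performed at the face marked with a star in the corresponding panel of Figure~\ref{decompose}. In particular, the new $A$-variables produced by a superurban renewal are a subset of the cluster variables obtained by performing these six specific mutations on the cluster algebra associated with the input quiver (the dual graph with edges oriented so that white vertices are on the left, as described just after Proposition~\ref{pr:urban}).

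Next I would iterate. Suppose superurban renewal is applied $n$ times in succession, producing a chain of graphs $(\Gamma^{(0)}, A^{(0)}), (\Gamma^{(1)}, A^{(1)}), \ldots, (\Gamma^{(n)}, A^{(n)})$. Each step is itself a composition of six urban renewals, so $(\Gamma^{(n)}, A^{(n)})$ is obtained from $(\Gamma^{(0)}, A^{(0)})$ by a sequence of $6n$ urban renewal moves. Since each urban renewal is a cluster mutation by Proposition~\ref{pr:urban}, the Laurent phenomenon guaranteed by that proposition applies to the full composition and every entry of $A^{(n)}$ is therefore a Laurent polynomial in the entries of $A^{(0)}$.

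The only real subtlety to attend to, and what I would expect to be the main (small) obstacle, is making sure the decomposition of Figure~\ref{decompose} is read coherently across successive iterations: the intermediate urban renewal steps introduce auxiliary faces and variables at the non-superurban intermediate graphs, and one must check that the seed in which the Laurent property is being asserted remains the $A$-variables of $\Gamma^{(0)}$ throughout. This is automatic once one notes that the auxiliary variables are themselves cluster variables of the same cluster algebra, so the Laurent property applies to them as well, and in particular to the ``visible'' variables of $\Gamma^{(k)}$ at every stage $k \le n$. With this observation in hand the corollary follows as an immediate composition of Lemma~\ref{sur} with Proposition~\ref{pr:urban}.
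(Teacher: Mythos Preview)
Your argument is correct and is exactly the paper's approach: the paper simply states that the corollary follows by iterating Proposition~\ref{pr:urban}, using the decomposition of a superurban renewal into six urban renewals from Figure~\ref{decompose}. Your additional bookkeeping about intermediate auxiliary variables is sound but more than the paper bothers to spell out.
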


\section{Picturing superurban renewal via stepped surfaces} 
\label{sec:4-6-12}

\subsection{Graphs associated with stepped surfaces}

A \Em{stepped solid} in $\R^3$ is a union $U$ of lattice cubes 
$[i,i+1] \times [j,j+1] \times [k,k+1]$ which is downwardly closed,
meaning that if a cube $B$ is in $U$ then so is any translation
of $B$ by negative values in any coordinate.  In particular,
all our stepped solids are infinite.
A \Em{stepped surface} is the topological boundary of a stepped solid.  
Every stepped surface is the union of lattice squares and every
lattice square has vertex set of the form $\{ v , v + e_i ,
v + e_j , v + e_i + e_j \}$ for some $v \in \Z^3$ and
some integers $1 \leq i < j \leq 3$.  For each stepped surface 
$\partial U$ its $1$-skeleton $\partial U_1$ is a planar graph. We associate to $U$ another graph, the \Em{associated graph} 
$\Gamma (U)$, obtained
by starting with the dual graph of $\partial U_1$ and replacing each
vertex by a small quadrilateral, as illustrated in Figure \ref{fig:4-6-12}.  

Figure~\ref{fig:4-6-12} shows the so-called \Em{4-6-12} graph,
which is the graph $\Gamma (U)$ when $U$ is the union of 
all cubes lying entirely within the region $\{ (x,y,z) : 
x + y + z \leq 2 \}$. 
For general stepped surfaces, the faces of $\Gamma (U)$ will be 
of two types: there is a quadrilateral face centered at the center 
of each face of $\partial U_1$ and there is a $(2k)$-gonal face 
centered at each vertex of $\partial U_1$, where $k$ is the 
number of faces of the surface $\partial U_1$ coming 
together at the vertex, each contributing one edge and two
half-edges (``legs'').  The \Em{label} of the face $f$ is 
the coordinates of the face center or vertex at which $f$
is centered.  All face labels are elements of the set 
$$\flabel := \{ (x,y,z) \in (1/2) \Z^3 : x + y + z \in \Z \} \, .$$
The \Em{canonical variables} are the labels of the 4-6-12 graph,
namely all elements of $\Z^3$ at levels~0, 1 or~2 and all elements
of $(1/2) \Z^3$ at level~1.
\begin{figure}[htbp]
\centering
\includegraphics[scale=0.6]{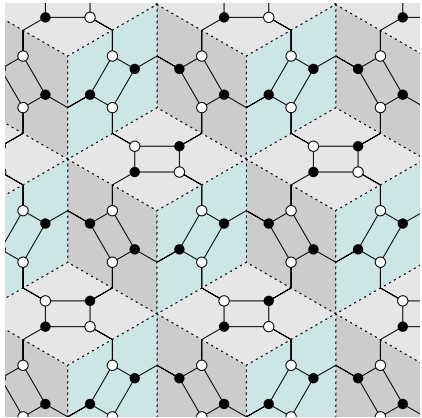}
\caption{The 4-6-12 graph is drawn on the stepped surface 
   $U$ bounding the union of cubes up to level~2}
\label{fig:4-6-12}
\end{figure}

\subsection{Superurban renewal for associated graphs}

Let $U$ be a stepped solid with stepped surface
$\partial U$ and associated graph $\Gamma (U)$.  Suppose that 
$(i,j,k)$ is a point of $\partial U$ which is a local minimum 
with respect to the height function $i+j+k$.  In other words, 
$(i-1,j,k)$, $(i,j-1,k)$ and $(i,j,k-1)$ are all in the interior 
of $U$.  Let $U^{+ijk}$ be the union of $U$ with the cube $[i,i+1]
\times [j,j+1] \times [k,k+1]$.  The following facts are easily 
verified by inspection.

\begin{proposition}[superurban renewal is adding a cube]
\label{pr:super}

Suppose $U$ and $(i,j,k)$ are as above.  
\begin{enumerate}[(i)]
\item The face in $\Gamma (U)$ corresponding to $(i,j,k)$ is a hexagon.
\item The graph $\Gamma (U^{+ijk})$ is obtained from the 
graph $\Gamma (U)$ by superurban renewal at this hexagon. 
\item The variables associated with each face of $\Gamma (U)$
transform under superurban renewal according to the hexahedron 
recurrence~\eqref{hh1}--\eqref{hh4}, provided we interpret 
\begin{eqnarray*}
h(i,j,k) & = & A(i,j,k) \\
h^{(x)} (i,j,k) & = & A(i, j+1/2, k+1/2) \\
h^{(y)} (i,j,k) & = & A(i+1/2, j, k+1/2) \\ 
h^{(z)} (i,j,k) & = & A(i+1/2, j+1/2, k) \, .
\end{eqnarray*}
\end{enumerate}
$\noproof$
\end{proposition}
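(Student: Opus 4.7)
The plan is to verify (i)--(iii) in order, each essentially by inspection of the local picture at the new cube, which is what the statement "easily verified by inspection" alludes to. Part (i) is immediate from the definition of $\Gamma(U)$: because $(i-1,j,k)$, $(i,j-1,k)$, $(i,j,k-1)$ are all interior, the three squares of $\partial U$ incident to $(i,j,k)$ are the ones with corner sets $\{(i,j,k),(i{+}1,j,k),(i{+}1,j{+}1,k),(i,j{+}1,k)\}$, $\{(i,j,k),(i{+}1,j,k),(i{+}1,j,k{+}1),(i,j,k{+}1)\}$, and $\{(i,j,k),(i,j{+}1,k),(i,j{+}1,k{+}1),(i,j,k{+}1)\}$. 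Thus exactly three faces of $\partial U_1$ meet at $(i,j,k)$, so the associated face of $\Gamma(U)$ centered at this vertex has $2\cdot 3=6$ sides.

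For (ii), I would trace the local change of $\partial U$ when the cube $[i,i{+}1]\times[j,j{+}1]\times[k,k{+}1]$ is added: the three upper squares listed above disappear and are replaced by the three opposite squares on the new cube, which now meet at the new local maximum vertex $(i{+}1,j{+}1,k{+}1)$. Every other face of $\partial U$ is unchanged, and correspondingly every face of $\Gamma(U)$ outside a bounded neighborhood of the central hexagon is untouched. Drawing the "before" and "after" pictures of $\Gamma$ and comparing them face-by-face with Figure~\ref{superurban} shows that the modification is exactly a superurban renewal performed at the central hexagon, with the new hexagon appearing at $(i{+}1,j{+}1,k{+}1)$ and the three new adjacent quadrilaterals appearing at the centers of the three new top squares of the added cube.

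Part (iii) then follows from Lemma~\ref{sur} once the labels are matched. I would identify the variables of Figure~\ref{superurban} with those of the hexahedron recurrence as follows: $a_0=h=A(i,j,k)$; the three quadrilaterals around the central hexagon carry $a_1,a_2,a_3=h^{(x)},h^{(y)},h^{(z)}$ sitting at $(i,j{+}1/2,k{+}1/2)$, $(i{+}1/2,j,k{+}1/2)$, $(i{+}1/2,j{+}1/2,k)$; the next ring $a_4,a_5,a_6$ is $h_{(1)},h_{(2)},h_{(3)}$; and $a_7,a_8,a_9$ is $h_{(23)},h_{(13)},h_{(12)}$. Under superurban renewal the starred variables become $a_1^*=h^{(x)}_{(1)}$, $a_2^*=h^{(y)}_{(2)}$, $a_3^*=h^{(z)}_{(3)}$ (the three new top face-variables of the added cube) and $a_0^*=h_{(123)}$ (the new local-maximum vertex-variable). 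Substituting these identifications into (\ref{surec1})--(\ref{surec3}) produces exactly (\ref{hh1})--(\ref{hh3}), and substituting into (\ref{superurbanrec}) produces exactly (\ref{hh4}).

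The only genuine obstacle is bookkeeping: one must fix a consistent cyclic labeling of the three adjacent quadrilaterals in Figure~\ref{superurban} so that the three-fold symmetry of the local picture at the cube's corner is respected and the outer labels $a_4,\ldots,a_9$ line up with the correct cube vertices $h_{(s)}$. Once a single such matching is checked (which can be done by tracing one of the three symmetric equations, say (\ref{hh1}), against (\ref{surec1}) after substitution), the remaining two cyclic rotations and the quartic equation are automatic.
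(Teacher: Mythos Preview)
Your proposal is correct and is exactly the ``inspection'' the paper has in mind: the paper gives no argument beyond the phrase ``easily verified by inspection'' and a $\Box$, so your explicit tracing of the local geometry for (i)--(ii) and your label-matching of $a_0,\dots,a_9,a_0^*,\dots,a_3^*$ against $h,h^{(x)},h^{(y)},h^{(z)},h_{(1)},\dots,h_{(123)}$ to recover (\ref{hh1})--(\ref{hh4}) from Lemma~\ref{sur} is precisely what is intended. The one check worth doing explicitly---that (\ref{surec1}) becomes (\ref{hh1}) under your identification---goes through verbatim, and the rest follows by the threefold symmetry you note.
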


\subsection{Cubic corner graph and taut dimer configurations}

We now know that adding a cube to a downwardly closed stepped solid
corresponds to superurban renewal on the associated graph, which
corresponds to the use of the hexahedron recurrence to write the
top variable in terms of lower variables.  These representations
commute.  To state this more precisely, let $U_0$ be the stepped
solid coincident with the closed negative orthant.  The associated
graph $\Gamma (U_0)$ is called the \Em{cubic corner graph} and
is shown in Figure~\ref{cubiccorner}.
\begin{figure}[htbp]
\center{\includegraphics[scale=0.6]{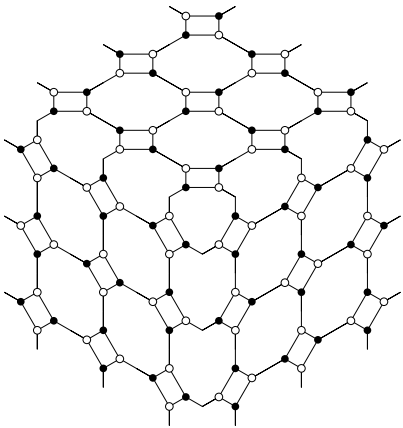}}
\caption{The cubic corner graph.}
\label{cubiccorner}
\end{figure}

Let $\lat$ be the lattice poset of all stepped solid subsets of 
$U_0$ containing all but finitely many cubes of $U_0$.  
For each $U \in \lat$, one may add a finite sequence of cubes
resulting in $U_0$.  Therefore, a finite sequence of superurban 
renewals represents $A(0,0,0)$ in terms of the variables labeling
faces and vertices of the stepped surface $\partial U$ that are
in the union of the removed lattice cubes.  Denote this set
of variables by $\init (U)$.

\begin{proposition}
$(i)$ The rational function $F$ representing $A(0,0,0)$ in terms of 
the variables in $\init$ is a Laurent polynomial.
$(ii)$ If $U'\subset U$ in $\lat$ and the representation
of each variable $w \in \init(U)$ in terms of variables in $\init(U')$
is substituted into $F$, the resulting Laurent polynomial is the 
representation of $A(0,0,0)$ in terms of variables in $\init(U')$.
\end{proposition}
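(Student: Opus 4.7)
My plan is to derive both parts from the Laurent property of superurban renewal (the preceding corollary) combined with the identification of cube addition with superurban renewal in Proposition~\ref{pr:super}. For part~(i), fix $U\in\lat$ and pick any finite sequence of cubes $B_1,\ldots,B_n$, each added at a local minimum of the current stepped surface, so that $U_0 = U \cup B_1 \cup \cdots \cup B_n$. Such an order exists because at every stage the set of not-yet-added cubes is finite and nonempty, and any cube in this set of minimal $i+j+k$ has all three of its lower neighbors already in the current stepped solid (they lie in $U$ or were added earlier), hence is a local minimum. By Proposition~\ref{pr:super}, each addition is a superurban renewal of the current associated graph, and the canonical variables evolve by the hexahedron recurrence~\eqref{hh1}--\eqref{hh4}. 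After $n$ steps, the face of $\Gamma(U_0)$ labeled $(0,0,0)$ carries a rational function $F$ in the variables $\init(U)$, and iterating the Laurent-property corollary shows that $F$ is a Laurent polynomial.

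For part~(ii) the idea is that substitution corresponds to concatenation of cube-addition sequences. Given $U'\subset U$ in $\lat$, I would pick an admissible cube-addition sequence from $U'$ to $U_0$ that first exhausts $U\setminus U'$ (reaching $U$ at the midpoint) and then exhausts $U_0\setminus U$ (reaching $U_0$). The same minimality argument above guarantees that the first segment is admissible against $U'$ and the second against $U$, so the concatenation is admissible against $U'$. The first segment produces, for each $w\in\init(U)$, a Laurent expression of $w$ in terms of $\init(U')$; the second segment writes $A(0,0,0)$ as $F(\init(U))$. Composing these two substitutions is literally the outcome of running the full concatenated sequence, which by part~(i) applied to $U'$ is the $\init(U')$-representation of $A(0,0,0)$.

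The step that needs extra attention, and where I would concentrate, is showing that $F$ is well defined, i.e.\ that two different admissible cube orderings produce the same rational expression and not merely the same value at generic inputs. I would argue this by the multidimensional consistency of the hexahedron recurrence alluded to in the introduction: each newly created face value is uniquely determined from its hexahedron cube's other seven face-and-vertex values via~\eqref{hh1}--\eqref{hh4}, so if we view $\init(U)$ as algebraically independent indeterminates, any two admissible orderings assign the same rational function to every face at every intermediate stepped surface, hence the same $F$ at the apex $(0,0,0)$. Since two Laurent polynomials in independent variables that agree as rational functions agree on the nose, this rules out any ambiguity in $F$ or in the substituted Laurent polynomials of part~(ii), and both statements follow.
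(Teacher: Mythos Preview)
Your proof is correct and follows essentially the same route as the paper: part~(i) via Proposition~\ref{pr:super} together with the Laurent property of (super)urban renewal, and part~(ii) via the algebraic independence of the stepped-surface variables, which the paper phrases simply as ``the lack of relations among the variables in any stepped surface.'' Your additions---the explicit construction of an admissible cube-addition sequence, the concatenation argument for~(ii), and the well-definedness discussion---are legitimate expansions of details the paper leaves implicit.
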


\begin{proof}
By Proposition~\ref{pr:super}, the expression $F$ is obtained
by a sequence of superurban renewals.  By definition, each
of these is a sequence of six urban renewals, hence part~$(i)$
follows from Proposition~\ref{pr:urban}.  Part~$(ii)$ is a
consequence of the lack of relations among the variables
in any stepped surface.  
\end{proof}

Our combinatorial interpretations of these formulae 
take place on the associated graphs $\Gamma (U)$.
An example to keep in mind is $U_{-n}$, defined
to be those cubes of $U_0$ lying entirely within the
halfspace $\{ (x,y,z) \, : \, x + y + z \leq -n \}$.
This solid and its associated graph are illustrated 
for $n=-1$ (only the top cube removed)
in Figure~\ref{onecubegone}.  
\begin{figure}[htbp]
\centering
\includegraphics[scale=0.6]{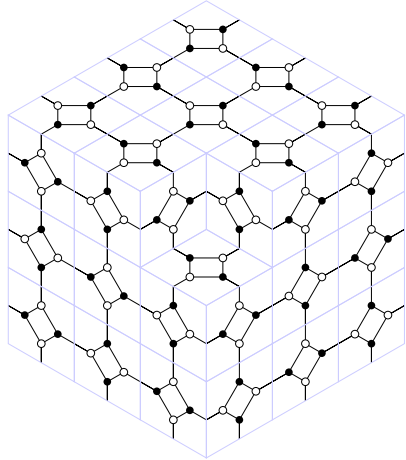}
\caption{After removing the topmost cube.}
\label{onecubegone}
\end{figure}
This solid is a subset
of $Z_{-n}$, the stepped solid defined by $x+y+z \leq -n$, whose associated graph is isomorphic to the 4-6-12 
graph of Figure~\ref{fig:4-6-12}.  The labels of $Z_{-n}$ are precisely
the points of $\Z^3$ at levels $-n-2, -n-1$ and $-n$ together
with the half integer points at level $-n-1$.  The hexahedron 
recurrence imposes no relations on this set of variables, 
hence from the point of view of determining $A(0,0,0)$ as a 
function of the variables in $\init (U_{-n})$, we might
equally well think of the initial variables as being all of 
those in $\Gamma (Z_{-n})$.

We define a double-dimer configuration $m_0$ on the cubic corner graph 
$\Gamma_0$ as in Figure \ref{initialconfig}.
\begin{figure}[htbp]
\center{\includegraphics[scale=0.6]{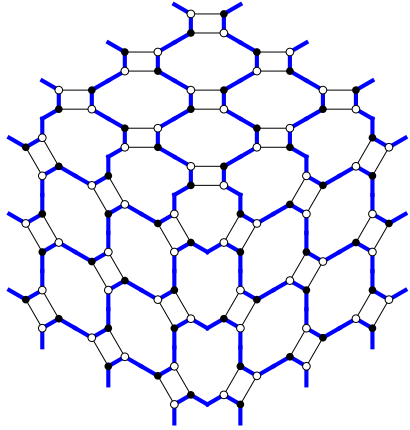}}
\caption{The initial double-dimer configuration $m_0$ on $\Gamma_0$.}
\label{initialconfig}
\end{figure}
This configuration $m_0$ plays the role of our initial configuration. 
This configuration has the following property.  If we erase a 
finite piece of $m_0$, there is a unique way to complete it to 
a double-dimer configuration which has the same boundary connections,
that is, connections between far-away points.  For $U \in \lat$,
we say that a double-dimer configuration on $\Gamma (U)$ is 
\Em{taut} if it has the same boundary connections as $m_0$, 
that is, it is identical to $m_0$ far from the origin and there 
is a bijection between its bi-infinite paths and those of $m_0$
which is the identity near $\infty$.  There are a finite number 
of taut configurations. See Figure \ref{Gamma1sample} for one such 
on $\Gamma (U_{-1})$.
\begin{figure}[htbp]
\center{\includegraphics[scale=0.8]{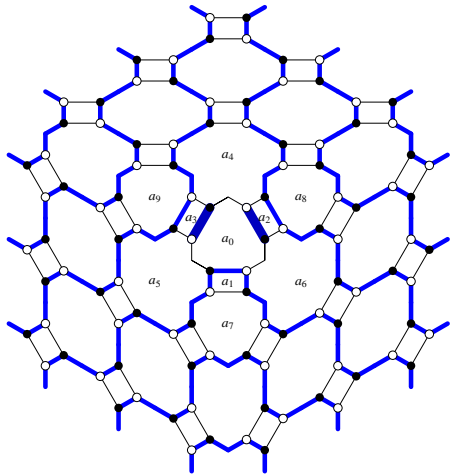}}
\caption{\label{Gamma1sample}A taut double-dimer configuration on 
$\Gamma (U_{-1})$. Doubled edges are thicker.}
\end{figure}

\section{Main formula} \label{sec:main}

Given a taut dimer configuration $m$, let $c(m)$ denote
the number of loops in $m$ and define $c(m;i,j,k) := L(i,j,k) 
- 2 - d(m;i,j,k)$ where $L(i,j,k)$ is the number of edges 
in the face $(i,j,k)$ and $d(m;i,j,k)$ is the number of dimers 
lying along the face $(i,j,k)$ in the matching $m$. 
Define the \emph{weight} of a taut configuration $m$ to 
be 
$$2^{c(m)} \prod_{(i,j,k) \in \init (U)} A (i,j,k)^{c(m; i,j,k)}$$
where $c(m)$ is the number of nontrivial loops in $m$.
In the
configuration $m_0$, all quadrilateral faces have 2 dimers
and all octagonal faces have 6 dimers, so the only face
$(i,j,k)$ with $c(m_0;i,j,k) \neq 0$ is the hexagonal face
which has 3 dimers and $c (m_0 ; 0,0,0) = 6 - 2 - 3 = 1$.
Any taut configuration differs from $m_0$ in finitely many
places, hence its weight has finitely many variables appearing in it.
For example, the configuration of Figure \ref{Gamma1sample} 
has weight $\displaystyle{\frac{a_4^2a_5a_6a_7}{a_0a_1a_2a_3}}$. 

\begin{theorem}\label{maincomb} 
Fix any $U \in \lat$ and let $\init (U)$ be the labels of 
$\Gamma (U)$.  Use the notation $m \preceq U$ to
signify that $m$ is a taut double-dimer configuration
on $\Gamma (U)$.  Then the representation of $A(0,0,0)$ 
as a Laurent polynomial in the variables in $\init (U)$ 
is given by
\begin{equation} \label{eq:main}
A(0,0,0) = \sum_{m \, \preceq \; U} 2^{c(m)} \prod_{(i,j,k) 
   \in \init (U)} A (i,j,k)^{c(m; i,j,k)} \, .
\end{equation}
Specializing to $U_{-n}$ and $A(i,j,k) = 1$ for
all $i,j,k$ with $-n-2 \leq i+j+k \leq -n$ gives the formula
$$A(0,0,0) = \sum_{m \, \preceq \; U_{-n}} 2^{c(m)} \, .$$
\end{theorem}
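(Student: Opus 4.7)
The plan is to prove the main formula~\eqref{eq:main} by induction on $N := |U_0 \setminus U|$, the number of cubes separating $U$ from $U_0$.

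Base case $N = 0$: Here $U = U_0$ and $A(0,0,0)$ is itself one of the labels in $\init(U_0)$, so the left hand side is simply $A(0,0,0)$. On the right, the only taut configuration on $\Gamma_0 = \Gamma(U_0)$ is $m_0$ itself (any taut configuration agrees with $m_0$ outside a finite region, and there is nothing to perturb when we start from $m_0$). Its weight collapses to $A(0,0,0)$ since $c(m_0) = 0$, every quadrilateral face contributes $c(m_0;\cdot) = 4 - 2 - 2 = 0$, every octagonal face contributes $8 - 2 - 6 = 0$, and the hexagonal face at the origin contributes $c(m_0;0,0,0) = 6 - 2 - 3 = 1$.

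Inductive step: Fix $U$ with $N \geq 1$, pick any cube $B$ whose top corner is a local minimum of $\partial U$ with respect to the height $i+j+k$, and set $U^+ := U \cup \{B\}$. By Proposition~\ref{pr:super}, $\Gamma(U^+)$ arises from $\Gamma(U)$ by a single superurban renewal at the hexagonal face corresponding to $B$'s top corner, and the four affected labels transform as rational functions of the variables in $\init(U)$ according to the formulas of Lemma~\ref{sur}. By the inductive hypothesis applied to $U^+$, we have $A(0,0,0) = \sum_{m^+ \preceq U^+} \mathrm{wt}(m^+)$; substituting the superurban renewal formulas converts this into a rational expression in $\init(U)$. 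It remains to show that this rational expression equals $\sum_{m \preceq U} \mathrm{wt}(m)$.

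Since $\Gamma(U)$ and $\Gamma(U^+)$ differ only inside a bounded neighborhood of the renewed hexagon, and any taut configuration equals $m_0$ outside a sufficiently large finite region, the needed identity is local: for each trace of a taut configuration on the exterior of a window containing the renewed hexagon, the sum of weighted completions inside the window on one side must equal the sum on the other side after superurban substitution. The cleanest route is to decompose superurban renewal into the six urban renewals of Figure~\ref{decompose} and establish the analogous taut double-dimer identity for each individual urban renewal, using the urban renewal formula of Figure~\ref{Aurban} at the level of single dimer weights. The main obstacle, and the most delicate part of the bookkeeping, is to verify that the $2^{c(m)}$ loop weighting transforms correctly when urban renewal creates, destroys, merges, or splits loops in a taut double-dimer configuration; this requires a finite case analysis of the local patterns of doubled edges and loop strands around the renewed quadrilateral. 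Once this local identity is in hand, the induction closes. The specialization in the second part of the theorem then follows immediately: setting all $A(i,j,k) = 1$ in~\eqref{eq:main} collapses each taut-configuration weight to $2^{c(m)}$.
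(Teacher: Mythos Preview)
Your approach matches the paper's: induct on the number of cubes removed from $U_0$, reduce the inductive step to the behavior of the taut double-dimer weight under a single urban renewal, and close with a finite local case analysis. The paper carries this out explicitly as seven local boundary-connection patterns (Figure~\ref{urbancheck}), verified directly in the $A$-variables so that the $2^{c(m)}$ factor is absorbed into the case check rather than tracked separately; it also notes that the vertex splittings/contractions interleaved with the six urban renewals preserve every face's contribution (each affected face gains two edges and two dimers), a step you should include.
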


\begin{unremark}
These formulae are translation-invariant, once one accounts for the 
distinguished role played by $(0,0,0)$ in $\Gamma_0$.  
Let $(r,s,t)$ be any point of $\Z^3$ and $U$ be any lattice
solid obtained by removing finitely many cubes from the
top of the orthant $\{ x \leq r , y \leq s, z \leq t \}$.
Then $A(r,s,t)$ is a Laurent polynomial in the set $\init (U)$
of labels in $\Gamma (U)$ and~\eqref{eq:main} holds for $A(r,s,t)$
in place of $(0,0,0)$.
\end{unremark}

\begin{proof} We induct on $U$.  It is true for $U = U_0$:
there is one configuration, $m_0$, with $c(m_0 ; i,j,k) = 1$
if $i=j=k=0$ and zero otherwise.  The sum is therefore equal
to $A(0,0,0)$, yielding the identity $A(0,0,0) = A(0,0,0)$
which is the correct representation.  

For the induction to run, we need to see that the conclusion 
remains true if we remove a maximal cube.  This corresponds
to a superurban renewal, which is a composition of ordinary
urban renewals with additional vertex splittings and 
contractions, depending on which version of ordinary
urban renewal is used.  Under a vertex splitting, two faces 
increase in length by $2$ and get two additional dimers
on them.  Thus their contribution does not change.  A similar 
argument holds for vertex contraction.

Consider an urban renewal of type shown in Figure~\ref{Aurban}. 
There are several cases to consider, see Figure \ref{urbancheck},
depending on the various possible boundary connections. 
In the first case, the ratio of monomials on the right side 
and left side of the equation is
$$\frac{a_5^2a_1^{2-4}a_2^{2-4}a_3^{2-4}a_4^{2-4}}
   {2a_0^{-2}a_1^{-1}a_2^{-1}a_3^{-1}a_4^{-1}+
   a_0^{-2}a_1^{-2}a_2^{-2}a_3^0a_4^0+a_0^{-2}a_1^0a_2^0a_3^{-2}a_4^{-2}}
   = \frac{a_5^2a_0^2}{a_1a_2a_3a_4(2+\frac{a_3a_4}{a_1a_2} 
   + \frac{a_1a_2}{a_3a_4})} = 1 \, .$$
In the second case, the ratio is
$$\frac{a_5^1a_1^{2-4}a_2^{2-3}a_3^{2-3}a_4^{2-3}} 
   {a_0^{-1}(a_1^{-1}a_2^0a_3^{-1}a_4^{-1}+a_1^{-2}a_2^{-1}a_3^0a_4^0)}
   = \frac{a_0a_5}{a_1a_2+a_3a_4}=1 \, .$$
The remaining five cases are similar. These cover all possible cases 
(up to rotations).
\begin{figure}[htbp]
\center{\includegraphics[width=4in]{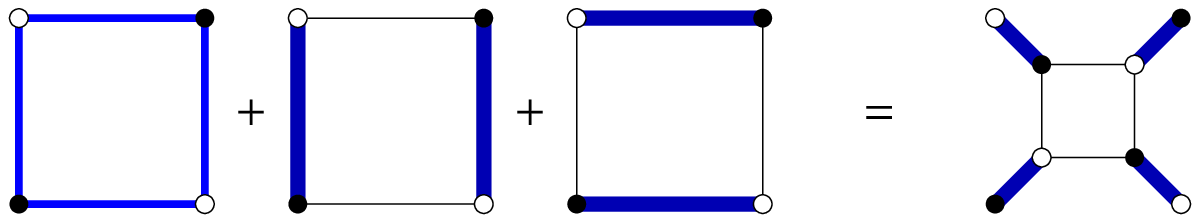}\\
\includegraphics[width=3in]{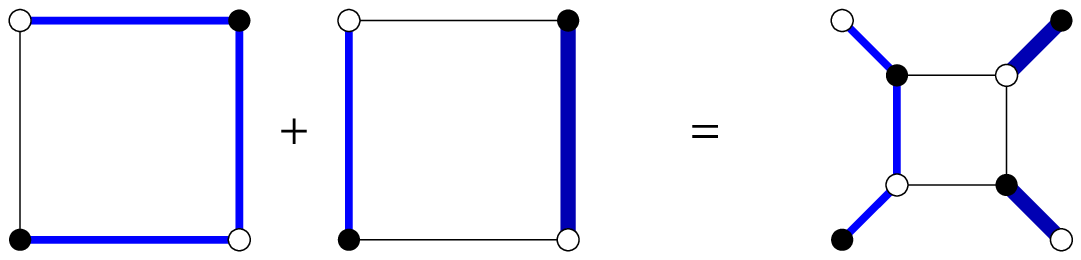}\\
\includegraphics[width=2in]{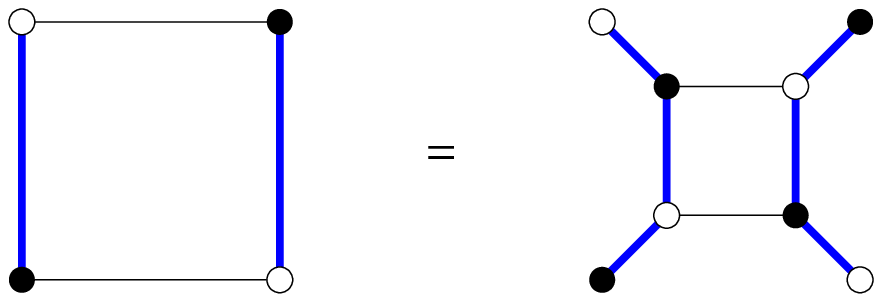}\\
\includegraphics[width=2in]{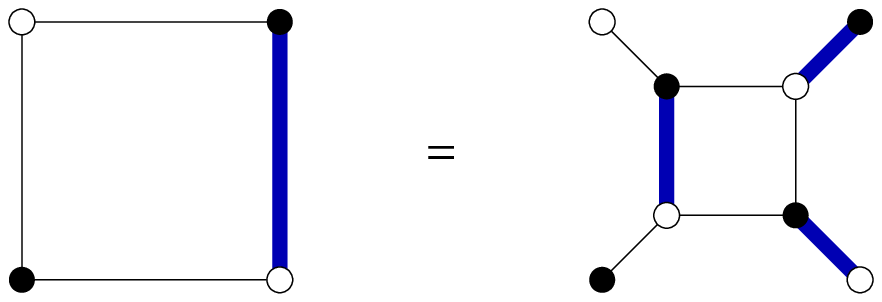}\\
\includegraphics[width=2in]{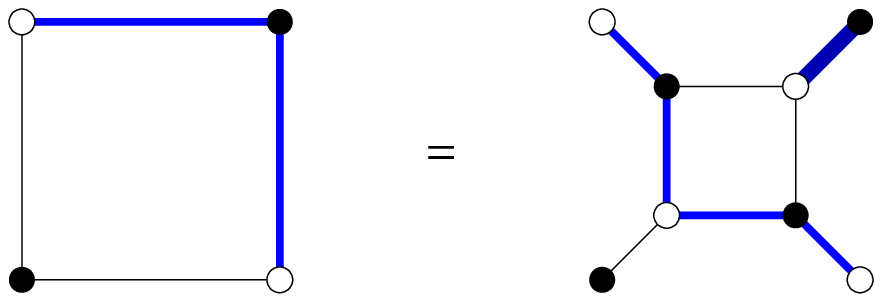}\\
\includegraphics[width=4in]{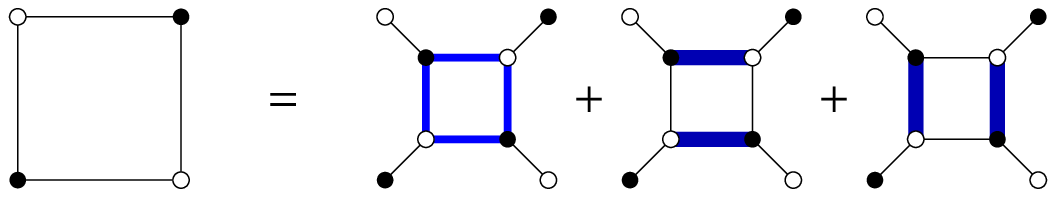}\\
\includegraphics[width=3in]{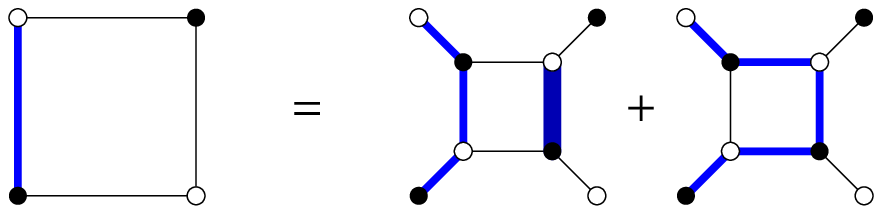}}
\caption{\label{urbancheck}Checking consistency of urban renewal formulas and $A$ monomials.}
\end{figure}
\end{proof}

\old{
The following definition gives some notation for interpreting
representations of the $A$ variables.  
\begin{definition}
For $(i,j,k) \in \Z^3$ with $i+j+k > 0$, let $\Gamma_{i,j,k}$
the graph associated with the stepped solid $U_{ijk}$ defined 
to consist of those lattice cubes all of whose points satisfy 
$$x \leq i , \;\; y \leq j ,  \;\; j \leq k , \;\;  x+y+z \leq 2 \, .$$ 
Let $\init_{ijk}$ denote the variables associated with 
faces of $\Gamma_{ijk}$ at levels~0, 1 and~2.  These
are precisely the variables occurring in the representation
of $A(i,j,k)$ in terms of the canonical initial variables.
\end{definition}
\clearpage
}

\section{Limit shapes}\label{ddlimitshape}

In this section we specialize values of the initial variables
in several natural ways and study the behavior of the resulting
ensembles.  Let $\ell : \Z^d \to \Z$ be a linear function which we 
regard as time.  A function $f : \Z^d \to \Z$ is said 
to be (spatially) isotropic if $f(v)$ depends only on $\ell (v)$.  
Solutions to recurrences that are isotropic and have the 
particularly simple form $f(v) = \gamma^{\ell (v)^k}$ lead to 
linear recurrences for the formal logarithmic derivatives 
$(\partial / \partial t) \log f(v,t)$; these have 
probabilistic interpretations and satisfy limit
shape theorems.  We begin by identifying the exponent $\delta$ 
for which such a solution exists.

\subsection{Isotropic solutions and homogeneity}
\label{ss:hom recs}

In order to discuss general $\Z^d$-invariant algebraic recurrences 
some notation is required.  Let $\A$ be a finite index set, let 
$\{ E_\alpha : \alpha \in \Z \}$ be a finite collection 
of multisets of elements of $\Z^d$, and let 
$\{ c_\alpha : \alpha \in \A \}$ be constants.  Denote 
the multiset union by $E := \bigcup_{\alpha \in \A} E_\alpha$.
For each $v \in \Z^d$, define a polynomial $P_{(v)}$ 
on indeterminates $\{ x_w : w \in \Z^d \}$ by the equation
\begin{equation} \label{eq:alg rec}
P_{(v)} = \sum_{\alpha \in \A} c_\alpha 
   \prod_{w \in E_\alpha} x_{v + w} \, .
\end{equation}
A lattice function $f : \Z^d \to \C$ is said to satisfy the 
algebraic recurrence  $P = \{ P_{(v)} : v \in \Z^d \}$ if at all $v\in\Z^d$, 
$P_{(v)}$ vanishes upon setting $x_w = f(w)$ for all 
$w \in \Z^d$.  Say that the recurrence is $k$-homogeneous 
with respect to the linear functional $\ell : \Z^d \to \Z$ 
if there are constants $\beta_0 , \ldots , \beta_k$ such 
that for each $\alpha \in \A$ and each $0\le j \leq k$, 
$$\sum_{w \in E_\alpha} \ell (w)^j = \beta_j \, .$$
For instance, the recurrence is 0-homogeneous if
$|E_\alpha| = \beta_0$, that is, if the polynomials
$P_{(v)}$ are homogeneous of degree $\beta_0$.  The
\Em{degree of homogeneity} of the recurrence $P$ is
the maximum $\delta$ such that $P$ is $\delta$-homogeneous.

\begin{example}
The octahedron recurrence~\eqref{eq:oct} is 1-homogeneous with 
respect to the height function $\ell (i,j,k) = j+k$.  To see this,
put the recurrence in the form of~\eqref{eq:alg rec}
by taking $E_1 = \{ (1,0,0) , (0,1,1) \}$, $E_2 = 
\{ (0,1,0) , (1,0,1) \}$ and $E_3 = \{ (0,0,1) , (1,1,0) \}$. 
The corresponding multisets of heights are $\{ 0 , 2 \}, 
\{ 1 , 1 \}$ and $\{ 1 , 1 \}$.  The sums of the
zero powers of the heights are the constant $\beta_0 = 2$.
The sums of the first powers of the heights are the constant $\beta_1 = 2$.
The sums of the squares of the heights are not constant
(4, 2 and~2) therefore the degree of homogeneity of the 
octahedron recurrence with respect to $\ell (i,j,k) = j+k$ is~1.
\end{example}

The degree of homogeneity of a recurrence tells us for
which power $\delta$ we can expect a solution to the 
recurrence of the form $f(v) = \gamma^{\ell (v)^\delta}$.
These solutions are the ones for which we can most
easily compute corresponding linear recurrences for the
the derivatives $(\partial / \partial t) f(v,t)$.
\begin{proposition} \label{pr:homog solutions}
Suppose the degree of homogeneity of the recurrence $P$ 
is $\delta-1$.  Suppose further that $\sum_{\alpha \in S} c_\alpha$
is nonvanishing for at least two equivalence classes $S = S_m
= \{ \alpha : \sum_{w \in E_\alpha} \ell (w)^\delta = m \}$.
Then there is a number 
$\gamma \neq 0$ such that $f(v) = \gamma^{\ell (v)^\delta}$ 
satisfies $P$ at every $v \in \Z^d$.  In fact 
$f(v) = \gamma^{q(\ell (v))}$ satisfies $P$ for
every monic polynomial $q$ of degree $\delta$.
\end{proposition}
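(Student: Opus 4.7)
The plan is to substitute the ansatz $f(v)=\gamma^{q(\ell(v))}$ into $P_{(v)}$ and reduce the entire family of vanishing conditions to a single scalar equation in $\gamma$. Writing $t:=\ell(v)$, the total exponent of $\gamma$ in the $\alpha$-th monomial of $P_{(v)}$ equals $\sum_{w\in E_\alpha}q(t+\ell(w))$. Expanding this by Taylor's formula,
$$\sum_{w\in E_\alpha}q(t+\ell(w))=\sum_{j=0}^{\delta}\frac{q^{(j)}(t)}{j!}\sum_{w\in E_\alpha}\ell(w)^j,$$
one sees that the $(\delta-1)$-homogeneity hypothesis is tuned precisely for this sum: for $0\le j\le\delta-1$ the inner sum is a constant $\beta_j$ independent of $\alpha$, and for $j=\delta$ the prefactor $q^{(\delta)}(t)/\delta!$ equals $1$ since $q$ is monic of degree $\delta$. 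Hence the exponent splits cleanly as $R(t)+m_\alpha$, where $R(t):=\sum_{j<\delta}q^{(j)}(t)\beta_j/j!$ is independent of $\alpha$ and $m_\alpha:=\sum_{w\in E_\alpha}\ell(w)^\delta$ is exactly the equivalence-class label of $\alpha$.

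Pulling out the common factor $\gamma^{R(t)}$ then yields
$$P_{(v)}=\gamma^{R(t)}\sum_m C_m\,\gamma^m,\qquad C_m:=\sum_{\alpha\in S_m}c_\alpha.$$
Thus $P_{(v)}$ vanishes at every $v$ simultaneously if and only if $\gamma$ is a root of the single Laurent polynomial $\sum_m C_m\gamma^m$. By hypothesis, at least two of the $C_m$ are nonzero, so after multiplying through by $\gamma^{-m_{\min}}$ this becomes an honest polynomial of positive degree with nonzero constant term, hence has a root $\gamma\in\C\setminus\{0\}$ by the fundamental theorem of algebra. Since the argument is valid for every monic polynomial $q$ of degree $\delta$ (in particular for $q(x)=x^\delta$), both assertions of the proposition follow at once, with the same $\gamma$.

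The only step that I anticipate needing real care is the Taylor-expansion bookkeeping in the first paragraph: one must verify that the homogeneity hypothesis absorbs precisely the $\alpha$-dependent contributions into the universal factor $\gamma^{R(t)}$, leaving only the $\gamma^{m_\alpha}$ dependence. Once that separation is established, the rest is a single invocation of algebraic closure, and the ``at least two classes'' hypothesis is used exactly to guarantee that the resulting scalar equation in $\gamma$ is nontrivial and admits a nonzero solution.
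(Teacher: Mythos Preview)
Your proof is correct and follows essentially the same approach as the paper: both substitute the ansatz, use the $(\delta-1)$-homogeneity to factor out all $v$-dependence from the exponent, and reduce to a single nontrivial Laurent polynomial in $\gamma$. Your Taylor expansion at $t=\ell(v)$ is a slightly cleaner packaging of what the paper does via two successive binomial expansions (first for the lower-degree terms of $q$, then for the top-degree term), but the underlying computation and the concluding appeal to algebraic closure are identical.
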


\begin{proof}
Let $q$ be a monic polynomial of degree $\delta$ and define
$f(v) = f_q (v) = \gamma^{q(\ell (v))}$.  This satisfies
$P$ at $v$ if and only if
\begin{equation} \label{eq:q1}
\sum_{\alpha \in \A} c_\alpha \gamma^{\sum_{w \in E_\alpha} 
   q(\ell (v + w))} = 0 \, .
\end{equation}
For $j < \delta$, 
\begin{eqnarray*}
\sum_{w \in E_\alpha} \ell (v+w)^j 
& = & \sum_{w \in E_\alpha} (\ell (v) + \ell (w))^j \\
& = & \sum_{w \in E_\alpha} \sum_{i \leq j} \binom{j}{i} 
   \ell (v)^i \ell (w)^{j-i} \\
& = & \sum_{w \in E_\alpha} \sum_{i \leq j} \binom{j}{i} 
   \beta_{j-i} \ell (v)^i \\
& := & C(v)
\end{eqnarray*}
and factoring out $\gamma^{C(v)}$ from~\eqref{eq:q1} leaves
\begin{equation} \label{eq:q2}
\sum_{\alpha \in \A} c_\alpha \gamma^{\sum_{w \in E_\alpha} 
   (\ell (v + w))^\delta} = 0 \, .
\end{equation}
Again we may decompose the power $\ell (v+w)^\delta = 
(\ell(v) + \ell (w))^\delta$, this time arriving at
$$\sum_{w \in E_\alpha} \ell (v+w)^\delta =
   \sum_{w \in E_\alpha} \ell (w)^\delta + \sum_{1 \leq i \leq \delta} 
   \binom{\delta}{i} \beta_{\delta-i} \ell (v)^i := \ell (w)^\delta + D(v) \, .$$
Factoring out $\gamma^{D(v)}$ from~\eqref{eq:q2} leaves
\begin{equation} \label{eq:q3}
\sum_{\alpha \in \A} c_\alpha \gamma^{\sum_{w \in E_\alpha} 
   (\ell (w))^\delta} = 0 \, .
\end{equation}
This no longer depends on $v$, and is not a monomial
because we have assumed that $P$ is not $\delta$-homogeneous.
Therefore, there is at least one nonzero value of $\gamma$
for which~\eqref{eq:q3} holds, and this finishes the proof.
\end{proof}

\subsection{Recurrence for the derivative}

Suppose that each set $E_\alpha$ resides in the 
$\ell$-nonnegative halfspace $\Z^d_\ell := \{ v \in \Z^d : 
\ell (v) \geq 0 \}$, and suppose further that the values 
$\{ f(v) : v \in \Z^d_\ell \}$ all depend 
smoothly on some parameter $t$.  Consider the logarithmic
derivative $g(v) = (d/dt) \log f (v) = f'(v) / f(v)$,
which is defined whenever $f(v) \neq 0$.  We claim that
$g$ satisfies a linear recurrence with constant coefficients.

\begin{proposition} \label{pr:linear}
Suppose that $P$ is $(\delta-1)$-homogeneous but not $\delta$-homogeneous
and choose $\gamma$ according to Proposition~\ref{pr:homog solutions}
so that $f_0 (v) = \gamma^{\ell (v)^\delta}$ solves $P$.
Let $f(t,v)$ be smooth in $t$ with $f(0,v) = f_0 (v)$
and let $g(v) = (d/dt) \log f(t,v)|_{t=0}$.  Then
\begin{equation} \label{eq:lin rec}
\sum_{\alpha \in \A} c_\alpha' \sum_{w \in E_\alpha} \; g(v+w) = 0
\end{equation}
for all $v \in \Z^d_\ell$, where 
$$c_\alpha' = c_\alpha \gamma^{\ell (\alpha)}$$
and 
$$\ell (\alpha) := \sum_{w \in E_\alpha} \ell (w)^\delta \, .$$
\end{proposition}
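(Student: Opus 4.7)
The plan is to differentiate the algebraic recurrence at $t=0$ and exploit the computation already carried out inside the proof of Proposition~\ref{pr:homog solutions} to control the constants that pop out. Since each $P_{(v)}$ is a polynomial in the values $\{f(w):w\in\Z^d\}$ and $f(t,\cdot)$ satisfies $P$ for all $t$ (implicitly; the hypothesis I want to use is $P_{(v)}(f(t,\cdot))\equiv 0$, which really is the right reading of the setup since $f_0$ solves $P$ and $f$ is a one-parameter family of solutions), differentiating $P_{(v)}$ at $t=0$ in logarithmic form will produce a linear relation among the values $g(v+w)$.

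First I would write $f(t,v) = f_0(v)\bigl(1 + t\,g(v) + O(t^2)\bigr)$, so that
\[
\prod_{w\in E_\alpha} f(t,v+w) \;=\;\Bigl(\prod_{w\in E_\alpha} f_0(v+w)\Bigr)\Bigl(1 + t\sum_{w\in E_\alpha} g(v+w) + O(t^2)\Bigr).
\]
Summing against $c_\alpha$ and extracting the order-$t$ coefficient yields
\[
\sum_{\alpha\in\A} c_\alpha \Bigl(\prod_{w\in E_\alpha} f_0(v+w)\Bigr)\sum_{w\in E_\alpha} g(v+w) \;=\; 0,
\]
where the order-$1$ coefficient vanishes because $f_0$ already solves $P$.

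Next I would simplify the weight $\prod_{w\in E_\alpha} f_0(v+w) = \gamma^{\sum_{w\in E_\alpha}\ell(v+w)^\delta}$. This is exactly the computation that appears in the proof of Proposition~\ref{pr:homog solutions}: expanding $\ell(v+w)^\delta = (\ell(v)+\ell(w))^\delta$ by the binomial theorem and using $(\delta-1)$-homogeneity (so that the lower-power sums $\sum_{w\in E_\alpha}\ell(w)^j$ for $j<\delta$ are constants $\beta_j$ independent of $\alpha$) gives
\[
\sum_{w\in E_\alpha}\ell(v+w)^\delta \;=\; \sum_{w\in E_\alpha}\ell(w)^\delta + D(v) \;=\; \ell(\alpha) + D(v),
\]
with $D(v)$ a polynomial in $\ell(v)$ depending only on the universal $\beta_j$'s and hence independent of $\alpha$.

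Finally I would substitute, pull the common factor $\gamma^{D(v)}$ out of the $\alpha$-sum, and divide by it (allowed since $\gamma\neq 0$), leaving
\[
\sum_{\alpha\in\A} c_\alpha\,\gamma^{\ell(\alpha)}\sum_{w\in E_\alpha} g(v+w)\;=\;0,
\]
which is exactly~\eqref{eq:lin rec} with $c_\alpha' = c_\alpha\gamma^{\ell(\alpha)}$. There is no real obstacle; the only point requiring care is the clean separation of the $\alpha$-dependent factor $\gamma^{\ell(\alpha)}$ from the $v$-dependent factor $\gamma^{D(v)}$, and this is precisely what $(\delta-1)$-homogeneity buys us. The result holds for every $v\in\Z^d_\ell$ since each $E_\alpha\subset\Z^d_\ell$ and the values $f(v+w)$ are defined.
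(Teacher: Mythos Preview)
Your proposal is correct and follows essentially the same approach as the paper: differentiate the recurrence at $t=0$, evaluate the resulting weight $\prod_{w\in E_\alpha} f_0(v+w)=\gamma^{\sum_w\ell(v+w)^\delta}$, then use $(\delta-1)$-homogeneity to split the exponent as $\ell(\alpha)+D(v)$ and factor out $\gamma^{D(v)}$. The paper's version differentiates directly rather than writing out the Taylor expansion $f(t,v)=f_0(v)(1+t\,g(v)+O(t^2))$, but the computations are identical; your parenthetical remark that one must read the hypothesis as ``$f(t,\cdot)$ satisfies $P$ for all $t$'' is apt, since the paper's proof also relies on this implicitly when it writes $0=(d/dt)_{t=0}\,P_{(v)}$.
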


\begin{proof}
Differentiating the recurrence at $v$ gives
\begin{eqnarray*}
0 & = & \left ( \frac{d}{dt} \right )_{t=0} 
   \sum_{\alpha \in \A} c_\alpha \prod_{w \in E_\alpha} x_{v+w} \\[1ex]
& = & \sum_{\alpha \in \A} c_\alpha 
   \left(\sum_{w \in E_\alpha} g(v+w)\right) \prod_{w \in E_\alpha} f(v+w) \\[1ex]
& = & \sum_{\alpha \in \A} c_\alpha \sum_{w \in E_\alpha} g(v+w) 
   \gamma^{\sum_{w \in E_\alpha} \ell(v+w)^\delta } \, .
\end{eqnarray*}
The sum $\sum_{w \in E_\alpha} \ell (v+w)^\delta$ is equal to 
$D(v) + \ell (\alpha)$, whence, factoring out $\gamma^{D(v)}$ 
from the last equation proves the proposition.
\end{proof}

The \Em{characteristic polynomial} of a recurrence
$\sum_{w \in E} b_w g(v+w) = 0$ is the Laurent polynomial
$\sum_{w \in E} b_w x^w$, where $x^w$ denotes the monomial
$x_1^{w_1} \cdots x_d^{w_d}$.  In particular, the 
characteristic polynomial of~\eqref{eq:lin rec} is 
$$H = H_P = \sum_{\alpha \in \A} 
   c_\alpha' \sum_{w \in E_\alpha} x^w \, .$$

\begin{example}
For the octahedron recurrence, let $E_1 := \{ (1,0,0) , (0,1,1) \}$,
$E_2 := \{ (0,1,0) , (1,0,1) \}$ and $E_3 := \{ (0,0,1) , (1,1,0) \}$
and denote these six vectors by $w_1 , \ldots , w_6$ respectively.
Then $\ell (1) = 4, \ell (2) = 2, \ell (3) = 2$.  The equation
for $\gamma$ is $\sum_\alpha c_\alpha' |E_\alpha| = 0$, which is
$2 \gamma^4 - 4 \gamma^2 = 0$, so $\gamma = \sqrt{2}$.
The linear recurrence on derivatives is then given by
$$2 [ g (v + w_1) + g(v + w_2)] - g(v + w_3) - g(v + w_4) - 
   g(v+w_5) - g(v + w_6) = 0 \, .$$
Dividing by~2, we see that the sum of the $x$ and $yz$ points
is equal to one half the sum of the other four points in any
elementary octahedron.  This recurrence has characteristic
polynomial $2(x+yz) - (y+z+xz+xy)$. 
\end{example}

\begin{example}[cube recurrence]
Let $w_1 , \ldots , w_8$ denote 
$$(0,0,0), (1,1,1) , (1,0,0), (0,1,1),
(0,1,0) , (1,0,1) , (0,0,1),(1,1,0)$$ respectively and
let $E_j = \{ w_{2j-1} , w_{2j} \}$ for $j=1,2,3,4$.  This
puts the cube recurrence~\eqref{eq:cube} in standard form with 
$c_1 = 1$ and $c_2 = c_3 = c_4 = -1$.  With $\ell (i,j,k) = i+j+k$, 
the cube recurrence has degree of homogeneity equal to $1$.  The
values of $\ell (\alpha) = \sum_{w \in E_\alpha} \ell (w)^2$ 
for $\alpha = 1,2,3,4$ are
$9,5,5,5$ and the resulting equation for $\gamma$ is
$\gamma^9 - 3 \gamma^5$ which has one positive solution 
$\gamma = 3^{1/4}$.  This leads to values for $c_\alpha'$
(we may divide everything by $3^{5/4}$) of $3, 1, 1$ and $1$
respectively.  Thus the recurrence for the derivatives is
given by
$$g(v) + g(v+w_2) = \frac{1}{3} \left ( \sum_{j=3}^8 g(v + w_j) \right )$$
and the characteristic polynomial for the
cube recurrence with respect to $\ell = i+j+k$ is
$$3(xyz + 1) - (x + y + z + xy + xz + yz) \, .$$
\end{example}

\subsection{Behavior of linear recurrences}

\subsubsection{Boundary conditions}
With the right boundary conditions, the linear 
recurrence~\eqref{eq:lin rec} will have tractable
asymptotics and a limiting shape.  This is assured
when the boundary conditions are such that~\eqref{eq:lin rec}
holds for all but finitely many $v \in \Z^d$.  The
most common way this arises is as follows.  Re-indexing 
if necessary, suppose that $0 \in E$ (recall $E = 
\bigcup_\alpha E_\alpha$), suppose that $\ell$ attains its
unique maximum there, and let $-m$ denote the minimum
value of $\ell$ on $E$.  The recurrence~\eqref{eq:lin rec}
determines $g(v)$ as a linear function of $\{ g(v + w) :
0 \neq w \in E \}$.  A canonical boundary condition is
to take $g(w) = 0$ when $\ell (w) < 0$, to take $g(0) = 1$
and to define $g$ everywhere else by the recurrence.  
In this case~\eqref{eq:lin rec} holds everywhere except
at the origin.  

Define a $d$-variable generating function 
$$F (x) := \sum_v g(v) x^v$$
where $x^v$ denotes the monomial $x_1^{v_1} \cdots x_d^{v_d}$.
Let $H$ denote the characteristic polynomial of the recurrence.
The fact that~\eqref{eq:lin rec} holds except at finitely
many points implies that the generating function $F$ satisfies 
$H F = G$ where $G$ is a Laurent polynomial.  

\begin{example}[octahedron recurrence, continued]
If we impose the octahedron recurrence everywhere except at 
the origin, setting the right-hand side of~\eqref{eq:lin rec} 
equal to~1 at the origin and setting $g(v) = 0$ when
$\ell (v) \leq 2$ except for $\ell (0,1,1) = 1$, then
$H F = x_2 x_3$, whence 
$$F = \frac{yz}{H} = \frac{yz}{2x + 2yz - y - z - xz - xy} \, .$$
The usual form of the octahedron recurrence differs from ours by
an affine transformation, whence one usually sees (e.g. for the 
Aztec Diamond creation rate generating function)
$$F(x,y,z) = \frac{z}{2 - (x + x^{-1} + y + y^{-1}) z + 2 z^2} \, ;$$
see~\cite{DGIPP} or~\cite[Section~4.1]{BaPe2011} for the generating 
function in this form.
\end{example}

\subsubsection{Coefficient asymptotics}
Laurent series for rational functions obey limit laws.  
A brief summary of the necessary background is as follows;
see, e.g.,~\cite[Chapter~7]{PW-book}.
Let $H$ be a Laurent polynomial vanishing at $(1 , \ldots , 1)$.
The amoeba of $H$ is the image in $\R^d$ of the complex zero set 
of $H$ under the log-modulus map $(z_1 , \ldots , z_d) \to 
(\log |z_1| , \ldots , \log |z_d| )$.  The components
of the complement of the amoeba are convex.  \emph{Assume that there is one
component, $B$, with the origin on the boundary}.  Any
rational function $G/H$ has a Laurent series $\sum_v a_v x^v$
convergent in the domain $\{ \exp (x + i y) : x \in B \}$.

The following further assumptions on $H$ put the
asymptotic behavior in the class of so-called cone points,
discussed in~\cite{BaPe2011} and~\cite[Chapter~11]{PW-book}.
Let $K$ be the convex cone of vectors for which all sufficiently
small positive multiples are in $B$.  The cone $K$ is a 
cone of hyperbolicity for $H$.  Let $K^*$ denote the dual cone 
$\{ w \in \R^d : \langle w , v \rangle \leq 0 ~\forall v \in K \}$.
We assume that $H$ is irreducible in the local ring at 
$(1 , \ldots , 1)$ and that $K^*$ has nonempty interior. 

\begin{theorem} \label{th:strictly minimal}
Let $F, G, H, B, K$ and $K^*$ be as above.  Assume that the zero 
set of $H$ touches the closed unit polydisk only at $(1,\dots,1)$.  Then
\begin{enumerate}
\item The coefficients $a_v$ decay exponentially in $|v|$ 
when $v \notin K^*$, the exponential rate being uniform
if $v / |v|$ is contained in a compact set disjoint from $K^*$.
\item If $G(1, \ldots , 1) =1$ then on $K^*$, 
$a_v \sim \Psi (v)$ (the ratio tends to $1$) where $\Psi$ is the inverse Fourier
transform of $1/\overline{H}$, the leading homogeneous part
of $H$ at $(1 , \ldots , 1)$.
\item The inverse Fourier transform $\Psi$ is homogeneous of degree 
$\deg (\overline{H}) - d$ .
\item If $G$ vanishes to degree $\delta$ at $(1 , \ldots , 1)$
then on $K^*$, $a_v \sim \Psi_G$, where $\Psi_G$ is a linear combination of 
partial derivatives of order $\delta$ of $\Psi$.
\end{enumerate}
If instead the zero set of $H$ intersects the unit torus in finitely 
many points then $K^*$ is the union of the dual cones at
each zero of $H$, the coefficients decay exponentially away
from $K^*$, and the leading asymptotic on $K^*$ is obtained
by summing $\Psi$ over the finitely many contact points of
the zero set with the unit torus.
\end{theorem}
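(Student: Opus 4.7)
The plan is to derive the asymptotics of $a_v$ by a multivariate saddle-point / cone-point analysis, following the framework of \cite{BaPe2011} and \cite[Chapter~11]{PW-book}. First I would express the coefficients via the Cauchy integral
\[
a_v = \frac{1}{(2\pi i)^d} \oint_T \frac{G(z)}{H(z)} \, z^{-v-\mathbf{1}} \, dz,
\]
taking as contour $T$ any torus $\{|z_i| = e^{x_i}\}$ with $x$ in the amoeba complement component $B$. Because the origin lies on $\partial B$, the unit torus is a limiting admissible contour and, by the strict minimality hypothesis, the integrand is holomorphic on the closed unit polydisk except at the single point $(1,\ldots,1)$.

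Next I would localize. Using a smooth cutoff, split $a_v = I_{\text{loc}} + I_{\text{far}}$, where $I_{\text{loc}}$ is the integral over a small neighborhood of $(1,\ldots,1)$ and $I_{\text{far}}$ is over its complement on the unit torus. Standard estimates (pushing the far contour slightly inward, using that $H$ is nonvanishing there) show $I_{\text{far}}$ decays exponentially in $|v|$. For $I_{\text{loc}}$, change variables $z_j = e^{i\theta_j}$, expand $H$ at $(1,\ldots,1)$ into its homogeneous components $H = \overline{H}(i\theta) + (\text{higher order})$, and rescale $\theta = u/|v|$. The leading contribution becomes
\[
a_v \sim \frac{1}{(2\pi)^d} \int_{\R^d} \frac{e^{-i \langle u, v/|v|\rangle}}{\overline{H}(iu)} \, du \cdot |v|^{\deg\overline{H} - d},
\]
which is precisely the (distributional) inverse Fourier transform $\Psi$ of $1/\overline{H}$ evaluated at $v$. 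Here hyperbolicity of $\overline{H}$ in the direction $K$ ensures that $1/\overline{H}$ has a well-defined Fourier transform supported in $K^*$, giving both the exponential decay off $K^*$ and the homogeneity of $\Psi$ of degree $\deg\overline{H} - d$ by scaling $v \mapsto \lambda v$ inside the Fourier integral. Parts (1), (2), (3) follow.

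For part (4), if $G$ vanishes to order $\delta$ at $(1,\ldots,1)$, then in the localized integral we may write $G(e^{i\theta}) = \overline{G}(i\theta) + O(|\theta|^{\delta+1})$ with $\overline{G}$ homogeneous of degree $\delta$; pulling $\overline{G}$ through the Fourier transform realizes it as a constant-coefficient differential operator of order $\delta$ applied to $\Psi$, yielding $\Psi_G$. Finally, when $H$ has finitely many zeros on the unit torus rather than a single one, the localization splits as a sum of identical local analyses, one per contact point, with each contributing a translated copy of $\Psi$; the cone $K^*$ becomes the union of the individual dual cones, and the exponential-decay bounds off this union come from the same away-from-singularity estimate applied on each piece.

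The main obstacle will be justifying the interchange of limits in passing from the honest Cauchy contour (inside $B$) to the unit torus, and controlling the error when truncating $H$ to $\overline{H}$: both require the hyperbolicity hypothesis on $K$ and the irreducibility assumption to guarantee that $1/H$ extends as a tempered distribution whose leading behavior is genuinely captured by $1/\overline{H}$. These technicalities are handled in detail in \cite[Chapter~11]{PW-book}, and I would invoke them rather than reprove them.
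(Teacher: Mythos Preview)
Your proposal is correct and follows the same route as the paper: both invoke the cone-point framework of \cite{BaPe2011} and \cite{PW-book}, with part~(1) coming from a Paley--Wiener type argument and parts~(2)--(4) from the local Fourier analysis of $1/\overline{H}$ at the singular point. In fact the paper's own proof is only a sketch that cites these references (Chapter~8 of \cite{PW-book} for part~(1), and Lemma~6.3 and Proposition~6.2 of \cite{BaPe2011} for parts~(2)--(4)), so your outline is more detailed than what appears in the paper itself while remaining entirely consistent with it.
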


\begin{proof}[Sketch of proof:]
The first part is essentially the Paley-Wiener Theorem.  It was
proved in the special case of cube groves (the cube recurrence)
in~\cite{PS}.  The general proof may be found in~\cite[Chapter~8]{PW-book}.
The second part is proved as~\cite[Lemma~6.3]{BaPe2011}.  There,
hypotheses are assumed to restrict the vanishing degree of
$H$ at $(1 , \ldots , 1)$ to~2, but in fact the proof is valid
for any degree.  The third part follows from the general theory
of inverse Fourier transforms, and the 
fourth part is~\cite[Proposition~6.2]{BaPe2011}, again removing the
restriction on the degree of vanishing of $H$ at $(1 , \ldots , 1)$.
\end{proof}

It is not possible to conclude that the support of $\Psi$ is all 
of the cone $K^*$.  More detailed information is given by computing
the critical sets.  The definition of the dual cones and critical sets 
are somewhat complicated, relying on hyperbolicity theory.  
We will not duplicate them here but will quote the result.
Again, the proof may be taken from~\cite{BaPe2011},
noting that the unit torus is a minimal torus, and that the finiteness
assumption on $W(v)$ satisfies the hypotheses of Theorem~5.8 there.
\begin{definition}[critical set]
For each $v$ in the interior of $K^*$, let $W(v)$ denote the set of
$z$ in the unit torus such that $v$ is in the 
dual cone to $H$ at $z$ as defined in~\cite[Definition~2.21]{BaPe2011}.
In particular when $\nabla H (z) \neq 0$, the point $z$ is in $W(v)$ exactly when 
$v$ is a scalar multiple of the logarithmic gradient $(z_1 \partial H / 
\partial z_1 , \ldots , z_d \partial H / \partial z_d) (z)$.
Let $K^\dagger \subseteq K^*$ denote the subset of $v$ such that
$W(v)$ is nonempty.  
\end{definition}

\begin{corollary} \label{cor:finite W}
Let $F, G, H, B, K, K^*$ and $K^\dagger$ be as discussed above.
Replace the hypothesis of Theorem~\ref{th:strictly minimal} 
that $H$ vanishes finitely often on the closed polydisk by 
the following hypotheses:
\begin{enumerate}[(i)]
\item The zero set of $H$ is disjoint from the open unit polydisk;
\item For each $v \in K^*$, the set $W(v)$ is finite.
\end{enumerate}
Whenever $H(z) = 0$, let $\overline{H}_z$ denote the homogeneous 
part of $H$ at $z$ and let $\Psi_z$ denote the inverse Fourier transform 
of $\overline{H}_z$.  Then for $v \in K^\dagger$, $a_v$ is
given asymptotically by the $\sum_{z \in W(v)} \Psi_z$.
The coefficients $a_v$ tend to zero exponentially rapidly when
$v$ goes to infinity and remains in any closed subcone disjoint
from the closure of $K^\dagger$.
$\noproof$
\end{corollary}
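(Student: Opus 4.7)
The plan is to reduce the statement to Theorem~5.8 of \cite{BaPe2011}, which gives asymptotics for coefficients of rational generating functions at cone points on a minimal torus. Under hypothesis~(i), the unit torus is a minimal torus for $F = G/H$: the Laurent series converges on the open unit polydisk (since $H$ does not vanish there) and fails to extend beyond it (since the origin lies on the boundary of the amoeba component $B$). Hypothesis~(ii) provides the second ingredient required by that theorem, namely that for each direction of interest the set of contributing critical points is finite.

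First I would set up the multivariate Cauchy integral
\begin{equation*}
a_v \;=\; \frac{1}{(2\pi i)^d}\int_\Sigma \frac{G(z)}{H(z)}\, z^{-v-\mathbf{1}}\, dz
\end{equation*}
with $\Sigma$ a torus $\{|z_j| = e^{x_j}\}$, $x \in B$, retracted slightly from the unit torus into the domain of convergence. Second I would identify the critical points: for $v \in K^\dagger$, these are precisely the elements of $W(v)$, that is, the points $z_0$ on the unit torus whose dual cone (in the sense of \cite[Definition~2.21]{BaPe2011}) contains $v$. At a smooth point of the zero set of $H$ this reduces to $v$ being a positive multiple of the logarithmic gradient of $H$; at cone points one uses the full hyperbolicity framework.

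Third I would carry out the standard deformation argument from \cite{BaPe2011}: deform $\Sigma$ outward toward the unit torus, routing around each $z_0 \in W(v)$ via a local cycle. By hypothesis~(i) the deformation encounters no obstructions in the interior, and by hypothesis~(ii) it bumps past only finitely many cone points. The local contribution near $z_0$ is computed by replacing $H$ near $z_0$ by its leading homogeneous part $\overline{H}_{z_0}$; the resulting oscillatory integral is, to leading order, the homogeneous distribution $\Psi_{z_0}(v)$, the inverse Fourier transform of $1/\overline{H}_{z_0}$. Summing over $W(v)$ gives the claimed asymptotic, while the residual contribution from the bulk deformed contour is exponentially smaller. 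For $v$ in a closed subcone disjoint from $\overline{K^\dagger}$, no $z$ on the unit torus has $v$ in its dual cone, so $\Sigma$ can be deformed uniformly through the open polydisk, yielding uniform exponential decay.

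The main obstacle is the bookkeeping around points where $H$ may be singular or have higher-order vanishing: one must verify that the dual-cone formalism entering the definition of $W(v)$ matches the cone-of-hyperbolicity formalism of \cite[Chapter~11]{PW-book} and \cite{BaPe2011} at each $z_0 \in W(v)$, and that the local Fourier asymptotic is uniform enough to be summed over the finitely many contributing points without losing the error control. Once this verification is carried out, the corollary follows directly from \cite[Theorem~5.8]{BaPe2011}; the only substantive difference from Theorem~\ref{th:strictly minimal} is replacing the single-contact-point hypothesis by the finiteness hypothesis~(ii), which is precisely what Theorem~5.8 there is designed to accommodate.
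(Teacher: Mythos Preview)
Your proposal is correct and matches the paper's approach exactly: the paper offers no proof beyond the remark immediately preceding the corollary that ``the proof may be taken from~\cite{BaPe2011}, noting that the unit torus is a minimal torus, and that the finiteness assumption on $W(v)$ satisfies the hypotheses of Theorem~5.8 there,'' which is precisely the reduction you carry out. Your additional sketch of the contour-deformation argument simply unpacks what Theorem~5.8 of~\cite{BaPe2011} does, so there is no methodological difference.
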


The phase boundary for a statistical mechanical system with generating
function $F$ is the boundary between exponential and non-exponential
decay of probabilities, which in most cases is $\partial K^\dagger$.
When $K^\dagger = K^*$, the phase boundary is equal to $\partial K^*$
and is not hard to compute: it is a component of the real algebraic 
hypersurface dual to the zero set of $\overline{H}$ in projective 
$(d-1)$-space.  When $d=3$, the boundary of
$K^*$ is an algebraic curve in $\RP^2$.  This curve may be
computed from $H (x,y,z)$ be setting $z = -ax - by$ and eliminating
$x$ and $y$ from the simultaneous polynomial equations
\begin{equation} \label{eq:dual}
H = 0  \; \; ; \; \; 
\frac{\partial H}{\partial x} = 0  \; \; ; \; \;
\frac{\partial H}{\partial y} = 0 \; .
\end{equation}

\subsection{Application to the hexahedron recurrence} \label{ss:hex}

We now apply the results of the last three subsections to
the hexahedron recurrence.  We have developed these results
independently of the specific recurrence for several reasons.
First, the methods are more general and it is good to see
them in the abstract.  Secondly, the hexahedron recurrence
is not $\Z^d$-invariant but invariant under a sublattice of 
finite index.  Because of this, it is not easy to see what
is going on if one begins with the hexahedron recurrence.
The exposition is clearest for general $\Z^d$-invariant 
recurrences, after which we may work the hexahedron recurrence
along similar lines.  We do not develop a general theory of
periodic recurrences for finite index sublattices because
the notation is even messier.  We first compute the isotropic
solutions, then compute the linear recurrences for the derivative, 
then prove limit shape results. 

\subsubsection{Isotropic solution}

Let $\ell (i,j,k) = i+j+k$.  An isotropic set of initial varibles 
$\init$ is given by the 4-6-12 graph: those $(i,j,k)$ with 
$0 \leq i+j+k \leq 2$.  The hexahedron recurrence, beginning 
with these variables, preserves the isotropy.  Therefore,
the solution will be described by constants $A_n , B_n : n \geq 0$
such that $f(i,j,k) = A_n$ for integer points with $i+j+k = n$
and $f(i,j,k) = B_n$ for half-integer points with $i+j+k = n+1$.
We will solve this general recursion, then specialize to solutions
of the form $A_n = \gamma^n, B_n = \kappa \gamma^n$.  
The initial conditions $A_0, A_1, A_2$ and $B_0$ and the hexahedron 
recurrence determine $A_n$ and $B_n$ for all positive $n$.
The recurrence becomes
{\scriptsize
\begin{eqnarray*} A_n & = & \frac{2 A_{n-2}^3 B_{n-3}^3+3 A_{n-3} A_{n-1} A_{n-2} B_{n-3}^3+A_{n-2}^6+3 A_{n-3}
   A_{n-1} A_{n-2}^4+3 A_{n-3}^2 A_{n-1}^2 A_{n-2}^2+A_{n-3}^3
   A_{n-1}^3+B_{n-3}^6}{A_{n-3}^2 B_{n-3}^3}\\
B_n &=& \frac{A_{n}^3+A_{n-1} A_{n} A_{n+1}+B_{n-1}^3}{A_{n-1} B_{n-1}}.
\end{eqnarray*}
}

The values of $A_3, B_1$ and $B_2$, are determined by the initial
conditions, but we may still use them in formulae for the 
remaining $A$ and $B$ values, leading to a mysteriously 
simple solution to the recurrence. 
$$A_4 = \frac{A_0A_3^2}{A_1^2},A_5=\frac{A_0^2A_3^4}{A_1^3A_2^2},
  A_6 = \frac{A_0^4A_3^6}{A_1^6A_2^3},\dots$$
$$B_3 = \frac{A_0B_1B_2^2}{A_2B_0^2} , 
  B_4 = \frac{A_0^2B_2^4}{A_2^2B_0^3},
  B_6 = \frac{A_0^4B_1B_2^6}{A_2^4B_0^6},\dots$$
and generally one can verify by induction that
$$A_n = \frac{A_0^{\lfloor (n-2)^2/4\rfloor}A_3^{\lfloor (n-1)^2/4\rfloor}}{
A_1^{\lfloor ((n-1)^2-1)/4\rfloor}A_2^{\lfloor ((n-2)^2-1)/4\rfloor}}.$$
$$B_{n} = \frac{A_0^{\lfloor (n-1)^2/4\rfloor}B_1^{\frac12(1-(-1)^n)}
   B_2^{\lfloor n^2/4\rfloor}}{A_2^{\lfloor (n-1)^2/4\rfloor}
   B_0^{\lfloor (n^2-1)/4\rfloor}} \, .$$
This can be written as
\begin{eqnarray}\label{gencaseAB1}A_{2n}
   & = &\frac{A_0^{(n-1)^2}A_3^{n^2-n}}{A_1^{n^2-n}A_2^{n^2-2n}} \\
A_{2n+1}&=&\frac{A_0^{n^2-n}A_3^{n^2}}{
A_1^{n^2-1}A_2^{n^2-n}} \\
B_{2n} & = & \frac{A_0^{n^2-n}B_2^{n^2}}{A_2^{n^2-n}B_0^{n^2-1}} \\
B_{2n+1} & = & \frac{A_0^{n^2}B_1B_2^{n^2+n}}{A_2^{n^2}B_0^{n^2+n}}
   \, . \label{gencaseAB4}
\end{eqnarray}

The simplest nontrivial solution to this recursion and the one in
the form of which we spoke earlier is
\begin{equation} \label{powerof3}
A_n=3^{n^2/2}, B_n = 2\cdot 3^{(n+1)^2/2} \, .
\end{equation}
There is another reasonably simple solution with a more direct
combinatorial meaning.  This is obtained by setting the initial 
variables $A_0, A_1, A_2, B_0$ all equal to~1.  This implies 
$A_3 = 14, B_1 = 3$ and $B_2 = 14$ and produces the result
\begin{eqnarray} 
A_{2n} & = & 14^{n(n-1)} \label{eq:14}\\
A_{2n+1} & = & 14^{n^2} \nonumber \\
B_{2n} & = & 14^{n^2} \nonumber \\
B_{2n+1} & = & 3 \times 14^{n(n+1)} \nonumber \, .
\end{eqnarray} 
Setting the initial $A$ variables equal to~1 amounts to setting
all the edge weights $\nu (e)$ equal to~1 in $\Gamma (U)$, as
dictated by change of variables~\eqref{eq:A-e}.
By Theorem~\ref{maincomb}, $A(0,0,0)$ counts taut 
double-dimer configurations of $\Gamma (U_{-n})$, with the
weight of $m$ counted as $2^{c(m)}$ when the canonical
initial variables are set
to~1.  By translation invariance, if $i+j+k = n+2$ and variables
at levels~0, 1 and~2 are set to~1, then $A(i,j,k)$ counts
taut double-dimer configurations in a graph isomorphic
to $\Gamma (U_{-n})$, again with weights $2^{c(m)}$.  
Evaluating $A(i,j,k) = A_{n+2} = 14^{(n/2)(n/2+1)}$
if $n$ is even and $14^{(n/2)(n/2+1) + 1/4}$ if $n$ is odd.
Thus we have proved:
\begin{corollary}
The number of taut double-dimer configurations of $\Gamma (U_{-n})$,
weighted by $2^{c(m)}$, is equal to 
$$14^{\frac{n}{2} \left ( \frac{n}{2} + 1 \right ) + \frac{1}{4}
   \delta_{{\rm odd}} (n)} \, .$$
\end{corollary}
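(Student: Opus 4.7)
The plan is to derive the corollary as a direct specialization of Theorem~\ref{maincomb} combined with the explicit isotropic solution (\ref{eq:14}), with essentially all the work already done; the only real task is translating the indexing carefully.

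First, I would observe that the weighted count $\sum_{m \,\preceq\, U_{-n}} 2^{c(m)}$ already has a generating-function meaning. Namely, setting the canonical initial variables $A(i,j,k) = 1$ at all levels with $-n-2 \le i+j+k \le -n$ corresponds, via the face-to-edge substitution (\ref{eq:A-e}), to assigning unit edge weights on $\Gamma(U_{-n})$. Theorem~\ref{maincomb} in its second form then asserts precisely that $A(0,0,0) = \sum_{m \,\preceq\, U_{-n}} 2^{c(m)}$. So my job reduces to computing the value of $A(0,0,0)$ in the isotropic solution determined by this choice of initial data.

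Second, I would use translation invariance (the remark following Theorem~\ref{maincomb}) to shift the problem into the isotropic framework already set up in Section~\ref{ss:hex}. Translating by $(-i,-j,-k)$ with $i+j+k = n+2$ turns the computation of $A(0,0,0)$ on $\Gamma(U_{-n})$, with initial variables at levels $-n-2,-n-1,-n$ all set to $1$, into the computation of $A(i,j,k)$ in the standard setup where the initial variables at levels $0,1,2$ are all set to $1$. This is exactly the isotropic initialization that gives the values $A_0 = A_1 = A_2 = 1$, $B_0 = 1$, which in turn forces $A_3 = 14$, $B_1 = 3$, $B_2 = 14$, and then the closed forms (\ref{eq:14}) for $A_m$ at every level $m \ge 0$.

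Third, I would just read off $A_{n+2}$ from (\ref{eq:14}) in the two parities. When $n$ is even, write $n+2 = 2m$ with $m = n/2+1$; then $A_{2m} = 14^{m(m-1)} = 14^{(n/2)(n/2+1)}$. When $n$ is odd, write $n+2 = 2m+1$ with $m = (n+1)/2$; then $A_{2m+1} = 14^{m^2} = 14^{((n+1)/2)^2} = 14^{(n/2)(n/2+1)+1/4}$. Combining the two parities produces the stated exponent $\tfrac{n}{2}\!\left(\tfrac{n}{2}+1\right) + \tfrac14 \delta_{\mathrm{odd}}(n)$, completing the proof. There is no real obstacle here: the only thing to be careful about is the off-by-two shift between ``the top of $U_{-n}$ sits at level $-n$'' and ``the target vertex lies at level $n+2$ above the fixed initial slab,'' which is an artifact of which direction one translates; everything else is algebraic bookkeeping already verified in (\ref{eq:14}).
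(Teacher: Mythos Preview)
Your proposal is correct and follows essentially the same approach as the paper: invoke Theorem~\ref{maincomb} with all initial variables set to~1, use translation invariance to identify the count with $A_{n+2}$ in the isotropic solution $A_0=A_1=A_2=B_0=1$, and then read off $A_{n+2}$ from~(\ref{eq:14}) in the two parities. The only cosmetic difference is that the paper runs the translation in the opposite direction (computing $A(i,j,k)$ with $i+j+k=n+2$ directly rather than translating $A(0,0,0)$ back), but the content is identical.
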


\subsubsection{Recurrence for the derivative}

Let us interpret Proposition~\ref{pr:linear} in the 
context of statistical mechanical ensembles.  Suppose
that over the set $E = \bigcup_\alpha E_\alpha$, the
function $\ell$ has a minimum value of~0 and a 
maximum value of $J$.  Suppose further that $P$ is
an algebraic recurrence that determines the values
of $\{ f(v) : \ell (v) \geq J \}$ in terms of
initial values $\{ f(v) : 0 \leq \ell (v) < J \}$.
Consider $t = f(0 , \ldots , 0)$ to be variable
while all other initial conditions remain fixed
at $f(v) = \gamma^{\ell (v)^k}$.  Applying 
Proposition~\ref{pr:linear} gives the constant coefficient 
linear recurrence~\eqref{eq:lin rec} for the logarithmic 
derivatives of $f(v)$.  
Specializing further to the case $f(v) = A(v)$ for one of the 
Laurent recurrences we have studied, the monomials in the 
expression of $A(v)$ in terms of initial variables correspond 
to configurations, the value $A(v)$ is the partition function 
for all configurations.  The logarithmic derivative 
$\displaystyle{\frac{1}{A(v)} \; \frac{\partial A(v)}{\partial A(0,0,0)}}$ 
at the initial conditions $f(v) = \gamma^{\ell(v)^k}$ 
may be interpreted as the expected value of the
exponent on the term $A(0,0,0)$ in the statistical
mechanical ensemble in which the probability of 
the configuration $\xi$ is $M_\xi (v) / A(v)$ where
$M_\xi$ is the monomial corresponding to $\xi$.

We now apply this to the hexahedron recurrence and the
double-dimer ensemble.  We choose initial conditions~\eqref{powerof3}
rather than~\eqref{eq:14} because these correspond to
the solution $f(v) = \gamma^{\ell (v)}$ of 
Proposition~\ref{pr:homog solutions} with $\gamma(i,j,k) = i+j+k$.
The logarithmic derivative $g(i,j,k) = A(i,j,k)^{-1} 
\partial A(i,j,k) / \partial A(0,0,0)$ is the expected
number of dimers lying along the face at the origin in
a double-dimer configuration picked from all taut configurations
on $\Gamma_{ijk}$ according to the double-dimer measure $\mu_{dd}$
corresponding to the initial conditions~\eqref{powerof3}.
By translation invariance, this is the same as the
expected number of dimers lying along the face centered
at $(-i, -j, -k)$ on the graph $\Gamma (U_{-i-j-k})$.  We
may then ask about the limiting shape function, that is,
about the values of $g(i,j,k)$ as $n = i+j+k \to \infty$
with $(i/n, j/n, k/n) \to (\alpha_1 , \alpha_2, \alpha_3)$
in the 2-simplex.  

Taking the logarithmic derivative of the four recurrence 
relations and plugging in the initial conditions~\eqref{powerof3}
gives the linear system
\begin{eqnarray*}
g_{(123)} & = & -g + \frac13(g_{(1)} + g_{(2)} + g_{(3)}
   + g_{(23)} + g_{(13)} + g_{(12)})\\
g^{(x)}_{(1)} & = & \frac{1}{12} \left(-9 g + 4 g_{(1)} + g_{(2)} + g_{(3)} 
   + 3 g_{(23)}-4 g^{(x)} + 8 g^{(y)} + 8 g^{(z)}\right)\\
g^{(y)}_{(2)} & = & \frac{1}{12} \left(-9 g + g_{(1)} + 4 g_{(2)} + g_{(3)} 
   + 3 g_{(23)} + 8 g^{(x)}-4 g^{(y)} + 8 g^{(z)}\right)\\
g^{(z)}_{(3)} & = & \frac{1}{12} \left(-9 g + g_{(1)} + g_{(2)} + 4g_{(3)} 
   + 3 g_{(12)} + 8 g^{(x)} + 8 g^{(y)} - 4 g^{(z)}\right).
\end{eqnarray*}

As it happens, the first equation gives a self-contained recurrence
for the logarithmic derivatives at the integer points.  Not only
that, but the recurrence is recognizable as that arising in the 
cube recurrence~\eqref{eq:cube}.  In other words, letting
$F(x,y,z)=\sum g_{i,j,k}x^iy^jz^k$, we see that the solution
to the first recurrence above with boundary conditions 
$g(0,0,0) = 1$, $g(i,j,k) = 0$ for other points $(i,j,k)$
with $i+j+k \leq 0$ and satisfying the recurrence everywhere
except at $(-1-1-1)$, is 
$$F(x,y,z) = \frac{G(x,y,z)}{H(x,y,z)} = \frac{1}{1 + xyz - 
   \frac{1}{3} (x+y+z+xy+xz+yz)} \, .$$

\subsubsection{Limit shape}

This is the same as that satisfied by the cube grove
placement probabilities~\cite{PS}.  The boundary of the dual cone
is known as the ``arctic circle'', which is the inscribed circle 
in the triangular region $\{x+y+z=n,x,y,z\ge 0\}$.
Outside of this, the placement probabilities decay 
exponentially while inside the arctic circle they do not.
Inside, the limit function is homogeneous of degree $-1$
and is asymptotically equal to the inverse of the distance
to the arctic circle in the plane normal to the $(1,1,1)$ 
direction~\cite{BaPe2011}.  We can conclude from this that 
with high probability, a random configuration from 
$\Gamma_n$ is equal to $m_0$ outside a neighborhood of size 
$o(n)$ of the arctic circle and that there is positive local
entropy everywhere inside the arctic circle.

\subsection{General double-dimer shape theorems} \label{ss:GF}

Different periodic initial conditions lead to different 
limiting shapes.  In the general case we differentiate 
(\ref{hh1})-(\ref{hh4}) and use (\ref{gencaseAB1})-(\ref{gencaseAB4}) 
to get 8 linear equations, four for $i+j+k$ odd and four 
for $i+j+k$ even.  When $i+j+k$ is odd let $g_{i,j,k} = 
e_{i,j,k}a_{i+j+k}^{-1}$ and when $i+j+k$ is even let 
$h_{i,j,k}=e_{i,j,k}a_{i+j+k}^{-1}$.  Similarly define 
$g^{(x)},h^{(x)}$, and so forth.  This allows us to compute
the general solution assuming isotropy but not the simple
form of two geometric sequences $A_n , B_n$ as before.

\subsubsection*{A generic example}

The sequence $\{ A_n , B_n \}$ is determined by $a_0 , a_1 , a_2$ 
and $b_0$.  Let us start with a specific example (for the general case see below). Let $a_0=1,b_0=1,a_1=2,a_2=3$. 
Then $b_1=15,b_2=189,a_3=378$.  The linear system is
{\small
\begin{eqnarray*}g_{(123)}&=&\frac1{105}\left(84h+4(g_{(1)}+g_{(2)}+g_{(3)})+98(h_{(12)}+h_{(13)}+h_{(23)})-95(h^{(x)}+
h^{(y)}+h^{(z)})\right)\\
h_{(123)}&=&\frac1{42}\left(-33g+46(h_{(1)}+h_{(2)}+h_{(3)})+17(g_{(12)}+g_{(13)}+g_{(23)})-38(g^{(x)}
+g^{(y)}+g^{(z)})\right)\\
g^{(x)}_{(1)}&=&\frac1{210}\left(-126h+85g_{(1)}+g_{(2)}+g_{(3)}+84h_{(23)}-85h^{(x)}
+125 h^{(y)}+125 h^{(z)}\right)\\
g^{(y)}_{(2)}&=&\frac1{210}\left(-126h+g_{(1)}+85g_{(2)}+g_{(3)}+84h_{(13)}+125h^{(x)}
-85h^{(y)}+125 h^{(z)}\right)\\
g^{(z)}_{(3)}&=&\frac1{210}\left(-126h+g_{(1)}+g_{(2)}+85g_{(3)}+84h_{(12)}+125h^{(x)}
+125 h^{(y)}-85 h^{(z)}\right)\\
h^{(x)}_{(1)}&=&\frac1{15}\left(-9g+6g_{(23)}+14h_{(1)}+8h_{(2)}+8h_{(3)}-14g^{(x)}
+g^{(y)}+g^{(z)}\right)\\
h^{(y)}_{(2)}&=&\frac1{15}\left(-9g+6g_{(13)}+8h_{(1)}+14h_{(2)}+8h_{(3)}+g^{(x)}
-14g^{(y)}+g^{(z)}\right)\\
h^{(z)}_{(3)}&=&\frac1{15}\left(-9g+6g_{(12)}+8h_{(1)}+8h_{(2)}+14h_{(3)}+g^{(x)}
+g^{(y)}-14g^{(z)}\right).
\end{eqnarray*}
}
In terms of the generating functions, this is
{\tiny
$$
\begin{pmatrix}G\\H\\G^{(x)}\\G^{(y)}\\G^{(z)}\\H^{(x)}\\H^{(y)}\\H^{(z)}\end{pmatrix}=
\left(
\begin{array}{cccccccc}
 \frac{4}{105} (x y+z y+x z) & \frac{4 x y z}{5}+\frac{14}{15} (x+y+z) 
   & 0 & 0 & 0 & -\frac{19}{21} x y z & -\frac{19}{21} x y z & 
   -\frac{19}{21} x y z \\
 \frac{17}{42} (x+y+z)-\frac{11 x y z}{14} & \frac{23}{21} (x y+z y+x z) 
   & -\frac{19}{21} x y z & -\frac{19}{21} x y z & -\frac{19}{21} x y z 
   & 0 & 0 & 0 \\
 \frac{1}{210} x \left(\frac{1}{y}+\frac{1}{z}+\frac{85}{x}\right) 
   & \frac{1}{210} x \left(\frac{84}{y z}-\frac{126}{x}\right) 
   & 0 & 0 & 0 & -\frac{17 x}{42} & \frac{25 x}{42} & \frac{25 x}{42} \\
 \frac{1}{210} y \left(\frac{85}{y}+\frac{1}{z}+\frac{1}{x}\right) 
   & \frac{1}{210} y \left(\frac{84}{x z}-\frac{126}{y}\right) 
   & 0 & 0 & 0 & \frac{25 y}{42} & -\frac{17 y}{42} & \frac{25 y}{42} \\
 \frac{1}{210} \left(\frac{1}{y}+\frac{85}{z}+\frac{1}{x}\right) z 
   & \frac{1}{210} \left(\frac{84}{x y}-\frac{126}{z}\right) z 
   & 0 & 0 & 0 & \frac{25 z}{42} & \frac{25 z}{42} & -\frac{17 z}{42} \\
 \frac{1}{15} x \left(\frac{6}{y z}-9\right) & \frac{1}{15} x 
   \left(\frac{8}{y}+\frac{8}{z}+\frac{14}{x}\right) & -\frac{14 x}{15} 
   & \frac{x}{15} & \frac{x}{15} & 0 & 0 & 0 \\
 \frac{1}{15} y \left(\frac{6}{x z}-9\right) & \frac{1}{15} y 
   \left(\frac{14}{y}+\frac{8}{z}+\frac{8}{x}\right) & \frac{y}{15} 
   & -\frac{14 y}{15} & \frac{y}{15} & 0 & 0 & 0 \\
 \frac{1}{15} \left(\frac{6}{x y}-9\right) z & \frac{1}{15} 
   \left(\frac{8}{y}+\frac{14}{z}+\frac{8}{x}\right) z & \frac{z}{15} 
   & \frac{z}{15} & -\frac{14 z}{15} & 0 & 0 & 0
\end{array}
\right)
\begin{pmatrix}G\\H\\G^{(x)}\\G^{(y)}\\G^{(z)} \\
   H^{(x)}\\H^{(y)}\\H^{(z)}\end{pmatrix}+I_0,
$$
}
where $I_0$ represents the initial conditions.

The denominator of the generating functions is given by the determinant 
of $I-M$ where $M$ is the matrix on the RHS above.  This polynomial 
factors and the zero set has two components with equations
$$P_1=63 x^2 y^2 z^2-62(x^2 y z+x y^2 z + x y z^2)
   - (x^2y^2+x^2z^2+y^2z^2)+62(x y+x z+y z)+(x^2+y^2+z^2)-63$$
and
\begin{multline*}
   P_2=198 x^2 y^2 z^2-171(x^2 y^2 z+x^2 y z^2+x y^2 z^2)
   + 5(x^2 y^2+ x^2 z^2+y^2 z^2)+ 481(x^2 y z +x y^2 z+x y z^2) -  \\
   - 513 x y z-310(x y+x z+y z)-5(x^2+y^2+z^2)+315.
\end{multline*}

Evidently this is not irreducible.  However, writing the generating
function as $N / (P_1 P_2)$, there is a soft argument that the polynomial
$N$ in the numerator contains a factor of $P_2$ and therefore that the
generating function takes the form $F = G / P_1$.  To see this, 
observe by direct computation that $P_2$ has nontrivial intersection with 
$(-1,1)^3$.  Suppose that $N$ does not vanish on the intersection
of the zero set of $P_2$ with the open unit polydisk in $\C^3$.
Then the Taylor series for $F$ fails to converge at some point
in the open unit polydisk which means that the limsup growth
of the coefficients is exponential.  The probabilistic interpretation
contradicts this.  We conclude that $N$ vanishes on the intersection
of $P_2$ with the open unit polydisk, which is a variety of
complex codimension~1.  By irreducibility of $P_2$, we see that
$N$ vanishes on the whole zero set of $P_2$.  The upshot of all
this is that we may write $F$ in reduced form as $G / P_1$.  

Before checking the hypotheses we compute the dual curve to get
a picture of what we expect to find.  Translating by taking  
$x=1+X,y=1+Y$ and $z=1+Z$ and then taking 
the leading homogeneous (cubic) part gives
$$\overline{H} = 62 (X^2 Y + X Y^2 + X^2 Z  + Y^2 Z +X Z^2 + 
   Y Z^2)+ 132 X Y Z \, .$$
The arctic boundary is the dual of this cubic curve.  Computing
it as in~\eqref{eq:dual}, we arrive at a polynomial $P^* (a,b)$
defining an algebraic curve in $\CP^2$:
{\scriptsize
\begin{eqnarray*} P^* (a,b)  & = &  923521+5125974\,ba-3044572\,a{b}^{2}-2085370\,a{b}^{5}-3044572\,{b}^{3}a-3044572\,{a}^{2}b+45167\,{a}^{2}{b}^{4} \\
&& + 5125974\,{b}^{4}a+6191514\,
{a}^{2}{b}^{2}+2233364\,{b}^{3}{a}^{3}+45167\,{a}^{4}{b}^{2}-3044572\,
{a}^{2}{b}^{3}-2085370\,{a}^{5}b \\
&& -3044572\,{a}^{3}b+5125974\,{a}^{4}b-
3044572\,{b}^{2}{a}^{3}-2085370\,a-2085370\,b+45167\,{a}^{2}+45167\,{b
}^{2} \\
&& +45167\,{b}^{4}+2233364\,{b}^{3}+2233364\,{a}^{3}-2085370\,{b}^{5
}+45167\,{a}^{4}-2085370\,{a}^{5}+923521\,{b}^{6}+923521\,{a}^{6} \, .
\end{eqnarray*}
}
The zero set of $P^*$ contains two components in $\RP^2$.  These
are shown in Figure~\ref{examplearctic}.  The parametrization of
the curve $P^*$ above is via the representation of points in 
$\RP^2$ as $(a:b:1)$.  The picture is more symmetric in 
barycentric coordinates $(\alpha, \beta , 1-\alpha - \beta)$ 
where $a = \alpha / (1 - \alpha - \beta)$ and $b = \beta / 
(1 - \alpha - \beta)$.  Referring to figure~\ref{examplearctic},
we call the region inside the inner curve the ``facet'' and
the region between the two curves the ``annular region''.
The set $K^*$ is the union of these two regions.

\begin{figure}[htbp]
\center{\includegraphics[width=3in]{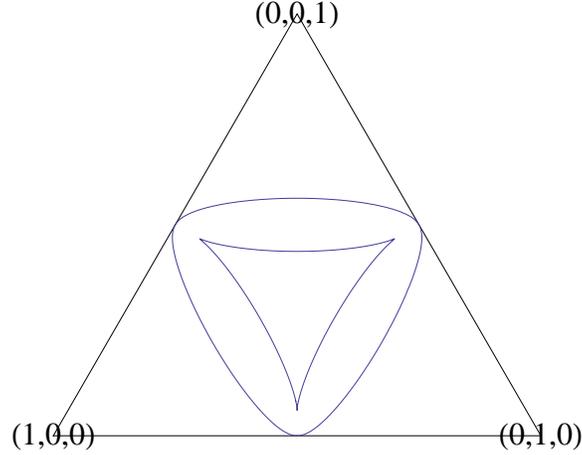}}
\caption{\label{examplearctic}The arctic boundary for the example. 
The curve is a homogeneous degree-$6$ curve.}
\label{fig:bdry}
\end{figure}

Next, we compute the inverse Fourier transforms at the points
of the intersection of the zero set of $H$ with the unit torus.
It is easily seen that these consist of $\pm (1,1,1)$ along with
a two-dimensional set of smooth points, where $\nabla H$ is nonvanishing;
we denote the smooth set by ${\cal V}$.  Because the degree of
$\overline{H}$ is~3 at $\pm (1,1,1)$ and~1 at any point of ${\cal V}$,
the inverse Fourier transform will have homogeneous degree~0
at $\pm (1,1,1)$ and $-2$ at smooth points.  

The IFT at $(1,1,1)$ is computed by an elliptic integral.
It is nonvanishing on the entire interior of $K^*$, 
varying over the annular region and remaining constant 
on the facet.  While everything else in this example
can easily be verified, this computation is not routine 
and will be detailed in forthcoming work~\cite{BaPe2013}.
The role of the contribution from $(-1, -1, -1)$ is to double 
the contribution to $a_v$ from $(1,1,1)$ when the parity of 
the integer vector $v$ is even and kill it when the parity is odd.  

Finally, to put this all together, we check the hypotheses of 
Corollary~\ref{cor:finite W}.  In fact, hypothesis~$(i)$ of
Corollary~\ref{cor:finite W} is guaranteed whenever the 
coefficients of $F$ are bounded.  To check hypothesis~$(ii)$,
we need only check that only finitely many points of ${\cal V}$
have a given logarithmic gradient.  This is true with the exception
of the projective directions $(1,1,0), (1,0,1)$ and $(0,1,1)$,
which are the midpoints of the sides of the triangle and are on
the outer boundary of the annular region.  Therefore, the hypotheses
are satisfied over the interior of $K^*$.  

Summarizing: for $v$ in the facet, $W(v) = \{ \pm (1,1,1) \}$,
while for $v$ in the annular region,
$W(v)$ is equal to the union of $\{ \pm (1,1,1) \}$ with a
finite set of points on the unit torus where $H$ vanishes
but its gradient does not.  This completes the verification 
of hypothesis~$(ii)$.  

We conclude that $K^\dagger = K^*$ is everything inside
the outer blue boundary in Figure~\ref{examplearctic}.
Outside $K^*$ there is exponential decay.  In the annular 
region, as $v \to \infty$ with $v / |v|$ tending to $\hat{v}$
and the parity of $v$ remaining even, the coefficient $a_v$ 
tends to a function $\Phi (\hat{v})$ given by an elliptic integral.
If $\hat{v}$ is in the facet, $a_v$ tends to a constant; due to
the three-fold symmetry, the constant must be $1/3$.  

\subsubsection*{In general}

In general there is a one-parameter family of curves, the
foregoing example being a generic instance.  The coefficients 
of the $8 \times 8$ array are rational functions of $a_0, b_0,
a_1$ and $a_2$; to get rid of subscripts, we denote these 
respectively by $a,b,c,d$.  Computing the characteristic
polynomial and factoring yields $P_1 P_2$ with 
\begin{eqnarray*}
P_1 & = & (C_1 + C_2) (x^2 y^2 z^2 - 1) 
   - C_1 (x^2 y^2 + x^2 z^2 + y^2 z^2 - x^2 - y^2 - z^2) \\
&&   \;\;\;\; - C_2 (x^2 y^2 + x^2 z^2 + x^2 y^2 + y^2 z^2 
   + x^2 z^2 + y^2 z^2 - xy-xz-yz) \\[1ex]
C_1 & := & a b^3 c d \\[1ex]
C_2 & := & b^6 + c^6 + 3 a c^4 d + 3 a^2 c^2 d^2 + 2 a b^3 c d
   + 2 b^3 c^3 + a^3 d^3 \, .
\end{eqnarray*}
In a neighborhood of the values $a=b=1, c=2, d=3$ from the 
worked example, the same argument shows there must be a factor
of $P_2$ in the numerator, so that the reduced generating function
is rational with denominator $P_1$; by analytic continuation, 
this is true for all parameter values.

Recentering at $(1,1,1)$ via the substitution $x = 1+X, y = 1+Y, z = 1+Z$
and taking the lowest degree homogeneous term at the origin yields
$$\overline{H} 
   = (1 - \theta) (X^2 Y + X^2 Z + Y^2 X + Y^2 Z + Z^2 X + Z^2 Y)
   + (2 + 6 \theta) X Y Z$$
where $\theta := C_1 / (C_1 + C_2)$.  We rewrite this as a constant times 
\begin{equation} \label{eq:lambda}
X^2 Y + X^2 Z + Y^2 X + Y^2 Z + Z^2 X + Z^2 Y + \lambda X Y Z
\end{equation}
where 
\begin{eqnarray*}
\lambda & = & \frac{2 + 6 \theta}{1 - \theta} \\[1ex]
& = & 
2\,{\frac {2\,{c}^{3}{b}^{3}+6\,acd{b}^{3}+{c}^{6}+3\,a{c}^{4}d+3\,{a}
^{2}{c}^{2}{d}^{2}+{a}^{3}{d}^{3}+{b}^{6}}{{a}^{3}{d}^{3}+{b}^{6}+3\,{
a}^{2}{c}^{2}{d}^{2}+3\,a{c}^{4}d+2\,acd{b}^{3}+{c}^{6}+2\,{c}^{3}{b}^
{3}}} \, .
\end{eqnarray*}
As $a,b,c,d$ vary over positive reals, the quantity $\lambda$ 
varies over the half-open interval $(2,3]$.  It reaches its 
maximum value when $a=1, b=2 \sqrt{3}, c = \sqrt{3}$ and $d=9$
(or at any scalar multiple of this 4-tuple of values) and
corresponds to the initial conditions~\eqref{powerof3}.

When $\lambda = 3$, the polynomial $\overline{H}$ factors as
$(X+Y+Z) (XY + XZ + YZ)$ and when $\lambda = 2$ it factors
as $(X+Y)(X+Z)(Y+Z)$.  However, for $2 < \lambda < 3$ this
polynomial is irreducible with the zero set looking 
like a cone together with a ruffled collar.  Figure~\ref{fig:collars}
shows two examples: on the left $\lambda = 5/2$ and on the right
$\lambda = 66/31$, a value much nearer to 2 which is the value
from the example with $a=b=1, c=2$ and $d=3$.
\begin{figure}[htbp]
\centering
\includegraphics[width=3in]{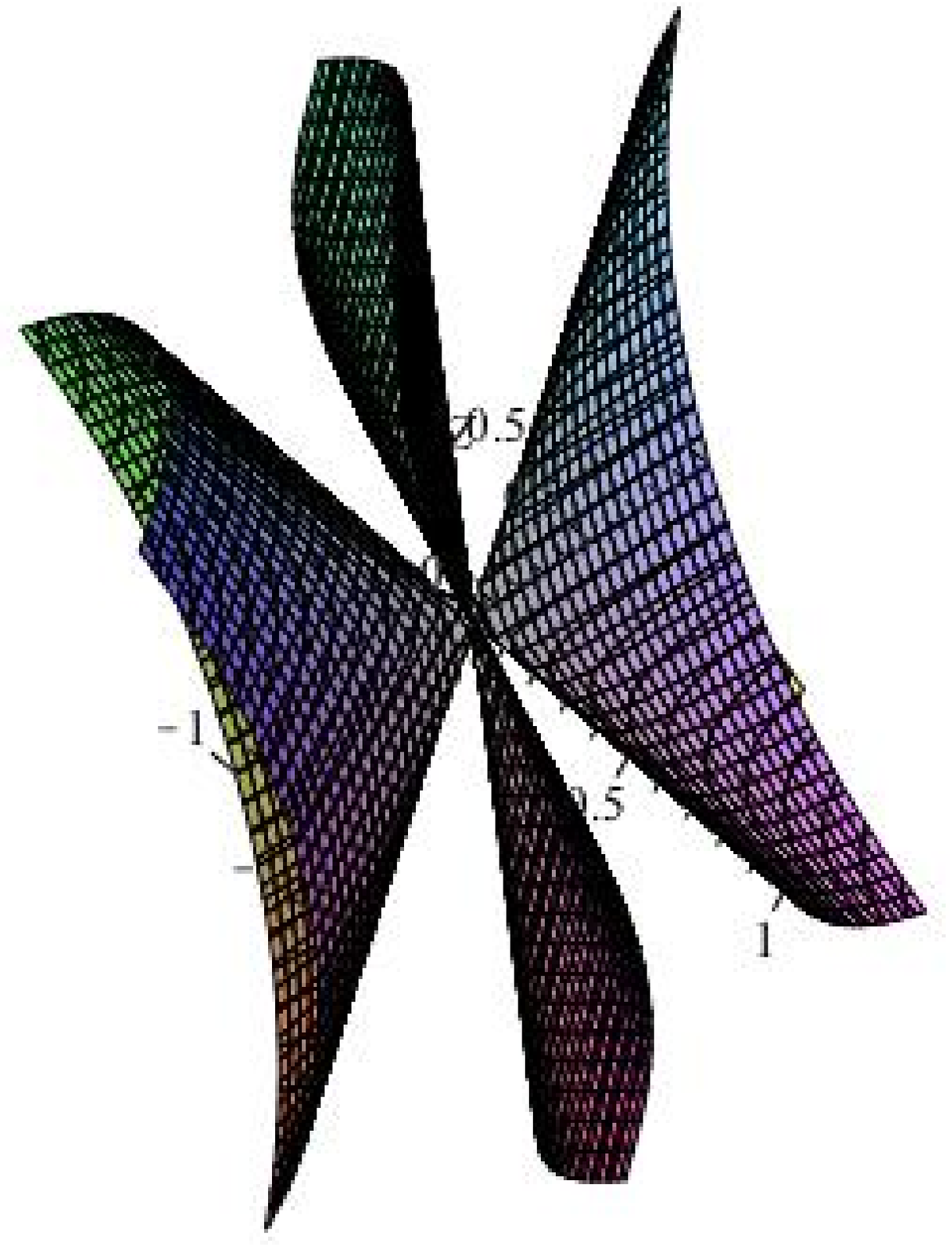}
\includegraphics[width=3in]{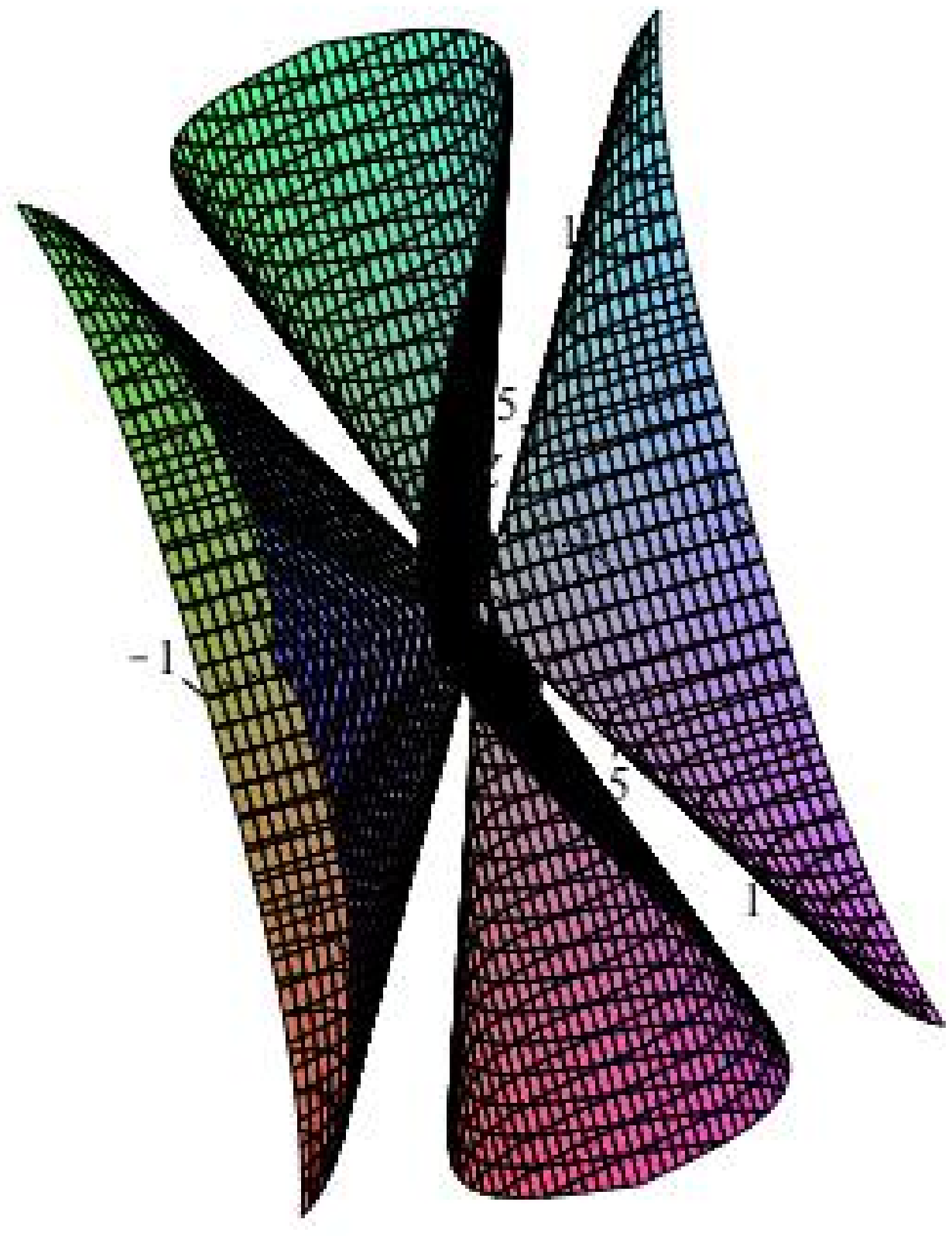}
\caption{\label{fig:collars} As $\lambda \to 2$ the collar becomes
more ruffled}
\end{figure}
At $\lambda = 3$ the dual shape is, as we have seen, the inscribed
circle; the facet region in this case is degenerate, having shrunk to
a point.  As $\lambda$ decreases from $3$ to $2$, the circle deforms 
to look more like an inscribed triangle and the facet grows, approaching
the outer curve.

\section{Ising model}\label{Ising}

In this section we will show how the Ising-Y-Delta move for the 
Ising model is a special case of the hexahedron recurrence.
We begin by recalling the definition of the Ising model.
Let $G = (V,E)$ be a finite graph with $c \, : \, E\to\R_+$ 
a positive weight function on edges.  The Ising model is a 
probability measure $\mu$ on the configuration space 
$\Omega = \{\pm1\}^{V}$.  A configuration of spins
$\sigma \in \Omega$ has probability
\begin{equation}
\label{isingdef}
\mu(\sigma) = \frac{1}{Z} \prod_{e = \{ v , v' \} \in E} 
   c(e)^{(1+\sigma(v)\sigma(v'))/2},
\end{equation}
where the product is over all edges in $E$ and the partition 
function $Z$ is the sum of the unweighted probabilities
$\prod c(e)^{(1 + \sigma (v) \sigma(v'))/2}$
over all configurations $\sigma$.  In other words, the probability 
of a configuration is proportional to the product of the
edge weights of those edges where the spins are equal.
The Ising model originated as a thermodynamical ensemble with 
energy function $H(\sigma) = -\sum_e \sigma (v) \sigma(v') J(e)$:
take $J(e) = (T/2) \log c(e)$ where $T$ is Boltzmann's
constant times the temperature.

\old{
If $G$ is a subgraph of a larger graph $\tilde G$ we often fix the 
spins outside and adjacent to $\G$; we then get a different measure
$\mu_B$ depending on these boundary spins $B$, defined by the same formula
(\ref{isingdef}) but taking the product over edges of $\G$ \emph{and} 
edges connecting vertices of $\G$ to vertices outside of $\G$. 
}

\subsection{Ising-Y-Delta move}

The Ising-Y-Delta move on the weighted graph $G = (V,E,c)$ 
transforms the graph the same way as does the Y-Delta move for
electrical networks but transforms the edge weights differently.
The transformation is depicted in Figure~\ref{YD}.  
\begin{figure}[htbp]
\center{\includegraphics[width=3in]{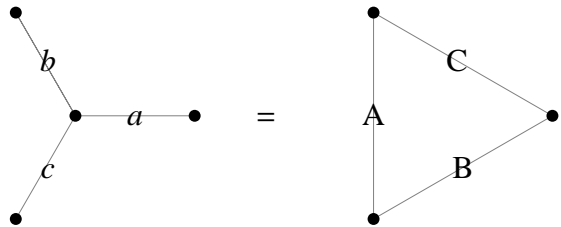}}
\caption{\label{YD}The Y-Delta move.}
\end{figure}
As is apparent, it converts a Y-shape to a triangle shape, 
or vice versa.  The old weights $(a,b,c)$ are replaced by 
weights $(A,B,C)$ defined by
\begin{eqnarray}
A&=&\sqrt{\frac{(abc+1)(a+bc)}{(b+ac)(c+ab)}}\label{Aa1}\\
B&=&\sqrt{\frac{(abc+1)(b+ac)}{(a+bc)(c+ab)}}\\
C&=&\sqrt{\frac{(abc+1)(c+ab)}{(a+bc)(b+ac)}}\label{Aa3}.
\end{eqnarray}

\begin{lemma}
Equations~\eqref{Aa1}--\eqref{Aa3} are the unique positive
solution to 
\begin{equation}
\label{projectiveeqn}[abc+1:a+bc:b+ac:c+ab] = [ABC:A:B:C]
\end{equation}
Referring to Figure \ref{YD}, if the Ising edge weights 
satisfy~\eqref{projectiveeqn} then the Ising-Y-Delta move 
preserves the measure $\mu$ in the sense that
there is a bijection on the space of configurations of the
graph $G$ and the graph $G'$ obtained from $G$ by applying
the Ising-Y-Delta transformation which maps the Ising
measure $\mu_G$ to the Ising measure $\mu_{G'}$.  The same
is true with boundary conditions.
\end{lemma}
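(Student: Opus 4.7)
The plan is to handle Part 1 (the algebraic claim about the projective equation) first, then to reduce Part 2 (the measure-preservation claim) to the classical star-triangle identity for the Ising model, and to verify that identity by a small case analysis.

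For Part 1, the projective equation (\ref{projectiveeqn}) is equivalent to asserting the existence of a scalar $\lambda \neq 0$ with $A = \lambda(a+bc)$, $B = \lambda(b+ac)$, $C = \lambda(c+ab)$, and $ABC = \lambda(abc+1)$. Multiplying the first three and equating with the fourth forces $\lambda^{2} = (abc+1)/[(a+bc)(b+ac)(c+ab)]$, which has a unique positive root; taking this root and substituting back recovers formulas (\ref{Aa1})--(\ref{Aa3}). Uniqueness follows because the projective condition determines $(A,B,C)$ up to a common scalar, and positivity pins down the sign of $\lambda$.

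For Part 2, let $w$ denote the central vertex of the Y in $G$, with neighbors $v_1,v_2,v_3$ joined by edges of weights $a,b,c$; in $G'$ the triangle on $\{v_1,v_2,v_3\}$ carries weights $A,B,C$ on the edges opposite $a,b,c$ respectively. The map I propose is the projection $\pi:\Omega_G\to\Omega_{G'}$ that forgets $\sigma_w$; the claim is $\pi_*\mu_G=\mu_{G'}$, which is the precise sense in which the Ising measure is preserved. This reduces to the \emph{star-triangle identity}: for every $(\sigma_1,\sigma_2,\sigma_3)\in\{\pm1\}^3$,
\begin{equation*}
\sum_{\sigma_w=\pm1} a^{(1+\sigma_w\sigma_1)/2}\,b^{(1+\sigma_w\sigma_2)/2}\,c^{(1+\sigma_w\sigma_3)/2}
= K\cdot A^{(1+\sigma_2\sigma_3)/2}\,B^{(1+\sigma_1\sigma_3)/2}\,C^{(1+\sigma_1\sigma_2)/2}
\end{equation*}
for some constant $K$ independent of the $\sigma_i$. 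Granting this, the fact that $w$ has no neighbors outside the Y means the identity extends to graphs with arbitrary boundary conditions, since every other edge contributing to the Ising weight is unaffected by summing out $\sigma_w$.

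To verify the star-triangle identity, observe that both sides are invariant under the global spin flip $\sigma_i\mapsto-\sigma_i$, so it suffices to consider four equivalence classes of spin patterns: all three spins equal, or exactly one of $\sigma_1,\sigma_2,\sigma_3$ differing from the other two. Direct evaluation in each case yields precisely the four conditions $abc+1=K\cdot ABC$, $a+bc=K\cdot A$, $b+ac=K\cdot B$, $c+ab=K\cdot C$, which are exactly the projective relation (\ref{projectiveeqn}) with $\lambda=1/K$. The main obstacle is purely bookkeeping, keeping straight which of $A,B,C$ is opposite which of $a,b,c$ under the Y-to-$\Delta$ geometry so that the exponents line up correctly. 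By the uniqueness established in Part 1, the positive weights $(A,B,C)$ realizing this are the ones in (\ref{Aa1})--(\ref{Aa3}), completing the proof.
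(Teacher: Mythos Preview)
Your proof is correct and follows essentially the same approach as the paper: for Part~1 the paper simply says ``a calculation'' where you spell out the elimination of $\lambda$, and for Part~2 both you and the paper reduce to checking the four spin patterns on the outer vertices (up to global flip), sum over the central spin, and read off exactly the projective relation~(\ref{projectiveeqn}). The only cosmetic difference is that you frame the measure-preservation as the pushforward under the projection forgetting $\sigma_w$, whereas the paper speaks loosely of a ``bijection''; your formulation is the precise content of what is actually being proved.
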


\begin{proof} 
The first statement is a calculation. For the second statement, observe that up to a global sign 
there are four possible assignments of spins to the three 
vertices of the triangle (the three outer vertices of the Y). 
For example if the three spins are $+++$ then the edges of the 
triangle contribute weight $ABC$ to the weight of the configuration; 
in this case for the Y graph the central spin is either
$+$, in which case the contribution is abc, or $-$, in which case 
the contribution is $1$. Similarly the contributions for the 
$++-, +-+$, and $-++$ configuration are $C,B,A$ and $c+ab,b+ac,a+bc$, 
respectively.  As long as the quadruple of local contributions from 
the $Y$ is proportional to that of the $\Delta$ the measures
will be the same for the two graphs. 
\end{proof}

\subsection{Kashaev's relation}

Suppose that $\G = (V,E)$ is any planar graph with positive
weight function $c : E \to \R^+$.  Kashaev~\cite{Kash} showed 
how the space of edge weights for the Ising model on $\G$ can 
be parametrized differently using weights on the vertices and faces,
rather than the edges. This parametrization has the advantage that the
Y-Delta move has a simpler form in these new coordinates.  
Let $f$ be any positive function on vertices and faces of $\G$.
On each edge $e=vv'$ with adjacent faces $F,F'$,
let $b(e)$ be the ratio $b(e)=\frac{f(v)f(v')}{f(F)f(F')}.$
Kashaev associated the weight function $w$ with $f$ where
$w(e)$ is the positive solution of $(w - 1/w)^2 / 4 = b(e)$.

\begin{lemma}[\cite{Kash}] Let $f_0,\dots,f_7$ be the values 
at the faces and vertices involved in a Y-Delta transformation, 
as in Figure \ref{YDf}.  Then we have the identity 
\begin{multline}f_0^2f_7^2+f_1^2f_4^2+f_2^2f_5^2+f_3^2f_6^2
   - 2(f_1f_2f_4f_5+f_1f_4f_3f_6+f_2f_3f_5f_6)\\
   - 2f_0f_7(f_1f_4+f_2f_5+f_3f_6)-4(f_0f_4f_5f_6+f_7f_1f_2f_3)=0 \, .
   \label{Krec} \end{multline}
\end{lemma}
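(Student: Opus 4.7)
This lemma is essentially the Kashaev recurrence (\ref{KashaevEQ}) rewritten with the specific labels that arise under a Y--Delta move in Kashaev's face-and-vertex parametrization of the Ising model; indeed, under the correspondence $f \leftrightarrow f_0$, $f_{(123)} \leftrightarrow f_7$, $f_{(i)} \leftrightarrow f_i$ and $f_{(23)} \leftrightarrow f_4, f_{(13)} \leftrightarrow f_5, f_{(12)} \leftrightarrow f_6$, equation (\ref{Krec}) is identical to (\ref{KashaevEQ}). I would prove it by direct substitution into the Y--Delta equations of the previous lemma and elimination of the six edge weights.

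From Figure \ref{YDf} one reads off how $f_0,\ldots,f_7$ sit at the faces and vertices incident to each edge. In the natural labeling, $f_0$ is at the central Y-vertex, $f_7$ at the central Delta-face, $f_1,f_2,f_3$ at the three outer vertices, and $f_4,f_5,f_6$ at the three outer faces, with $f_i$ and $f_{i+3}$ opposite. Kashaev's formula $(w-1/w)^2/4 = f(v)f(v')/(f(F)f(F'))$ then gives six equations: three for the Y-weights, e.g.\ $(a-1/a)^2/4 = f_0 f_1 /(f_5 f_6)$, and three for the Delta-weights, e.g.\ $(A-1/A)^2/4 = f_2 f_3/(f_7 f_4)$. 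The previous lemma gives $A^2 = (abc+1)(a+bc)/[(b+ac)(c+ab)]$ and two cyclic analogues. Using $(A-1/A)^2 = A^2 + 1/A^2 - 2$, the Kashaev equation for $A$ becomes
\begin{equation*}
\frac{[(abc+1)(a+bc)]^2 + [(b+ac)(c+ab)]^2}{(abc+1)(a+bc)(b+ac)(c+ab)} \;=\; \frac{4 f_2 f_3}{f_7 f_4} + 2,
\end{equation*}
and analogous relations come from $B$ and $C$. Substituting the Kashaev expressions for $a, b, c$ into the left-hand side and eliminating all remaining edge-weight quantities reduces everything to a polynomial identity in $f_0,\ldots,f_7$; I claim this identity is (\ref{Krec}).

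The elimination is feasible thanks to the pair of identities
\begin{eqnarray*}
(abc+1)(a+bc) + (b+ac)(c+ab) &=& abc\bigl[(b+1/b)(c+1/c) + 2(a+1/a)\bigr], \\
(abc+1)(a+bc) - (b+ac)(c+ab) &=& abc\,(b-1/b)(c-1/c),
\end{eqnarray*}
both verified by direct expansion. These rewrite $(abc+1)(a+bc)(b+ac)(c+ab)$ and its factors in terms of the Kashaev quantities $(w\pm 1/w)^2$, which are rational in the $f$'s, together with the single symmetric product $(a+1/a)(b+1/b)(c+1/c)$, whose square is also rational in the $f$'s. Squaring once more to clear that last radical and collecting terms produces (\ref{Krec}), a polynomial of total degree only four in the $f$'s.

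\textbf{Main obstacle.} The calculation is lengthy but conceptually routine; the bookkeeping is the only real difficulty. The structural reason the final identity has such low degree is that the Kashaev parametrization is invariant under the gauge $f|_{\text{vertices}} \mapsto \lambda f|_{\text{vertices}}$, $f|_{\text{faces}} \mapsto \lambda^{-1} f|_{\text{faces}}$, and the Y--Delta move exchanges the central vertex and central face $f_0 \leftrightarrow f_7$; these symmetries force the output to have the form displayed in (\ref{Krec}). The identity was originally derived by Kashaev in \cite{Kash}, and, as will be shown later in this section, it can also be recovered from the hexahedron relation (\ref{hh4}) by the specialization that reduces the hexahedron recurrence to the Ising $Y$--$\Delta$ transformation.
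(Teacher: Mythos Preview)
Your approach is the same as the paper's: substitute the Kashaev parametrization $(a-1/a)^2/4 = f_0 f_1/(f_5 f_6)$ (and its analogues) together with the Y--Delta relations (\ref{Aa1})--(\ref{Aa3}) and reduce to a polynomial identity in the $f_i$. The paper's proof is a single sentence to this effect (``easy enough to check from (\ref{Aa1})--(\ref{Aa3})''); your write-up supplies more of the intermediate algebra, and the auxiliary identities you state are correct, but the method is identical.
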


\begin{figure}[htbp]
\center{\includegraphics[width=3in]{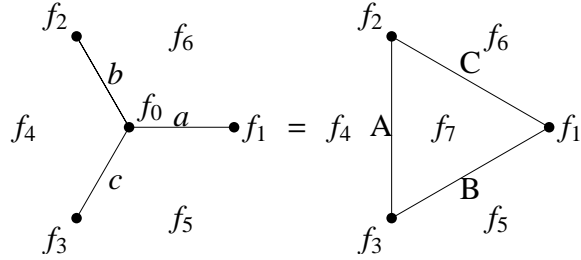}}
\caption{\label{YDf}The $f$ variables in the Y-Delta move.}
\end{figure}

\begin{proof} This is easy enough to check from (\ref{Aa1})-(\ref{Aa3}), 
with $\frac{(a-1/a)^2}{4}=\frac{f_0f_1}{f_5f_6}$, {\it etc}.
\end{proof}

We note that the remarkable formula (\ref{Krec}) has another origin: 
it is the algebraic identity relating the principal minors of a 
symmetric matrix, as follows.
\begin{lemma} Let $M$ be an $n\times n$ matrix and for $S\subset\{1,2,3\}$ 
let $M_S$ be the principal minor of $M$ which is the determinant of the 
matrix obtained from $M$ by removing rows and columns indexed by $S$. 
Then for the $8$ subsets of $S$ the identity (\ref{Krec}) holds with 
$$f_0 = M_\phi, f_1 = M_1, f_2 = M_2, f_3 = M_3, f_4 = -M_{23}, f_5 = -M_{13},
   f_6 = -M_{12}, f_7 = -M_{123} \, .$$
\end{lemma}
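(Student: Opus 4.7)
The identity~(\ref{Krec}), under the given substitution, is a universal polynomial relation in the $8$ principal minors $M_S$, $S\subset\{1,2,3\}$. My plan has two steps: reduce to the case $n=3$ via the Schur complement, then verify the resulting $3\times 3$ identity directly. Throughout I take $M$ to be symmetric, which is the natural setting (indicated by the preceding paragraph) in which the identity actually holds.

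\textbf{Step 1 (Schur complement reduction).} Partition $M = \begin{pmatrix} A & B \\ B^T & D \end{pmatrix}$ with $A$ the top-left $3\times 3$ block, and first assume $D$ is invertible; the general case follows by continuity since~(\ref{Krec}) is polynomial in the entries of $M$. Set $N := A - B D^{-1} B^T$, a symmetric $3\times 3$ matrix, and $\Delta := \det D$. Applying the standard block-determinant identity to each submatrix of $M$ obtained by deleting the rows and columns indexed by $S\subset\{1,2,3\}$ gives
\begin{equation*}
M_S \;=\; \Delta \cdot N_S,
\end{equation*}
where $N_S$ denotes the analogous principal minor of $N$ (with conventions $N_\emptyset = \det N$ and $N_{\{1,2,3\}} = 1$, so $M_{\{1,2,3\}} = \Delta$). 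Each monomial in~(\ref{Krec}) has total degree $4$ in the $f_i$'s, so the factor $\Delta^4$ cancels uniformly and the identity for $M$ reduces to the same identity applied to the $3\times 3$ symmetric matrix $N$.

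\textbf{Step 2 ($3\times 3$ verification).} Parametrize the symmetric $3\times 3$ matrix $N$ by its six entries and write out the eight principal minors $N_S$ as explicit polynomials (of degrees $3,2,2,2,1,1,1,0$ in these entries). Substituting these into the LHS of~(\ref{Krec}), with the prescribed signs $f_4 = -N_{23}$, $f_5 = -N_{13}$, $f_6 = -N_{12}$, $f_7 = -N_{123}$, produces a polynomial of degree $12$ in six variables which one checks vanishes identically. Equivalently, the resulting expression is the Cayley $2\times 2\times 2$ hyperdeterminant of the principal-minor tensor of a symmetric $3\times 3$ matrix, a classical vanishing identity (already implicit in Cayley and rediscovered in many settings since, e.g.\ by Holtz and Sturmfels in their study of principal-minor varieties). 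The main obstacle is this polynomial verification itself: expanding by hand involves many cancellations among products of the $N_S$; in practice a symbolic-algebra computation is the cleanest route, and one may alternatively cite the hyperdeterminantal identity directly.
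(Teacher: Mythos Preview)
Your proof is correct and follows the same two-step plan as the paper: reduce the $n\times n$ statement to a $3\times 3$ identity, then verify that identity directly. The only difference is the reduction tool. The paper invokes Jacobi's complementary-minor identity $M_S/M_\emptyset=(M^{-1})_{S^c}$, which expresses each $M_S$ (up to the common factor $M_\emptyset$) as a principal minor of the top $3\times 3$ block of $M^{-1}$; you instead use the Schur complement $N=A-BD^{-1}B^T$ to write $M_S=\Delta\cdot N_S$. These are equivalent viewpoints---indeed $N^{-1}$ is exactly that top $3\times 3$ block of $M^{-1}$---but your formulation is marginally cleaner, since the correspondence $M_S\mapsto N_S$ preserves the labeling by $S$ and avoids the implicit use of the $S\leftrightarrow\{1,2,3\}\setminus S$ symmetry of~(\ref{Krec}) that the Jacobi route needs. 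Either way the heart of the argument is the same direct $3\times 3$ computation.
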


For an explanation of this fact, as well as the analogous facts for 
the Hirota and Miwa equations, see \cite{GK3}. 

\begin{proof}
One checks this easily for a $3\times 3$ matrix. For an 
$n\times n$ matrix $M$, recall that Jacobi's identity 
relates minors of $M$ with complementary minors of $M^{-1}$: 
$$\frac{M_S}{M_{\phi}} = (M^{-1})_{S^c}.$$
(In general there is a sign involved but for principal minors this sign is $+1$.)
The equation (\ref{Krec}) holds for the $3\times 3$ submatrix of $M^{-1}$ indexed by $S$; this implies that
it holds for $M$ for the complementary minors.
\end{proof}

When placed on a lattice, the relation (\ref{Krec}) has an interpretation
as a recurrence for  stepped surfaces.  Previously we associated 
a graph $\Gamma (U)$ with each stepped surface 
$\partial U$; now we associate another graph $\ising (U)$.  
The vertices of $\ising (U)$ are taken to be the even vertices 
of $\partial U$ and the edges of $\ising (U)$ are the diagonals 
of the faces of $\partial U$ whose endpoints are even.  
Because every face of $\partial U$ is a quadrilateral, the graph 
$\ising U$ is planar.  If $f : \Z^d \to \R^+$ is a positive function, 
define edge weight $w (e)$ on an edge $e$ of $\ising (U)$
to be the positive solution to $(w - 1/w)^2 / 4 = b$ where
$b = f(v) f(v') / (f(u) f(u'))$, where $e = \{ v , v' \}$ and
where $u$ and $u'$ are the other two vertices of the face of
$\partial U$ on which $e$ lies.  The previous lemma results
in the following lattice relation, known as {\em Kashaev's 
difference equation}.  

\begin{lemma}
Let $U \subseteq U'$ be stepped solids differing by a single cube.
\begin{enumerate}
\item The graph $\ising (U')$ differs from $\ising (U)$ by
a Y-Delta move: Y to Delta if the bottom vertex of the added 
cube was even and Delta to Y otherwise.
\item If $e$ is a weight function on the edges of $\ising (U)$,
extended by the Ising-Y-Delta relations to the edges of $\ising (U')$,
and if $f$ is a function on the vertices of $\Z^d$ inducing $e$
on the edges of $\ising (U)$ and $\ising (U')$ then at the 
eight vertices of the added cube, $f$ satisfies the relations
(\ref{KashaevEQ}).
\end{enumerate}
$\noproof$
\end{lemma}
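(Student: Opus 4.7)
The proof splits into two parts: Part 1 is a local combinatorial verification about diagonals of stepped-surface faces, and Part 2 follows algebraically from the preceding Kashaev identity (\ref{Krec}).

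For Part 1, I would fix the added cube with bottom vertex $v$ and label its eight vertices $v_S := v + \sum_{\alpha \in S} e_\alpha$ for $S \subseteq \{1,2,3\}$; parities alternate with $|S|$. Exactly the three faces containing $v$ lie on $\partial U$, the three opposite faces (containing $v_{\{1,2,3\}}$) lie on $\partial U'$, and all other faces of $\partial U$ and $\partial U'$ coincide, so only those six face-diagonals can change. Assuming $v$ is even, the two even corners of each of the three old faces are $v$ itself and the antipodal corner $v_{\{i,j\}}$; the three old diagonals therefore form a Y centered at $v$ with outer vertices $v_{\{1,2\}},v_{\{1,3\}},v_{\{2,3\}}$. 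On each new face, since $v_{\{1,2,3\}}$ is odd, the two even corners are the two $v_{\{i,j\}}$'s on that face, giving three diagonals that form a triangle on those same three vertices. Thus $\ising(U')$ is obtained from $\ising(U)$ by a Y-to-Delta substitution; the case $v$ odd is symmetric (Delta-to-Y on $v_{\{1\}},v_{\{2\}},v_{\{3\}}$).

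For Part 2, I would invoke the preceding Kashaev lemma applied to this Y-Delta. Under the parity decomposition, vertices of $\ising(U)$ are the even vertices of $\partial U$ while faces of $\ising(U)$ correspond bijectively to the odd vertices (each odd vertex lies on exactly one face of $\partial U$ and, together with its antipode in that face, gives the face's Kashaev face-weight); restriction of $f:\Z^3\to\R^+$ to even resp.\ odd vertices therefore recovers Kashaev's vertex- and face-parametrization, and similarly for $\ising(U')$. Since by hypothesis $f$ induces the edge weights $e$ (and its Y-Delta transform) on both surfaces, the lemma asserts (\ref{Krec}) for the eight Kashaev values at the Y-Delta neighborhood. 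Making the identification $f_0 = f(v)$, $f_7 = f(v_{\{1,2,3\}})$, and $f_\alpha = f(v_{\{\alpha\}})$, $f_{\alpha+3} = f(v_{\{1,2,3\}\setminus\{\alpha\}})$ for $\alpha \in \{1,2,3\}$, a term-by-term comparison shows that (\ref{Krec}) is precisely (\ref{KashaevEQ}).

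The main obstacle is purely bookkeeping: pinning down the correct bijection between Kashaev's abstract indices $0,\ldots,7$ and the cube vertices, so that the monomials line up as claimed. The structural clue is that each of the four antipodal pairs on the cube has opposite parities and contributes one vertex-plus-one-face pair to the Kashaev picture, with $(v, v_{\{1,2,3\}})$ giving the Y-center/Delta-interior pair $(f_0, f_7)$ and each $(v_{\{\alpha\}}, v_{\{1,2,3\}\setminus\{\alpha\}})$ pairing a Y-sector face with an outer Y-vertex (equivalently, a Delta-outside face with a Delta vertex), matching the $(f_\alpha, f_{\alpha+3})$ pairings visible in the monomials of (\ref{Krec}) and (\ref{KashaevEQ}).
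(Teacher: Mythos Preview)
Your argument is correct and is exactly the direct verification the authors have in mind; note that the paper gives no proof of this lemma at all (it is marked with the end-of-proof box immediately after the statement), so there is no alternative approach to compare against. Your bookkeeping for Part~1 (three lower faces replaced by three upper faces, with the even diagonals forming a $Y$ at $v$ versus a $\Delta$ on the $v_{\{i,j\}}$) and your index-matching for Part~2 (antipodal cube pairs $\leftrightarrow$ $(f_\alpha,f_{\alpha+3})$ pairs in~(\ref{Krec})) are both right.
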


Kashaev's relation is almost a recurrence: $f_{(123)}$ is determined
from the other seven values up to the choice of root of a 
quadratic equation.  It turns out there is a canonical choice.

\begin{proposition}
Let 
$$
X = \sqrt{ff_{(23)}+f_{(2)}f_{(3)}},\quad
Y=\sqrt{ff_{(13)}+f_{(1)}f_{(3)}},\quad
Z=\sqrt{ff_{(12)}+f_{(1)}f_{(2)}}.
$$
Then the recurrence~\eqref{KashaevEQ} 
can be written
\begin{eqnarray}\label{Krecreduced1}
X_{(1)}&=&\frac{f_{(1)}X+YZ}{f}\\Y_{(2)} & = &
   \frac{f_{(2)}Y+XZ}{f}\\Z_{(3)}&=&\frac{f_{(3)}Z+XY}{f}
   \label{Krecreduced3}\\
f_{(123)} & = & \frac{2f_{(1)}f_{(2)}
   f_{(3)}+ff_{(1)}f_{(23)}+ff_{(2)}f_{(13)}+ff_{(3)}
   f_{(12)}+2XYZ}{f^2} \, . \label{Krecreduced4}
\end{eqnarray}
$\noproof$
\end{proposition}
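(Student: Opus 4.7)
The plan is to view Kashaev's equation \eqref{KashaevEQ} as a quadratic in $f_{(123)}$, complete the square, and identify the resulting perfect square on the right as $4X^{2}Y^{2}Z^{2}$. With the appropriate root choice this immediately yields \eqref{Krecreduced4}, and \eqref{Krecreduced1}--\eqref{Krecreduced3} then reduce to polynomial identities modulo \eqref{Krecreduced4}.

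Concretely, collect terms in \eqref{KashaevEQ} by their degree in $f_{(123)}$. The coefficient of $f_{(123)}^{2}$ is $f^{2}$; the coefficient of $f_{(123)}$ is $-2[f(f_{(1)}f_{(23)}+f_{(2)}f_{(13)}+f_{(3)}f_{(12)})+2f_{(1)}f_{(2)}f_{(3)}]$; and the constant term $R$ is the remainder. Completing the square, \eqref{KashaevEQ} becomes
\begin{equation*}
\bigl(f^{2}f_{(123)}-\bigl[f(f_{(1)}f_{(23)}+f_{(2)}f_{(13)}+f_{(3)}f_{(12)})+2f_{(1)}f_{(2)}f_{(3)}\bigr]\bigr)^{2}=\bigl[f(\cdots)+2f_{(1)}f_{(2)}f_{(3)}\bigr]^{2}-f^{2}R.
\end{equation*}
The key identity to verify is then the purely algebraic claim
\begin{equation*}
\bigl[f(f_{(1)}f_{(23)}+f_{(2)}f_{(13)}+f_{(3)}f_{(12)})+2f_{(1)}f_{(2)}f_{(3)}\bigr]^{2}-f^{2}R=4(ff_{(23)}+f_{(2)}f_{(3)})(ff_{(13)}+f_{(1)}f_{(3)})(ff_{(12)}+f_{(1)}f_{(2)}),
\end{equation*}
so that the right side is exactly $4X^{2}Y^{2}Z^{2}$. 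Taking positive square roots (which is forced for positive initial data, since $X,Y,Z$ are defined as positive roots and $f_{(123)}$ is intended positive) and solving for $f_{(123)}$ yields \eqref{Krecreduced4}. This polynomial identity is the main technical obstacle; I would verify it by expanding both sides as symmetric polynomials in $f_{(1)},f_{(2)},f_{(3)}$ with coefficients depending on $f,f_{(12)},f_{(13)},f_{(23)}$ and collecting monomials, or simply by a one-line computer algebra check.

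Once \eqref{Krecreduced4} is established, the three equations \eqref{Krecreduced1}--\eqref{Krecreduced3} are then obtained by a cheap reduction. For \eqref{Krecreduced1}, note that by definition $X_{(1)}^{2}=f_{(1)}f_{(123)}+f_{(12)}f_{(13)}$, so squaring the proposed identity $X_{(1)}=(f_{(1)}X+YZ)/f$ and clearing denominators gives
\begin{equation*}
f^{2}\bigl(f_{(1)}f_{(123)}+f_{(12)}f_{(13)}\bigr)=f_{(1)}^{2}X^{2}+2f_{(1)}XYZ+Y^{2}Z^{2}.
\end{equation*}
Substituting $X^{2}=ff_{(23)}+f_{(2)}f_{(3)}$, $Y^{2}Z^{2}=(ff_{(13)}+f_{(1)}f_{(3)})(ff_{(12)}+f_{(1)}f_{(2)})$ on the right and $f^{2}f_{(1)}f_{(123)}$ from \eqref{Krecreduced4} on the left, the $2f_{(1)}XYZ$ terms cancel and the remaining identity is a simple monomial check. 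Equations \eqref{Krecreduced2} and \eqref{Krecreduced3} follow by the cyclic symmetry of the setup. Finally, a remark is warranted to explain why the positive-root convention is the canonical choice: it is the unique choice that makes $X_{(1)},Y_{(2)},Z_{(3)}$ continue to be positive square roots when the initial data are positive, which amounts to saying that the sign in \eqref{Krecreduced4} is the one preserving positivity under the iteration.
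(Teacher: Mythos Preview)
Your argument is correct. The paper itself offers no proof beyond the phrase ``a simple verification,'' so your completing-the-square computation is exactly the kind of verification intended, and you have carried it out carefully: the identity $B^{2}-f^{2}R=4X^{2}Y^{2}Z^{2}$ indeed holds (it expands to $4[f^{3}f_{(12)}f_{(13)}f_{(23)}+f^{2}\sum f_{(i)}f_{(j)}f_{(ik)}f_{(jk)}+ff_{(1)}f_{(2)}f_{(3)}\sum f_{(i)}f_{(jk)}+f_{(1)}^{2}f_{(2)}^{2}f_{(3)}^{2}]$ on both sides), and your derivation of \eqref{Krecreduced1}--\eqref{Krecreduced3} from \eqref{Krecreduced4} by squaring and cancelling the $2f_{(1)}XYZ$ term is clean. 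The one point worth stating more explicitly is that the squared equations only give $fX_{(1)}=\pm(f_{(1)}X+YZ)$, and the positive sign is forced because $X_{(1)}$ is \emph{defined} as a positive square root; you say this, but it is the entire content of the paper's remark that ``there is a canonical choice'' of root, so it deserves emphasis.
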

The proof is a simple verification.

\subsection{Embedding Kashaev's recurrence in the hexahedron recurrence}

The proof
of all results in this section are straightforward substitutions
and are omitted.

\begin{proposition} \label{Isingtodimer}
Suppose $f : \flabel \to \C$ satisfies the following relation
for integer $(i,j,k)$:
\begin{eqnarray*}
f(i+1/2,j+1/2,k)^2 & = & f(i,j,k) f(i+1,j+1,k) + f(i,j+1,k) f(i+1,j,k)
   \\
f(i+1/2,j,k+1/2)^2 & = & f(i,j,k) f(i+1,j,k+1) + f(i,j,k+1) f(i+1,j,k)
   \\
f(i,j+1/2,k+1/2)^2 & = & f(i,j,k) f(i,j+1,k+1) + f(i,j,k+1) f(i,j+1,k)
   \, .
\end{eqnarray*}
Then $f$ satisfies the Kashaev relation~\eqref{KashaevEQ} 
at integer points if $f$ satisfies the hexahedron 
relations~\eqref{hh1}--\eqref{hh4}, where as usual we interpret 
$h = f$, $h^{(x)} = f_{(0,1/2,1/2)}$, $h^{(y)} = f_{(1/2,0,1/2)}$ 
and $h^{(z)} = f_{(1/2,1/2,0)}$. 
$\noproof$
\end{proposition}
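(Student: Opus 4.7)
The plan is to recognize that the three quadratic relations hypothesized on $f$ are exactly the defining equations for the auxiliary variables $X,Y,Z$ appearing in the reduced Kashaev recurrence~\eqref{Krecreduced1}--\eqref{Krecreduced4}, and then to check that under the identification $h^{(x)}=X$, $h^{(y)}=Y$, $h^{(z)}=Z$ the four hexahedron equations~\eqref{hh1}--\eqref{hh4} collapse to the four reduced Kashaev equations. Once this is done, the earlier proposition relating~\eqref{Krecreduced1}--\eqref{Krecreduced4} to~\eqref{KashaevEQ} finishes the job.

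Concretely, I would first rewrite the three quadratic hypotheses in the $h$-notation of Section~1; they become $(h^{(x)})^2 = h\,h_{(23)} + h_{(2)}h_{(3)}$, $(h^{(y)})^2 = h\,h_{(13)} + h_{(1)}h_{(3)}$, and $(h^{(z)})^2 = h\,h_{(12)} + h_{(1)}h_{(2)}$. These match the definitions $X^2,Y^2,Z^2$ exactly, so (taking positive square roots) $h^{(x)}=X$, $h^{(y)}=Y$, $h^{(z)}=Z$. This is the key translation step.

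Next I would verify equations~\eqref{hh1}--\eqref{hh3} reduce to~\eqref{Krecreduced1}--\eqref{Krecreduced3}. For instance, multiplying~\eqref{Krecreduced1} by $h^{(x)}$ gives $h\,h^{(x)}h^{(x)}_{(1)} = h_{(1)}(h^{(x)})^2 + h^{(x)}h^{(y)}h^{(z)}$, and substituting $(h^{(x)})^2 = hh_{(23)}+h_{(2)}h_{(3)}$ on the right reproduces~\eqref{hh1} on the nose; the other two are identical modulo relabeling. So under our substitution,~\eqref{hh1}--\eqref{hh3} are equivalent to~\eqref{Krecreduced1}--\eqref{Krecreduced3}.

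Finally, I would check that~\eqref{hh4} reduces to~\eqref{Krecreduced4}. Here the useful observation is that each of the three factors in the last term of~\eqref{hh4} is of the form $h_{(i)}h_{(j)} + h\,h_{(ij)}$, which by the quadratic hypotheses equals $(h^{(\cdot)})^2$; hence that triple product equals $(h^{(x)}h^{(y)}h^{(z)})^2$. Combining with the first term $(h^{(x)}h^{(y)}h^{(z)})^2$, the right-hand side of~\eqref{hh4} becomes $h^{(x)}h^{(y)}h^{(z)}\bigl(2h^{(x)}h^{(y)}h^{(z)} + 2h_{(1)}h_{(2)}h_{(3)} + h\,h_{(1)}h_{(23)} + h\,h_{(2)}h_{(13)} + h\,h_{(3)}h_{(12)}\bigr)$. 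Dividing through by $h^{(x)}h^{(y)}h^{(z)}=XYZ$ yields exactly $h^2 h_{(123)}$ equals the right-hand side of~\eqref{Krecreduced4}. The anticipated obstacle is purely bookkeeping: keeping the six terms of~\eqref{hh4} straight while applying the quadratic substitution, though the factored last term makes the cancellation clean. With~\eqref{Krecreduced1}--\eqref{Krecreduced4} now verified, the earlier proposition that these jointly encode~\eqref{KashaevEQ} completes the proof.
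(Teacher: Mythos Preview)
Your argument is correct and is exactly the ``straightforward substitution'' the paper omits: identify $X,Y,Z$ with $h^{(x)},h^{(y)},h^{(z)}$ via the quadratic hypotheses, observe that \eqref{hh4} collapses to \eqref{Krecreduced4} because the triple product $(h_{(1)}h_{(2)}+hh_{(12)})(h_{(1)}h_{(3)}+hh_{(13)})(h_{(2)}h_{(3)}+hh_{(23)})$ equals $(XYZ)^2$, and then invoke the preceding proposition. One cosmetic remark: drop the phrase ``taking positive square roots'' since $f$ is $\C$-valued; simply declare $X:=h^{(x)}$, $Y:=h^{(y)}$, $Z:=h^{(z)}$, which is a legitimate choice of square root in the definition preceding \eqref{Krecreduced1}.
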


We obtain the Ising-Y-Delta recurrence as a corollary.  
\begin{corollary}
Suppose the initial conditions for $f$ at the vertices of 
a stepped surface $\partial U$ are real and positive.  
Define $f$ on the $z$-faces of the stepped surface by 
$$f(i+1/2,j+1/2,k) := \sqrt{f(i,j,k) f(i+1,j+1,k) + 
f(i,j+1,k) f(i+1,j,k)}$$ and similarly for the $x$- and 
$y$-faces, always taking the positive square root.  Then
the values produced by the hexahedron recurrence at all
points above the stepped surface, restricted to integer points,
yield the Ising-Y-Delta recurrence.
$\noproof$
\end{corollary}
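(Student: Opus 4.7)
The plan is to reduce the corollary to Proposition~\ref{Isingtodimer}, which asserts that any $f : \flabel \to \C$ satisfying both the hexahedron recurrence~\eqref{hh1}--\eqref{hh4} and the three bilinear square-root identities at every integer point automatically satisfies Kashaev's equation~\eqref{KashaevEQ} at every integer point. Thus it is enough to show that if $f$ has the prescribed square-root form on $\partial U$, then the extension given by the hexahedron recurrence continues to satisfy those square-root relations at every half-integer face above $\partial U$.

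We argue by induction on $U' \in \lat$ as $U'$ is built up from $U$ by adding single lattice cubes. The base case $U' = U$ is the hypothesis. For the inductive step, suppose $U''$ is obtained from $U'$ by filling in the cube $B = [i,i+1]\times[j,j+1]\times[k,k+1]$ at a local minimum $(i,j,k)$ of the height $i+j+k$ on $\partial U'$. The hexahedron recurrence produces four new values: the top vertex value $h_{(123)} := f(i+1,j+1,k+1)$ from~\eqref{hh4} and the three top-face values $h^{(x)}_{(1)}, h^{(y)}_{(2)}, h^{(z)}_{(3)}$ from~\eqref{hh1}--\eqref{hh3}, all positive since the right-hand sides are positive combinations of positive quantities divided by positive quantities. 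To preserve the invariant we must verify the three identities
$$
\bigl(h^{(x)}_{(1)}\bigr)^2 = h_{(1)} h_{(123)} + h_{(12)} h_{(13)},\quad
\bigl(h^{(y)}_{(2)}\bigr)^2 = h_{(2)} h_{(123)} + h_{(12)} h_{(23)},\quad
\bigl(h^{(z)}_{(3)}\bigr)^2 = h_{(3)} h_{(123)} + h_{(13)} h_{(23)}.
$$
By rotational symmetry it suffices to verify the first. Substituting the explicit rational expressions from~\eqref{hh1} and~\eqref{hh4} and eliminating the half-integer quantities using the inductive hypothesis $(h^{(x)})^2 = hh_{(23)} + h_{(2)}h_{(3)}$, $(h^{(y)})^2 = hh_{(13)} + h_{(1)}h_{(3)}$, and $(h^{(z)})^2 = hh_{(12)} + h_{(1)}h_{(2)}$, after clearing denominators the identity becomes an equality of polynomials in the seven integer-point values at the lower corners of $B$ together with the product $T := h^{(x)}h^{(y)}h^{(z)}$. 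Splitting into the pieces of even and odd degree in $T$, each becomes a polynomial identity in the seven integer values (using that $T^2$ is now a known polynomial) and can be checked by direct expansion. Positivity guarantees the correct square root is selected at each step.

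The main obstacle is the bookkeeping in the polynomial verification: the identity involves degrees up to six in seven variables and is tedious by hand, though routine by computer algebra. This is exactly the style of substitution the authors defer in the analogous cases of Lemma~\ref{sur} and Proposition~\ref{Isingtodimer}, so we would proceed in the same way. Once the invariant is established, Proposition~\ref{Isingtodimer} applies at every integer cube above $\partial U$, yielding Kashaev's equation~\eqref{KashaevEQ} --- i.e., the Ising-Y-Delta recurrence --- throughout.
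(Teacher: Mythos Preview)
Your proposal is correct and is precisely the argument the paper has in mind: the paper omits the proof entirely as a ``straightforward substitution,'' and your inductive step---showing that the three square-root identities propagate through one application of the hexahedron recurrence, then invoking Proposition~\ref{Isingtodimer}---is exactly that substitution, carried out carefully. (Your computation is in fact quite short once one notices, as in~\eqref{Krecreduced1}--\eqref{Krecreduced4}, that under the square-root hypotheses the hexahedron equations simplify to $h^{(x)}_{(1)} = (h_{(1)}h^{(x)} + h^{(y)}h^{(z)})/h$ and $h_{(123)} = (2h^{(x)}h^{(y)}h^{(z)} + S)/h^2$; the identity $(h^{(x)}_{(1)})^2 = h_{(1)}h_{(123)} + h_{(12)}h_{(13)}$ then follows in a few lines without needing to split into even and odd parts in $T$.)
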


\begin{unremark} 
Another way to say this is that the equation~\eqref{Krec} is a 
special case of~\eqref{superurbanrec}.
Setting $a_1=\sqrt{a_5a_6+a_0a_7}, a_2=\sqrt{a_4a_6+a_0a_8}$ and
$a_3=\sqrt{a_4a_5+a_0a_9}$, the recurrence~\eqref{superurbanrec}
becomes~\eqref{Krec} after relabelling variables to correspond to 
the same geometric positions: $f_0=a_0, f_1=a_4, f_4=a_7$ etc.
\end{unremark}

As in the dimer case let us consider initial conditions 
on the stepped surface defined by $0 \leq i+j+k \leq 2$ 
(take $U$ to be the lattices cubes lying entirely within 
$\{ (x,y,z) : x+y+z \leq 2 \}$).  Recall that $\Gamma (U)$
was the 4-6-12 graph.  It is easy to see that $\ising (U)$ 
is the regular triangulation.  Indeed, there is a vertex
of $\ising (U)$ at the center of each dodecagon of $\Gamma (U)$
and an edge of $\ising (U)$ connecting dodecagons that share
a quadrilateral neighbor.  The function $f$ now takes values 
on the vertices and faces of this triangulation and the 
$X,Y,Z$ values lie on the three directions of edges. 
Starting with initial data of the $f$ values on $\{0\le i+j+k\le 2\}$, 
one can determine $X_{i,j,k},Y_{i,j,k},Z_{i,j,k}$ on the set 
$i+j+k=0$. From~\eqref{Krecreduced1}--\eqref{Krecreduced3} 
one can then determine the values of $X,Y,Z$ on $i+j+k=1$; then 
from~\eqref{Krecreduced4} one can determine the values of $f$ on 
$i+j+k=3$, and so on; in this way one determines $X,Y,Z,f$ 
on all planes $i+j+k\ge 0$.
Theorem \ref{maincomb} has the following consequence for the $f_{i,j,k}$.

\begin{theorem} $f_{i,j,k}$ is a Laurent polynomial in the 
initial variables $\{f_{i,j,k}\}_{0\le i+j+k\le 2}$ and
$\{X_{i,j,k},Y_{i,j,k},Z_{i,j,k}\}_{i+j+k=0}$.  The $X,Y,Z$ 
variables only appear with power $1$. 
\end{theorem}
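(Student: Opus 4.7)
The first claim follows immediately from the Laurent property of the hexahedron recurrence combined with Proposition~\ref{Isingtodimer}. Under the substitutions $h = f$, $h^{(x)} = X$, $h^{(y)} = Y$, $h^{(z)} = Z$, the Ising initial data $\{f_{i,j,k}\}_{0 \le i+j+k \le 2} \cup \{X_{i,j,k},Y_{i,j,k},Z_{i,j,k}\}_{i+j+k=0}$ coincides exactly with the hexahedron initial data on a 4-6-12 stepped surface; applying the Laurent property (the corollary following Proposition~\ref{pr:urban}, together with Proposition~\ref{pr:super}) then expresses $f_{i,j,k}$ as a Laurent polynomial in this data.

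For the power-$1$ assertion, my plan is to invoke the combinatorial formula of Theorem~\ref{maincomb}, which writes $f_{i,j,k}$ as a positive sum
$$
f_{i,j,k} = \sum_{m\,\preceq\,U} 2^{c(m)} \prod_{v \in \init(U)} A(v)^{c(m;v)}, \qquad c(m;v) = L(v) - 2 - d(m;v),
$$
over taut double-dimer configurations $m$ on $\Gamma(U)$. The initial $X, Y, Z$ variables (those with $i+j+k = 0$) correspond to the half-integer labels of $\Gamma(U)$, which are precisely the centers of the quadrilateral faces of the associated graph ($L = 4$); so each such variable appears with exponent $c(m;v) = 2 - d(m;v)$. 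The power-$1$ claim therefore reduces to the local combinatorial statement that $d(m;v) \in \{1, 2\}$ on every quadrilateral face of $\Gamma(U)$ in every taut configuration.

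To establish this local statement, I would analyze the possible double-dimer multiplicity patterns on a single quadrilateral face. The four vertices of such a face each have degree three in $\Gamma(U)$ (two internal edges of the face together with one external leg), so writing $a_i$ and $b_i$ for the multiplicities of the internal and external edges respectively, the double-dimer condition forces the vertex constraint $a_{i-1} + a_i + b_i = 2$. The extremal values $d := \sum_i a_i \in \{0, 3, 4\}$, which would violate the power-$1$ claim, correspond to local patterns in which either all four external legs are doubled or internal edges are heavily used. I would rule these out by tracing the effect of any such local deviation from the reference configuration $m_0$ on the loops and bi-infinite paths of the double-dimer cover and showing that the resulting boundary connectivity at infinity no longer agrees with that of $m_0$, contradicting tautness. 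The main obstacle is exactly this last step: tautness is a global condition, so excluding the forbidden local patterns requires tracking how the loops and paths generated by each such pattern propagate from the face out to infinity. Once these extremal patterns are ruled out, only $d \in \{1, 2\}$ survives, giving $c(m;v) \in \{0, 1\}$ and hence the power-$1$ conclusion.
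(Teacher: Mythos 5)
Your first paragraph is fine: the Laurent property of the hexahedron recurrence, transported through Proposition~\ref{Isingtodimer}, immediately gives that $f_{i,j,k}$ is a Laurent polynomial in the specialized initial data, and this matches the paper's setup.

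The second part has a genuine gap: the local statement you propose to prove --- that $d(m;v)\in\{1,2\}$ on every quadrilateral face in every taut configuration --- is false, so the plan of ruling out $d\in\{0,3,4\}$ by a tautness/connectivity argument cannot succeed. The paper itself exhibits counterexamples: the taut configuration of Figure~\ref{Gamma1sample} has weight $a_4^2a_5a_6a_7/(a_0a_1a_2a_3)$, in which the quadrilateral variables $a_1,a_2,a_3$ each appear with exponent $-1$ (three dimers on those faces), and the urban-renewal consistency check in the proof of Theorem~\ref{maincomb} explicitly enumerates taut local patterns with three and four dimers along a quad face (the terms $a_0^{-1}$ and $a_0^{-2}$ in Figure~\ref{urbancheck}). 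In other words, in the \emph{unspecialized} hexahedron expansion the quad variables genuinely occur with exponents $-2,-1,0,1,2$, and no individual-configuration argument can change that.

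The actual mechanism is a regrouping of monomials rather than a local exclusion. Monomials in which a quad variable $a_0$ appears with exponent $-1$ come in pairs $M,M'$ with $M/M'=a_1a_3/(a_2a_4)$, and $M+M'=M^*(a_1a_3+a_2a_4)/a_0$, which under the specialization $a_0^2=a_1a_3+a_2a_4$ (i.e.\ $X^2=ff_{(23)}+f_{(2)}f_{(3)}$, etc.) becomes $M^*a_0$, exponent $+1$. Monomials with exponent $-2$ come in triples with ratios $[2:a_1a_3/(a_2a_4):a_2a_4/(a_1a_3)]$ summing to $M^*a_0^2/(a_1a_2a_3a_4)$, which the same substitution turns into terms of exponent $0$; and exponent $+2$ terms are likewise converted to exponent $0$ by substituting $a_0^2$ directly. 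Since no two quad faces are adjacent, these groupings can be performed independently for each quad variable. Your proposal is missing exactly this cancellation-across-monomials step, which is the entire content of the power-$1$ assertion.
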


\begin{proof}
Let $\{ a_{ijk} \}$ denote the initial conditions for the hexahedron recurrence
(before specialization), thus 
$f_{ijk} = a_{ijk}$ for $0 \leq i+j+k \leq 2$.
Take a monomial $M$ in the Laurent expansion of $a_{i,j,k}$. Let
$a_0$ be a quadrilateral variable; it occurs in $M$ with exponent in $[-2,2].$

There are several cases to consider.
If $a_0$ occurs with power $2$, we can replace it with $a_1a_3+a_2a_4$
where $a_1,a_2,a_3,a_4$ are the 
four faces in cyclic order adjacent to $a_0$. Then the monomial $M$ becomes a sum of two monomials
not involving $a_0$. 

If $a_0$ appears with degree $-1$, there is another monomial 
$M'$ of $a_{i,j,k}$ which pairs with it, in the sense that 
$M/M'=\frac{a_1a_3}{a_2a_4}$ (these two monomials 
correspond to the two possible configurations of double-dimers 
which have three edges lying along the quad face at $a_0$, 
and use the same two edges joining the quad face to adjacent faces: as in line $2$ or Figure \ref{urbancheck}).  
The sum of these two monomials is, up to monomial factors $M^*$ 
not involving $a_0$, $M+M'=M^*(a_1a_3+a_2a_4)/a_0 = M^*a_0.$ 
Upon replacing $a_0$ by $\sqrt{a_1a_3+a_2a_4}$
this now becomes a monomial in which $a_0=\sqrt{a_1a_3+a_2a_4}$ occurs with power $+1$. 

If $a_0$ which appears in $M$ with degree $-2$, there are two other monomials 
$M',M''$, which in the appropriate order have the ratios 
$\displaystyle{[2:\frac{a_1a_3}{a_2a_4}:\frac{a_2a_4}{a_1a_3}]}$ 
(these correspond to the three possible configurations of double-dimers 
which have four edges lying along the quad face at $a_0$, as in the first line of  Figure \ref{urbancheck}).
The sum of these is 
$$M+M'+M''=M^*\frac{\displaystyle{2+\frac{a_1a_3}{a_2a_4}
   + \frac{a_2a_4}{a_1a_3}}}{a_0^2} 
   = M^*\frac{a_0^2}{a_1a_2a_3a_4}$$
and upon the substitution $a_0^2=a_1a_3+a_2a_4$ this is a sum of 
two monomials not involving $a_0$. 

Once all these substitutions (and groupings) are done, $a_0$ only appears in the numerator and has degree $1$ or $0$.
We can similarly regroup terms for all the other quad variables: since no two quad faces are adjacent the groupings
``commute" in the sense that they can be done in any order.

After regrouping all quad variables, we see that $f_{i,j,k}$ (the specialization of $a_{i,j,k}$) 
is a Laurent polynomial, with positive coefficients, in the initial
variables $f,X,Y,Z$, and with the $X,Y,Z$ (the quad variables) appearing in the 
numerator only and of degree $0$ or $1$. 
\end{proof}

\subsection{Open question}

What are the natural combinatorial structures counted by the monomials in $f_{i,j,k}$?
Using Proposition~\ref{Isingtodimer} it appears possible 
(although we have not succeeded) 
to get an interpretation of the monomials in the expansion of 
$f_{i,j,k}$ in terms of collections of double-dimer covers of 
$\Gamma_{i+j+k}$. 

Note however that when we apply the substitutions of the proof of 
that proposition, it is possible that the new monomials are not 
distinct, and combine to make monomials of $f_{i,j,k}$ in other ways. 
This is already true of $a_{1,1,1},$ in which $9$ monomials
collapse into a single monomial of $f_{1,1,1}$.

\subsection{Solving the recurrence for spatially isotropic initial conditions}

There is a three-parameter family of initial conditions
for which $f_{(i,j,k)} = f_{i+j+k}$ only depends on $i+j+k$: 
choose values $f_0 = a, f_1 = b$ and $f_2 = c$ arbitrarily
and set the initial $X$, $Y$ and $Z$ variables all equal to 
$\sqrt{ac + b^2}$.  The recurrence~\eqref{KashaevEQ} can be 
solved for $f_{n+3}$ as a function of $f_n,f_{n+1},f_{n+2}$,
giving
$$f_{n+3}=\frac{2 f_{n+1}^3+3 f_n f_{n+2} f_{n+1}+2 
   \sqrt{f_{n+1}^6+3 f_n f_{n+2} f_{n+1}^4+3 f_n^2 f_{n+2}^2 f_{n+1}^2
   + f_n^3 f_{n+2}^3}}{f_n^2} \, .$$
This can be solved explicitly for $f_n$ as a function of $a, b$ and $c$.
Letting $R := ac/b^2$ and
$$S=\frac{2 (R+1)^{3/2}+3 R+2}{R^2}$$
gives
\begin{equation}
\label{fn}
f_n = a^{1-n} b^n R^{\left\lfloor \frac{n^2}{4}\right\rfloor } 
   S^{\left\lfloor \frac{1}{4} (n-1)^2\right\rfloor } \, ,
\end{equation}
or in terms of alternating parity,
\begin{eqnarray}
f_{2n} & = & a^{1-2n}b^{2n}R^{n^2}S^{n^2-n}\label{f2n} \\[1ex]
   f_{2n+1} & = & a^{-2n}b^{2n+1}R^{n^2+n}S^{n^2} \, . \label{f2n+1}
\end{eqnarray}
For example when $a=1,b=\sqrt{3},c=9$ we have $R=S=3$ and  
\begin{equation}
\label{equil}
f_{i,j,k}=f_{i+j+k}=3^{(i+j+k)^2/2}.
\end{equation}
One can likewise compute limit shapes for $f_{i,j,k}$ as we did for 
$a_{i,j,k}$ although without a probabilistic interpretation their 
meaning is dubious.

\end{document}